\documentclass[runningheads]{llncs}

\usepackage[T1]{fontenc}

\usepackage{graphicx}

\usepackage{color}
\usepackage{hyperref}

\usepackage{tikz}
\usetikzlibrary{patterns}
\usetikzlibrary{arrows.meta,decorations.pathreplacing,patterns,backgrounds,positioning}
\usepackage{booktabs}
\usepackage{amssymb}

\usepackage[compatibility=false]{caption}
\usepackage{subcaption}
\providecommand{\Val}{\mathsf{Val}}
\providecommand{\status}{\mathsf{status}}
\providecommand{\PENDING}{\textsc{pending}}
\providecommand{\VALID}{\textsc{valid}}
\providecommand{\INVALID}{\textsc{invalid}}
\usepackage{amsmath}

\begin{document}
\title{Security Limits of Mining Before Validation in Nakamoto Consensus}
\titlerunning{Security Limits of Mining Before Validation}

\author{Yifan Zhou\inst{1,2}\fnmsep\thanks{This work was done while Yifan Zhou was visiting the University of Warwick.}
\and Jiang Xiao\inst{1} \and Kaihua Qin\inst{2}}
\authorrunning{Y. Zhou et al.}
\institute{Huazhong University of Science and Technology
\and
University of Warwick}

\maketitle

\begin{abstract}
Mining before validation allows miners to extend a newly received block before completing its validity checks, giving them a head start in the race for the next block reward.
This head start, however, comes with a security risk: rejecting one invalid block also discards the honest work built on it, an effect missed when validation is treated as instantaneous.
We quantify this risk in a model of Nakamoto consensus with bounded network delay and a validation-time bound independent of processing load.
We establish an explicit threshold on adversarial mining power below which honest miners' fully validated chains continue to grow and agree on a stable history with high probability over any fixed observation period.
We also construct an attack that repeatedly draws honest mining onto invalid branches.
When the adversary produces more than one block on average during the allowed validation time, the attack can eventually remove a target block at any fixed initial confirmation depth.
Combined with ordinary private mining, this attack yields a matching asymptotic bound in the fully decentralized regime, where each honest miner has negligible mining power.
The results show how validation latency limits the security of mining before validation, beyond the constraint imposed by network delay.

\end{abstract}
\section{Introduction}

Nakamoto consensus, introduced in Bitcoin,\footnote{\url{https://bitcoin.org/bitcoin.pdf}} enables Byzantine fault-tolerant \emph{state machine replication} (SMR)~\cite{lamport2019time,schneider1990implementing,castro1999practical} in a permissionless setting.
It provides probabilistically secure atomic broadcast~\cite{cachin2001secure,hadzilacos1993fault} among untrusted miners that may join or leave dynamically~\cite{lewis2023permissionless}.
In essence, Nakamoto consensus combines proof-of-work (PoW) mining for Sybil resistance with a longest-chain rule for determining the ledger history.
To understand the security of Nakamoto consensus, two core security questions have been extensively studied: how much adversarial mining power it can tolerate and whether the private-chain attack, in which the adversary withholds a competing chain, is optimal.

Prior analyses~\cite{garay2015bitcoin,pass2017analysis,ren2019analysis,gavzi2020tight,dembo2020everything} address these questions by comparing the progress of honest miners' chains (\emph{honest progress}) with that of adversarial competing branches (\emph{adversarial progress}).
They identify conditions on the honest mining share under which honest progress outpaces adversarial progress in expectation, then use concentration bounds to obtain high-probability security guarantees.
Network delay complicates this comparison: a newly mined block takes time to reach other miners~\cite{decker2013information}, who may meanwhile produce competing blocks.
Only one branch can remain on the eventual canonical chain, so not all honest mining work contributes to chain progress.
Accounting for this loss yields resilience lower bounds~\cite{garay2015bitcoin,pass2017analysis,garay2024bitcoin,kiffer2018better,ren2019analysis}, including tight asymptotic bounds that match the private-chain attack threshold in their respective models~\cite{dembo2020everything,gavzi2020tight}.

These analyses largely treat block validation, the process of checking whether a block and its transactions satisfy the protocol's rules, as instantaneous.
In Bitcoin, however, validation can take substantial time.\footnote{\url{https://bip54.org}}
Miners receive rewards for producing blocks but no separate reward for validating others' blocks, an incentive mismatch underlying the verifier's dilemma~\cite{luu2015demystifying}.
To begin extending a newly received block sooner, some miners adopt \emph{simplified payment verification (SPV) mining}, or mining before validation.
We study this behavior when a block's payload has arrived but its validation is still pending.
Extending such a block may advance the honest chain if the block and its ancestry are valid, but the resulting work must be discarded otherwise.
Mining on an invalid parent caused Bitcoin's July 2015 temporary forks.\footnote{\url{https://bitcoin.org/en/alert/2015-07-04-spv-mining}}
Thus, an adversary can erase multiple honest descendants with one invalid block, without building an equally long valid fork.

\noindent\textbf{Our contributions.} We establish security properties in our model of Nakamoto consensus, where all honest miners perform SPV mining.
These comprise the standard chain growth and common prefix properties~\cite{garay2015bitcoin}, together with block liveness and mining-time persistence adapted from prior analyses~\cite{dembo2020everything}.
The guarantees concern each honest miner's longest fully validated chain and hold with high probability over any fixed finite observation horizon against every admissible adversarial strategy from genesis whenever $\lambda_a<\lambda_h/(1+\lambda_h\Delta)$ and $\lambda_a\tau<1$.
Here $\lambda_h$ and $\lambda_a$ are the honest and adversarial mining rates, $\Delta$ bounds network delay, and $\tau$ is a separate, load-independent validation bound.
For fixed model parameters and failure probability, sufficient confirmation windows and common-prefix depth grow logarithmically with the horizon.
We also derive a sufficient resilience bound for mixed honest mining policies, parameterized by the SPV share of honest mining power (Appendix~\ref{app:mixed-honest}).

Our proof distinguishes honest work on valid ancestry from work lost to invalid ancestry.
An invalid adversarial block, called a \emph{carrier}, can divert honest mining before its rejection.
We define effective honest mining time to account for mining on valid ancestry and bound usable unpublished adversarial stock uniformly over the observation horizon.
The stock bound accounts for earlier blocks that remain usable when an interval begins.
Together, these bounds compare remaining honest progress with valid competing work under arbitrary adaptive allocations of adversarial power between carriers and competing branches.
We adapt the permanent-block approach of Dembo et al.~\cite{dembo2020everything} to establish block liveness, derive mining-time persistence, and then obtain common prefix.

We also construct a carrier attack that eventually removes a target block from a shared valid chain under an admissible delivery and validation schedule.
It succeeds with probability one when $\lambda_a\tau>1$, for any prescribed finite initial confirmation depth.
Together with ordinary private mining~\cite{dembo2020everything}, this attack gives an asymptotic upper boundary matching our sufficient resilience bound in the fully decentralized regime, where individual honest mining power is negligible.

\section{Model and Security Definitions}\label{sec: model}

In this work, we consider a permissionless blockchain running Nakamoto consensus~\cite{lewis2023permissionless}.
We adopt the continuous-time model of Nakamoto consensus used in prior analyses~\cite{ren2019analysis,dembo2020everything}.
To model SPV mining, we extend this framework by bounding block-validation latency separately from network delay and allowing honest miners to extend blocks before validation completes.

\noindent\textbf{Blocks and chains.} All blocks mined by time $t$ form a tree $\mathcal{T}(t)$ rooted at the genesis block $b_0$.
Each non-genesis block cryptographically commits to its unique, previously mined parent.
All parties initially know $b_0$, and the adversary initially has no other blocks.
Let $\mathcal{T}_i(t)$ denote the subtree known to party $i$ at time $t$. We consider a block known to a party only once the party has received its header, payload, and all ancestor blocks.
The height $h(B)$ is the number of parent edges from $b_0$ to $B$.
A chain $\mathcal{C}$ is a path starting at $b_0$, and $|\mathcal{C}|$ counts its non-genesis blocks.
The last block of a chain is called its \emph{tip}.
Let $\mathcal{C}^{\lceil k}$ denote $\mathcal{C}$ with its last $k$ blocks pruned, or just genesis if $|\mathcal{C}|\leq k$.
The relation $\mathcal{C}_1\preceq \mathcal{C}_2$ means that $\mathcal{C}_1$ is a prefix of $\mathcal{C}_2$.
For blocks, $A\preceq B$ means that $A$ is an ancestor of $B$, including $A=B$.

\noindent\textbf{Block mining.} Miners are either honest or adversarial.
Honest miners follow the protocol, while adversarial miners act under the control of a single adversary.
We model mining successes as independent Poisson processes.
The total rate is $\lambda>0$, the adversarial share is $\beta\in[0,1)$, and the honest and adversarial rates are
$\lambda_h=(1-\beta)\lambda$ and $\lambda_a=\beta\lambda$.
For each execution, honest miners have fixed power fractions $w_i$ with $\sum_i w_i=1$ and individual rates $w_i\lambda_h$.
We allow any finite number of honest miners with any such power distribution, while keeping the total mining rate fixed throughout each execution.
Each mining success produces one new block whose parent was selected before that success.

\noindent\textbf{Network model.} An honest miner immediately broadcasts its new block and its ancestry.
If a block first becomes known to an honest miner at time $t$, whether by creation or receipt, every honest miner receives it by $t+\Delta$.
Receipt times refer to availability with complete ancestry, and miners do not extend orphan blocks.
Neither the mining rule nor the fork-choice rule needs to know $\Delta$.

\noindent\textbf{Adversary model.} The adversary controls miners with aggregate mining rate $\lambda_a=\beta\lambda$ and coordinates them without communication delay.
A block produced by these miners is called an \emph{adversarial block}, whether valid or invalid.
The adversary may mine on any block it knows and withhold its blocks until it chooses to release them to an honest miner.
Releasing a block supplies its previously unavailable ancestry.
The adversary schedules delivery separately for each recipient within the network delay bound and resolves ties between longest eligible chains.
All mining choices, releases, delivery schedules, and validation schedules are causal: they may depend on the execution history and private randomness, but not on future PoW successes.

\noindent\textbf{Validity and validation time.} Every modeled block passes the preliminary header and PoW checks.
Its residual validity is a fixed bit $\Val(B)\in\{0,1\}$, assigned when the block is created, with $\Val(b_0)=1$.
A chain is objectively valid exactly when every block on it has validity bit $1$.
Only adversarially produced blocks may have validity bit $0$.
An honestly produced descendant of an invalid block can therefore have validity bit $1$ while belonging to an invalid chain.

Let $r_i(B)$ be the time miner $i$ receives $B$, and let $\nu_i(B)$ be the time it learns the truthful value of $\Val(B)$.
For every received block, we assume $r_i(B)\leq\nu_i(B)\leq r_i(B)+\tau$.
Set $r_i(b_0)=\nu_i(b_0)=0$. Before receipt, these times are treated as infinite in local eligibility tests.
Validation starts on receipt and completes within $\tau$ even when other validations, including those of ancestors, are pending.
This is an abstract, load-independent validation oracle. It excludes validation queues and unavailable payloads.
The adversary may choose any causal completion schedule within the bound, separately for each miner, but cannot change a block's validity bit.
The model permits worst-case delays. It does not assert that every such schedule can be realized by a concrete payload or implementation.

\noindent\textbf{Execution status and eligible sets.} For a known block $B$, define
\[
\status_i(B,t)=
\begin{cases}
\INVALID,
 &\text{if some $A\preceq B$ satisfies $\nu_i(A)\leq t$}\\
 &\qquad\text{and $\Val(A)=0$,}\\[1mm]
\VALID,
 &\text{if every $A\preceq B$ satisfies $\nu_i(A)\leq t$}\\
 &\qquad\text{and $\Val(A)=1$,}\\[1mm]
\PENDING, &\text{otherwise.}
\end{cases}
\]
A chain remains pending until all its blocks have been validated or one has been found invalid.
Both $\VALID$ and $\INVALID$ are final states.
The optimistic and strict eligible sets are, respectively,
\begin{align*}
E_i^{\mathrm{opt}}(t)&=\{B\in\mathcal{T}_i(t):\status_i(B,t)\neq\INVALID\},\\
E_i^{\mathrm{str}}(t)&=\{B\in\mathcal{T}_i(t):\status_i(B,t)=\VALID\}.
\end{align*}

\noindent\textbf{Optimistic and strict chains.} Let $\mathcal{C}_i^{\mathrm{opt}}(t)$ and $\mathcal{C}_i^{\mathrm{str}}(t)$ be longest chains in the respective eligible sets.
All honest miners in the main analysis are optimistic: they extend $\mathcal{C}_i^{\mathrm{opt}}(t)$ while validation is pending.
Appendix~\ref{app:mixed-honest} extends the security guarantees to a fixed fraction of honest mining power following this policy, with the remainder extending fully validated chains.
When a block is rejected, the miner removes its entire descendant subtree from the eligible set and selects a longest remaining chain.
We evaluate security on $\mathcal{C}_i^{\mathrm{str}}(t)$, the longest fully validated chain known to miner $i$.
This chain is selected separately from the optimistic chain and need not be its validated prefix.
Strict-chain height is nondecreasing, although the selected chain can change when an equally long or longer valid branch arrives.

\noindent\textbf{Security properties.} All properties below concern strict chains. We fix a finite observation horizon $T$.
The true creation time of a block $B$ is denoted by $m(B)$.
Mining-time confirmation is an analytical definition, since a block header need not reveal its true creation time.

\begin{definition}[Chain growth]\label{def:chain-growth}
For a minimum growth rate $\gamma>0$, measured in blocks per unit time, and a duration $s>0$, an execution satisfies $(\gamma,s)$-chain growth through $T$ if, for every honest miner $i$ and every $t$ with $0\leq t\leq T-s$,
\[
 |\mathcal{C}_i^{\mathrm{str}}(t+s)|-|\mathcal{C}_i^{\mathrm{str}}(t)|\geq\gamma s.
\]
\end{definition}

\begin{definition}[Common prefix]\label{def:common-prefix}
For $k\in\mathbb{N}$, an execution satisfies $k$-common prefix through $T$ if
\[
 (\mathcal{C}_i^{\mathrm{str}}(t_1))^{\lceil k}\preceq \mathcal{C}_j^{\mathrm{str}}(t_2)
\]
for all honest miners $i,j$ and $0\leq t_1\leq t_2\leq T$.
\end{definition}

We adapt the persistence notion of Dembo et al.~\cite{dembo2020everything} to strict chains.
For persistence and block liveness, $T$ limits the confirmation times and block-creation windows covered simultaneously, while permanence extends to all future times.

\begin{definition}[Mining-time persistence]\label{def:persistence}
For a confirmation duration $\sigma>0$, a non-genesis block $B$ in a strict chain with tip $D$ is $\sigma$-confirmed if $m(D)-m(B)\geq\sigma$.
Persistence with observation horizon $T$ means that the prefix ending at any $\sigma$-confirmed block observed in an honest strict chain at time $t\leq T$ remains a prefix of every honest strict chain at every time $t'\geq t$.
\end{definition}

\begin{definition}[Block liveness]\label{def:liveness}
An execution satisfies block liveness with window $\sigma>0$ and observation horizon $T$ if every interval $(s,s+\sigma]\subseteq[0,T]$ contains the creation of an honest block that belongs to every honest strict chain at every time $t'\geq s+\sigma$.
\end{definition}


\section{Honest Progress and Revocation under SPV Mining}\label{sec: warmup}

The security properties defined in Section~\ref{sec: model} concern fully validated chains, but SPV miners may extend blocks whose validation is pending.
An honest block mined on such a branch is discarded if an ancestor is later found invalid.

\subsection{Honest progress}
To understand how SPV mining changes the contribution of honest work, we first consider mining on valid ancestry.
Network delay allows honest miners to work on different local views, so several honest successes may add only one level to a chain.
We use \emph{honest progress} to measure chain advancement after discounting this latency-induced concurrency, with each unit representing one additional level.
Different analyses quantify this contribution differently. Backbone counts isolated blocks~\cite{garay2024bitcoin}, whereas the block-race analysis accounts more tightly for concurrent honest work~\cite{dembo2020everything}.
Generally, a new unit of honest progress can be guaranteed by allowing $\Delta$ time for the preceding level to propagate and then waiting for the next honest success, which takes $1/\lambda_h$ time on average.
Figure~\ref{fig:warmup-counting} illustrates how concurrent blocks, even with different parents, are grouped into one counted unit per propagation window.
The block-race framework~\cite{dembo2020everything} obtains the resulting asymptotic rate $r_\star=1/(\Delta+1/\lambda_h)=\lambda_h/(1+\lambda_h\Delta)$, which is tight in the limit of negligible individual honest mining power.

\begin{figure}[t]
  \centering
  \begin{subfigure}[t]{0.43\linewidth}
    \centering
    \begin{tikzpicture}[
      x=1cm,y=1cm,font=\scriptsize,
      block/.style={draw,fill=white,minimum size=0.20cm,inner sep=0pt},
      progress/.style={draw,fill=white,text width=1.04cm,
        minimum height=0.56cm,align=center,inner sep=2pt},
      lab/.style={inner sep=1pt,align=center},
      link/.style={-{Latex[length=1.1mm]},
        shorten >=0.5pt,shorten <=0.5pt,line width=0.4pt},
      arrow/.style={-{Latex[length=1.3mm]},line width=0.4pt}
    ]
      \path[use as bounding box] (0,0) rectangle (5.16,3.35);

      \foreach \a/\b in {0.80/2.25,3.30/4.75} {
        \fill[black!7] (\a,1.60) rectangle (\b,3.00);
        \draw[decorate,decoration={brace,amplitude=2.5pt}]
          (\a,3.03) -- (\b,3.03)
          node[lab,midway,above=3pt] {$\Delta$};
      }

      \node[block] (root) at (0.15,2.25) {};
      \node[block] (a1) at (0.80,2.72) {};
      \node[block] (a2) at (1.38,2.28) {};
      \node[block] (a3) at (1.95,1.87) {};
      \foreach \n in {a1,a2,a3}
        \draw[link] (\n.west) -- (root.east);

      \node[block] (b1) at (3.30,2.72) {};
      \node[block] (b2) at (3.88,2.28) {};
      \node[block] (b3) at (4.45,1.87) {};
      \foreach \child/\parent in {b1/a3,b2/a1,b3/a2}
        \draw[link] (\child.west) -- (\parent.east);

      \draw[arrow] (0.12,1.60) -- (5.08,1.60);
      \foreach \x/\s in {
        0.80/$t_1$,2.25/$t_1+\Delta$,
        3.30/$t_2$,4.75/$t_2+\Delta$}
        \draw (\x,1.65) -- (\x,1.55)
          node[lab,below=2pt] {\s};

      \node[progress] (p1) at (1.52,0.49) {honest\\progress};
      \node[progress] (p2) at (4.02,0.49) {honest\\progress};
      \draw[arrow,densely dotted] (1.52,1.32) -- (p1.north);
      \draw[arrow,densely dotted] (4.02,1.32) -- (p2.north);
    \end{tikzpicture}
    \caption{Counting progress}
    \label{fig:warmup-counting}
  \end{subfigure}\hfill
  \begin{subfigure}[t]{0.27\linewidth}
    \centering
    \begin{tikzpicture}[
      x=1cm,y=1cm,font=\scriptsize,
      block/.style={draw,fill=white,minimum size=0.28cm,inner sep=0pt},
      progress/.style={draw,fill=white,text width=1.04cm,
        minimum height=0.56cm,align=center,inner sep=2pt},
      lab/.style={inner sep=1pt,align=center},
      arrow/.style={-{Latex[length=1.3mm]},line width=0.4pt}
    ]
      \path[use as bounding box] (0,0) rectangle (3.18,3.35);

      \node[block] (parent) at (0.30,2.48) {};
      \node[progress] (p) at (1.65,2.48) {honest\\progress};
      \draw[arrow,shorten >=1pt,shorten <=1pt]
        (p.west) -- (parent.east);
      \node[lab] at (0.30,3.03) {$\ell$};
      \node[lab] at (1.65,3.03) {$\ell+1$};

      \node[lab] at (0.43,1.20) {all:\\$\geq\ell$};
      \node[lab] (result) at (2.66,1.20) {all:\\$\geq\ell+1$};
      \draw[arrow,densely dotted] (p.south east) -- (result.north);

      \draw[arrow] (0.12,0.56) -- (3.07,0.56);
      \foreach \x/\s in {1.65/$t$,2.66/$t+\Delta$}
        \draw (\x,0.62) -- (\x,0.50)
          node[lab,below=2pt] {\s};
    \end{tikzpicture}
    \caption{Chain growth}
    \label{fig:warmup-growth}
  \end{subfigure}\hfill
  \begin{subfigure}[t]{0.27\linewidth}
    \centering
    \begin{tikzpicture}[
      x=1cm,y=1cm,font=\scriptsize,
      progress/.style={draw,fill=white,text width=1.04cm,
        minimum height=0.56cm,align=center,inner sep=2pt},
      adv/.style={draw,fill=black,text=white,
        minimum size=0.36cm,inner sep=1pt},
      root/.style={draw,circle,fill=white,
        minimum size=0.09cm,inner sep=0pt},
      link/.style={-{Latex[length=1.3mm]},
        shorten >=1pt,shorten <=1pt},
      lab/.style={inner sep=1pt}
    ]
      \path[use as bounding box] (0,0) rectangle (3.18,3.35);

      \node[lab] at (1.94,2.94) {honest branch};
      \node[root] (root) at (0.28,1.61) {};
      \node[progress] (h) at (1.94,2.32) {honest\\progress};
      \node[adv] (a) at (1.94,0.89) {$a_1$};

      \draw[link] (h.west) -- (root);
      \draw[link] (a.west) -- (root);
      \draw[densely dotted] (h.south) -- (a.north);
      \node[lab] at (1.94,0.36) {private branch};
    \end{tikzpicture}
    \caption{Competition}
    \label{fig:warmup-prefix}
  \end{subfigure}

  \caption{Honest progress.
  (a)~Concurrent forks contribute one counted unit per propagation window.
  (b)~A unit raises the common height bound.
  (c)~A private branch needs one adversarial block per unit.}
  \label{fig:warmup}
\end{figure}

Next, we illustrate how the abstraction of honest progress supports the analysis of chain growth and common prefix. \emph{Chain growth} requires every honest miner's chain to grow at a positive rate over sufficiently long intervals.
The one-level increments in Figure~\ref{fig:warmup-growth} accumulate at rate $r_\star$, yielding any lower chain-growth rate with high probability.
The $k$-\emph{common-prefix} property requires the prefix remaining after pruning the last $k$ blocks to appear in every honest chain at the same or a later time.
A violation therefore requires a competing branch to replace a suffix longer than $k$ blocks.
Producing this suffix requires more than $k$ mining successes, so a large $k$ entails a long comparison interval with high probability.
The question is therefore whether conflicting branches can remain competitive over such intervals.
Figure~\ref{fig:warmup-prefix} illustrates this competition for a private fork, where one adversarial block matches one level of honest progress.
When all blocks are valid, $r_\star>\lambda_a$ gives honest progress a rate advantage over adversarial production.
Controlling competing branches over long intervals turns this advantage into a bound on the probability of disagreement at depth $k$.

\begin{figure}[t]
  \centering
  \begin{subfigure}[t]{0.36\linewidth}
    \centering
    \begin{tikzpicture}[
      x=1cm,y=1cm,font=\scriptsize,
      progress/.style={draw,fill=white,text width=1.04cm,
        minimum height=0.56cm,align=center,inner sep=1pt},
      valid/.style={draw,fill=white,minimum size=0.34cm,inner sep=0pt},
      adv/.style={draw,fill=black,text=white,
        minimum size=0.36cm,inner sep=1pt},
      root/.style={draw,circle,fill=white,
        minimum size=0.09cm,inner sep=0pt},
      link/.style={-{Latex[length=1.2mm]},
        shorten >=0.5pt,shorten <=0.5pt},
      lab/.style={inner sep=1pt,align=center}
    ]
      \path[use as bounding box] (0,0) rectangle (4.32,5.00);

      \node[lab] at (2.63,4.78) {optimistic chain};
      \node[root] (r0) at (0.13,3.58) {};
      \node[adv] (b0) at (1.00,4.08) {$B$};
      \node[progress] (h10) at (2.05,4.08) {honest\\progress};
      \node[progress] (h20) at (3.47,4.08) {honest\\progress};
      \draw[link] (b0) -- (r0);
      \draw[link] (h10.west) -- (b0.east);
      \draw[link] (h20.west) -- (h10.east);
      \node[lab] at (2.52,3.53) {pending ancestry};
      \node[valid] (s0) at (1.00,3.03) {$\checkmark$};
      \draw[link] (s0) -- (r0);
      \node[lab,anchor=west] at (1.38,3.03) {strict chain};

      \draw[link] (0.35,2.70) -- (0.35,2.08);
      \node[lab,anchor=west] at (0.65,2.40) {reject $B$};

      \node[root] (r1) at (0.13,1.18) {};
      \begin{scope}[opacity=0.38]
        \node[adv] (b1) at (1.00,1.70) {$B$};
        \node[progress] (h11) at (2.05,1.70) {honest\\progress};
        \node[progress] (h21) at (3.47,1.70) {honest\\progress};
        \draw[link] (b1) -- (r1);
        \draw[link] (h11.west) -- (b1.east);
        \draw[link] (h21.west) -- (h11.east);
      \end{scope}
      \foreach \n in {h11,h21}
        \draw[black!55,line width=0.35pt]
          (\n.north west) -- (\n.south east)
          (\n.north east) -- (\n.south west);
      \node[lab] at (2.58,1.13) {revoked};
      \node[valid] (s1) at (1.00,0.64) {$\checkmark$};
      \draw[link] (s1) -- (r1);
      \node[lab,anchor=west] at (1.38,0.64) {both chains};
    \end{tikzpicture}
    \caption{Revocation}
    \label{fig:revoke-view}
  \end{subfigure}\hfill
  \begin{subfigure}[t]{0.195\linewidth}
    \centering
    \begin{tikzpicture}[
      x=1cm,y=1cm,font=\scriptsize,
      progress/.style={draw,fill=white,text width=1.04cm,
        minimum height=0.56cm,align=center,inner sep=1pt},
      valid/.style={draw,fill=white,minimum size=0.32cm,inner sep=0pt},
      adv/.style={draw,fill=black,text=white,
        minimum size=0.36cm,inner sep=1pt},
      link/.style={-{Latex[length=1.2mm]},
        shorten >=1pt,shorten <=1pt},
      lab/.style={inner sep=1pt,align=center}
    ]
      \path[use as bounding box] (0,0) rectangle (2.25,5.00);
      \fill[black!7] (0.46,1.47) rectangle (2.19,4.22);
      \draw[link] (0.25,4.32) -- (0.25,1.18);
      \foreach \y in {4.22,1.47}
        \draw (0.19,\y) -- (0.31,\y);
      \node[lab] at (0.25,4.52) {$t$};
      \node[lab] at (0.38,1.00) {$t+\tau$};

      \node[valid] (v) at (1.39,4.67) {$\checkmark$};
      \node[adv] (b) at (1.39,3.92) {$B$};
      \node[progress] (h1) at (1.39,2.98) {honest\\progress};
      \node[progress] (h2) at (1.39,2.03) {honest\\progress};
      \draw[link] (b.north) -- (v.south);
      \draw[link] (h1.north) -- (b.south);
      \draw[link] (h2.north) -- (h1.south);
      \foreach \n in {h1,h2}
        \draw[black!50,line width=0.35pt]
          (\n.north west) -- (\n.south east)
          (\n.north east) -- (\n.south west);
      \node[lab] at (1.46,1.26) {reject $B$};
      \node[lab] at (1.12,0.43)
        {strict height\\$\ell\ \longrightarrow\ \ell$};
    \end{tikzpicture}
    \caption{Growth}
    \label{fig:revoke-growth}
  \end{subfigure}\hfill
  \begin{subfigure}[t]{0.425\linewidth}
    \centering
    \begin{tikzpicture}[
      x=1cm,y=1cm,font=\scriptsize,
      progress/.style={draw,fill=white,text width=1.04cm,
        minimum height=0.56cm,align=center,inner sep=1pt},
      adv/.style={draw,fill=black,text=white,
        minimum size=0.34cm,inner sep=1pt},
      root/.style={draw,circle,fill=white,
        minimum size=0.08cm,inner sep=0pt},
      link/.style={-{Latex[length=1.0mm,width=0.8mm]},
        shorten >=0.3pt,shorten <=0.3pt},
      lab/.style={inner sep=1pt,align=center}
    ]
      \path[use as bounding box] (0,0) rectangle (5.06,5.00);

      \node[lab] at (2.53,4.78) {valid private fork};
      \node[root] (r0) at (0.06,3.77) {};
      \foreach \x/\i in {0.70/1,1.94/2,3.18/3,4.42/4} {
        \node[progress] (p\i) at (\x,4.17) {honest\\progress};
        \node[adv] (a\i) at (\x,3.35) {$a_{\i}$};
        \draw[densely dotted,line width=0.4pt]
          (p\i.south) -- (a\i.north);
      }
      \draw[link] (p1.west) -- (r0);
      \draw[link] (a1.west) -- (r0);
      \foreach \i/\j in {2/1,3/2,4/3} {
        \draw[link] (p\i.west) -- (p\j.east);
        \draw[link] (a\i.west) -- (a\j.east);
      }

      \node[lab] at (2.53,2.73) {with one carrier};
      \node[root] (r1) at (0.06,1.00) {};
      \node[adv] (b) at (0.70,2.13) {$B$};
      \foreach \x/\i in {1.94/1,3.18/2,4.42/3} {
        \node[progress,text=black!45,draw=black!45] (q\i)
          at (\x,2.13) {honest\\progress};
        \draw[black!55,line width=0.35pt]
          (q\i.north west) -- (q\i.south east)
          (q\i.north east) -- (q\i.south west);
      }
      \draw[link] (b) -- (r1);
      \draw[link] (q1.west) -- (b.east);
      \draw[link] (q2.west) -- (q1.east);
      \draw[link] (q3.west) -- (q2.east);
      \node[lab] at (3.18,1.57) {revoked};

      \node[progress] (survives) at (1.02,1.00) {honest\\progress};
      \node[adv] (private) at (1.02,0.20) {$a_2$};
      \draw[link] (survives.west) -- (r1);
      \draw[link] (private.west) -- (r1);
      \draw[densely dotted]
        (survives.south) -- (private.north);
      \node[lab,anchor=west] at (1.78,1.00) {surviving};
      \node[lab,anchor=west] at (1.78,0.20) {valid private block};
    \end{tikzpicture}
    \caption{Competition}
    \label{fig:revoke-prefix}
  \end{subfigure}
  \caption{SPV mining revokes honest progress. Black blocks are adversarial.
  (a)~Rejecting carrier $B$ removes its descendants.
  (b)~The two revoked units leave strict-chain height unchanged in this execution.
  (c)~One carrier revokes three units. Only the surviving unit needs a valid competitor.}
  \label{fig:revoke}
\end{figure}

\subsection{Revocation under SPV mining}

Figure~\ref{fig:revoke-view} shows how an honest block's contribution remains revocable until the block and all its ancestors have been validated.
An invalid adversarial block $B$, called a \emph{carrier}, may enter the optimistic eligible set before its validation completes.
In the upper snapshot, the optimistic chain extends $B$ and its honest descendants, while the strict chain remains on a fully validated branch.
Once $B$ is found invalid, its entire subtree is excluded from the optimistic eligible set.
After rejection, both chain selections coincide with the surviving validated branch.
Thus, one adversarial mining success can revoke several honest blocks without constructing an equally long valid adversarial branch.

SPV mining does not change the rate of honest mining successes.
However, work on invalid ancestry does not advance or secure a fully validated chain.
Figure~\ref{fig:revoke-growth} and Figure~\ref{fig:revoke-prefix} illustrate this distinction.
In Figure~\ref{fig:revoke-growth}, the strict-chain height remains unchanged despite these honest successes.
Figure~\ref{fig:revoke-prefix} illustrates how carriers revoke part of the honest progress, reducing the valid competing work needed to match the surviving progress.
We use integer block counts for ease of illustration. These counts do not imply that the allocation is optimal or that every carrier revokes three units.

\section{Security Analysis}\label{sec: analysis}

\noindent\textbf{Roadmap.}
We establish chain growth, block liveness, mining-time persistence, and common prefix as defined in Section~\ref{sec: model}.
We introduce \emph{effective honest mining time} to measure mining on objectively valid ancestry as an equivalent duration at the full honest rate.
This quantity lets us account for carrier-induced losses when bounding honest progress (Section~\ref{sec: effective-time}).

We must also account for adversarial blocks mined before the interval under consideration.
As noted in prior analyses of premining~\cite{dembo2020everything}, blocks mined before a common honest block cannot extend it.
Under SPV mining, however, an earlier carrier can still divert honest mining if its branch remains eligible and at least as long as a miner's current chain.
We therefore track the adversary's \emph{stock}, the usable unpublished blocks available at the start of an interval.

We first relate effective mining time to valid-chain height and use this relation to bound stock uniformly over the observation horizon $[0,T]$.
These bounds yield chain growth by charging every new adversarial success as a potential carrier and accounting for losses caused by the initial stock (Section~\ref{sec: stock-growth}).
We then analyze the race under arbitrary allocations of adversarial mining power between carriers and valid competing branches (Section~\ref{sec: shared-budget}).
Using the stock bound to control the losses from initial stock, we account for the further losses caused by new work allocated to carriers and compare the remaining honest progress with valid competing adversarial work (Section~\ref{sec: analysis-permanence}).
This comparison supports a construction of permanent honest blocks, which gives block liveness and, in turn, mining-time persistence.
Finally, we derive common prefix by converting the confirmation duration into a block-depth bound.

All four guarantees hold with high probability against every admissible strategy from genesis whenever $\lambda_a<\lambda_h/(1+\lambda_h\Delta)$ and $\lambda_a\tau<1$.
The protocol's resilience threshold, expressed as an adversarial mining share, is therefore at least $\min\{2/(2+\lambda\Delta+\sqrt{4+(\lambda\Delta)^2}),\,1/(\lambda\tau)\}$.
All four guarantees hold for every $\beta$ strictly below this bound, with $1/(\lambda\tau)$ interpreted as $+\infty$ when $\tau=0$.
We present the main lemmas and theorems with proof sketches and give the full proofs in Appendix~\ref{sec: proof-analysis}.

\subsection{Effective time and interval loss}\label{sec: effective-time}
Let $a(t)\in[0,1]$ be the fraction of honest mining power working on objectively valid ancestry immediately before time $t$.
More explicitly, $a(t)$ is the sum of $w_i$ over miners whose selected parent and all its ancestors are objectively valid.
We define the \emph{effective honest mining time} by
$A(t)=\int_0^t a(s)\,ds$.
Thus, $A(t)-A(u)$ is the equivalent duration of mining on valid ancestry using all honest mining power.
It has units of time, and $\lambda_h(A(t)-A(u))$ is the corresponding integrated intensity of honest successes on valid ancestry.
This random integrated intensity need not be independent of the corresponding success process.

Let $H(t)$ be the maximum height of an objectively valid chain known to at least one honest miner at time $t$.
Let $Z(u,t)$ count adversarial successes in $(u,t]$.
Of these, $Z_{\mathrm{bad}}(u,t)$ counts blocks with an invalid ancestor, including the block itself, and
$Z_{\mathrm{val}}(u,t)=Z(u,t)-Z_{\mathrm{bad}}(u,t)$ counts the remainder.
Only the latter can belong to an objectively valid competing chain.
The classified counts need not be independent Poisson processes, since the adversary chooses how to use each success.

Let $S(u)$ count adversarial blocks mined by time $u$ that have not been received by any honest miner and have height at least $H(u)$.
A block is counted only if no objectively invalid block on its ancestry, including the block itself, has been received by any honest miner by $u$.
Here ``invalid'' refers to objective validity, whether or not validation has completed.
This stock includes usable blocks on invalid branches as well as valid private branches.
It is an analytical quantity, not a state that an honest miner can observe.

The next lemma bounds the effective mining time remaining in an interval under any allocation of adversarial power.
It separates the loss caused by newly mined blocks on invalid ancestry from the loss caused by the usable stock $S(u)$ present at the interval's start.

\begin{lemma}[Interval loss with explicit stock]\label{lem:interval-charging}
Every admissible execution satisfies, for all $0\leq u\leq t$,
\[
 A(t)-A(u)\geq(t-u)-\tau Z_{\mathrm{bad}}(u,t)-\tau S(u)-(2\Delta+\tau).
\]
From genesis, the sharper inequality $A(t)\geq t-\tau Z_{\mathrm{bad}}(0,t)$ holds.
\end{lemma}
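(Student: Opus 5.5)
We charge every instant at which some honest mining power sits on invalid ancestry to a specific invalid adversarial block, and we bound how long each such block can keep attracting honest work. The starting observation is $1-a(s)\leq 1$. Hence $(t-u)-(A(t)-A(u))$ is at most the Lebesgue measure of the \emph{bad set}
\[
L=\{s\in(u,t]:a(s)<1\},
\]
the set of times at which at least one honest miner's selected parent has an objectively invalid ancestor. So it suffices to cover $L$ by intervals of total length $\tau Z_{\mathrm{bad}}(u,t)+\tau S(u)+2\Delta+\tau$.

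\textbf{Step 1: attribute each bad instant to a carrier.} For $s\in L$, pick an honest miner $i$ whose optimistic chain contains an objectively invalid block. Let $C$ be the \emph{lowest} objectively invalid block on that chain. Since honest blocks carry validity bit $1$, $C$ is adversarial. Miner $i$ has received $C$ but has not yet rejected it, since a rejected block leaves $E_i^{\mathrm{opt}}$. Therefore $r_i(C)\leq s<\nu_i(C)\leq r_i(C)+\tau$.

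We next compare receipt times across miners. Let $\rho(C)$ be the first time any honest miner receives $C$. Then $r_i(C)\leq\rho(C)+\Delta$. So $s$ lies in the window
\[
W(C)=[\rho(C),\rho(C)+\Delta+\tau),
\]
and $L\subseteq\bigcup_C W(C)$, the union taken over objectively invalid adversarial blocks released to some honest miner.

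To remove the extra $\Delta$ per carrier, we refine this. Note that $C$ is the lowest invalid block on miner $i$'s chain. We would argue that only the miner whose delivery of $C$ is latest matters, and then restrict to $[r_i(C),r_i(C)+\tau)$ for that miner. The cleanest route is to take, for each $C$, the single interval $[\min_i r_i(C),\max_i\nu_i(C))$. Its length is at most $\Delta+\tau$, not $\tau$, so this route needs extra care. This is the main obstacle, and it is treated in Step 2.

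\textbf{Step 2: counting carriers.} We split the relevant carriers $C$ into three groups.
\begin{enumerate}
\item[(a)] Carriers mined in $(u,t]$. These are counted by $Z_{\mathrm{bad}}(u,t)$.
\item[(b)] Carriers mined by $u$ that are still usable at $u$. They must be unreceived by $u$, since otherwise their rejection window would be bounded by the boundary term. They must have no received invalid ancestor, and their height must be at least $H(u)$, because honest miners at $u$ already know valid chains of height $H(u)$. A lower branch could only be adopted if it grew, which requires new adversarial blocks already charged in (a). These carriers are counted by $S(u)$.
\item[(c)] Carriers already received by some honest miner by $u$. Their windows end by $u+\Delta+\tau$.
\end{enumerate}
For the $\Delta$ issue we refine the attribution. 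When several carriers overlap in time, the bad set is a union, so it suffices to bound the measure of $\bigcup_C W'(C)$, where $W'(C)$ is as defined below.

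Our approach is to note that after the first honest receipt $\rho(C)$, within $\Delta$ every honest miner knows $C$ and, crucially, knows the valid chain of height $H$. The key claim is that the set of times at which some miner is on $C$'s subtree, \emph{with $C$ not superseded by a longer valid chain}, is covered by $[\rho(C),\rho(C)+\tau)$ together with boundary effects. We would try to absorb the per-carrier $\Delta$ by charging the delivery lag to the adversary's delivery scheduling only at the interval ends, which yields the single $2\Delta$ boundary term: one $\Delta$ for propagation at $u$ and one $\Delta$ near $t$. If this refinement fails, the fallback is to use $W'(C)=[\rho(C),\rho(C)+\tau)$ measured at the \emph{last} receipt time instead of the first, and to argue that the union over miners of individual windows $[r_i(C),\nu_i(C))$ still has measure at most $\tau$ plus a shift that telescopes.

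\textbf{Step 3: assemble and treat the genesis case.} Summing the window lengths gives $\tau$ per carrier in (a) and (b), plus $\tau+2\Delta$ for the carriers in (c) and the boundary effects. This yields the first inequality. From genesis, $S(0)=0$ and no blocks precede time $0$, so groups (b) and (c) are empty. No boundary propagation is needed either, since everyone starts on $b_0$. We obtain $A(t)\geq t-\tau Z_{\mathrm{bad}}(0,t)$.

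The hard step is the per-carrier window being exactly $\tau$ rather than $\Delta+\tau$. I expect that to require the precise first-receipt/validation interplay in the model.
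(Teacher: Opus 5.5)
There is a genuine gap, and it is in your first reduction. Replacing $1-a(s)$ by the indicator of $\{a(s)<1\}$ commits you to covering the unweighted bad set $L$ by intervals of total length $\tau Z_{\mathrm{bad}}(u,t)+\tau S(u)+2\Delta+\tau$. That claim is false.

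Suppose $0<\Delta\leq\tau$. Take two honest miners with $w_1=w_2=1/2$ and no honest successes, and let carriers $C_1,\dots,C_n$ be mined in $(u,t]$ on the valid public tip. Deliver $C_k$ to miner~1 at $s_k$ and to miner~2 at $s_k+\Delta$, reject it exactly $\tau$ after each receipt, and set $s_{k+1}=s_k+\Delta+\tau$. This schedule is admissible. Yet $|L|\geq n(\Delta+\tau)$, while $Z_{\mathrm{bad}}(u,t)=n$ and $S(u)=0$, so the covering fails once $n>2+\tau/\Delta$. At $u=0$, a single carrier already refutes your unweighted genesis bound. No refinement of the windows $W(C)$ can fix this.

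The missing idea is to keep the power weights:
\[
 \int_u^t(1-a(s))\,ds=\sum_i w_i\,\bigl|\{s\in(u,t]:\text{miner $i$ mines on invalid ancestry}\}\bigr|.
\]
Split each miner's bad time by first-invalid root, which is your lowest invalid block $C$. The containment $r_i(C)\leq s<\nu_i(C)\leq r_i(C)+\tau$, which you already derived in Step~1, gives each root at most $\tau$ of every miner's time. Hence each root costs at most $\sum_i w_i\tau=\tau$ of weighted loss, however receipts are staggered across miners; in the example the weighted loss per carrier is exactly $\tau$. The genesis inequality then follows at once, since every root encountered by $t$ is counted by $Z_{\mathrm{bad}}(0,t)$.

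Your case split in Step~2 also leaves roots uncharged. A root attracting honest work after $u$ need not be hidden at $u$ with height at least $H(u)$.
\begin{itemize}
\item An old hidden root below $H(u)$ can have old hidden descendants at height at least $H(u)$. These belong to $S(u)$, not to your group (a), contrary to your claim that such growth needs new blocks.
\item A root mined before $u$ but first released in $(u,u+\Delta]$ can attract a miner that does not yet know a chain of height $H(u)$. At time $u$ only one honest miner is guaranteed to know such a chain.
\end{itemize}

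The paper puts every root known to some honest miner by $u+\Delta$ into the boundary term. Such roots are received everywhere by $u+2\Delta$ and rejected by $u+2\Delta+\tau$, so with $1-a\leq1$ they cost at most $2\Delta+\tau$ in total. This, not a lag near $t$, is the source of the $2\Delta$.

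Each remaining root is charged to its first adversarial descendant of height at least $H(u)$ to become public. Such a descendant exists whenever the subtree attracts mining, because after $u+\Delta$ every selected tip has height at least $H(u)$. Either that tip is adversarial, or the root's first honest descendant was mined after $u+\Delta$ on a public adversarial parent of height at least $H(u)$. Disjoint subtrees make the charged blocks distinct. An old charged block is counted by $S(u)$: it was hidden at $u$, since otherwise its root would have been public by then, and it had no public invalid ancestor at $u$. A new charged block is counted by $Z_{\mathrm{bad}}(u,t)$.
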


\noindent\emph{Proof sketch.}
At each miner, a first invalid ancestor can attract mining for at most $\tau$ time units after receipt.
Summing with mining-power weights gives at most $\tau$ loss per invalid block.
For an arbitrary interval, set aside its first $2\Delta+\tau$ time units for propagation and validation.
Thereafter, each remaining bad mining instant can be charged to an adversarial block at height at least $H(u)$, either in the initial stock or mined after $u$ on invalid ancestry.
Each charged block again accounts for at most $\tau$ weighted loss, giving the interval bound.

\subsection{Uniform stock and chain growth}\label{sec: stock-growth}
Lemma~\ref{lem:interval-charging} quantifies the effective time remaining after losses from the initial stock $S(u)$ and newly mined invalid work $Z_{\mathrm{bad}}$. To compare the resulting honest progress with private-chain growth from $Z_{\mathrm{val}}$, we need a uniform bound on $S(u)$. Deriving this bound exploits a tradeoff. Honest progress makes unpublished blocks below the public valid height $H$ unusable, whereas suppressing that progress with carriers consumes the adversarial stock. To quantify how quickly retained stock becomes unusable, we first establish how effective mining time translates into growth of $H$.

Let $L_i(t)=|\mathcal{C}_i^{\mathrm{str}}(t)|$ and set
$\rho=\frac{\lambda_h}{1+\lambda_h\Delta},\qquad K=\kappa+\log(2+\lambda T)$.
The next lemma bounds the public valid height in effective time.

\begin{lemma}[Effective-time height growth]\label{thm:effective-height-growth}
For every $0<r<\rho$, there is a constant $C_H>0$ depending only on the model parameters and $r$ such that, for every $T\geq0$ and $\kappa\geq1$, under every admissible strategy,
\[
 H(t)-H(u)\geq r(A(t)-A(u))-C_HK
\]
holds simultaneously for all $0\leq u\leq t\leq T$ with probability at least $1-e^{-\kappa}$.
\end{lemma}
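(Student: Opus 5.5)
We will prove the bound with a block-race argument in the style of Dembo et al., driven by a time change to effective time. The key observation is that honest successes on valid ancestry form a point process with stochastic intensity $\lambda_h a(t)$. Time-changing by $A$ therefore turns it into a rate-$\lambda_h$ Poisson process $N^\star$ on the effective-time axis (Watanabe/Meyer characterization). This holds even though $a$ is adversarially adapted, because $a(t)$ is predictable: it is determined immediately before $t$, and the adversary is causal.

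\textbf{Step 1 (deterministic progress lemma).} We first show a "one unit per propagation window'' statement. Suppose an honest success on objectively valid ancestry occurs at time $s$ with $H(s^-)\geq \ell$. Then its parent is objectively valid, so it has height at least the miner's optimistic tip height. Here we need care: the optimistic chain may have greater height than $H$ only via pending blocks, and all ancestors of the parent are valid. Hence the new block is objectively valid. Moreover, its height is at least $\ell+1$ provided the miner had received the height-$\ell$ valid chain, which holds if $s\geq$ (time $H$ reached $\ell$) $+\Delta$. The optimistic rule never prefers a shorter eligible chain, and a valid chain is never rejected. So the parent's height is at least $\ell$ whenever the parent lies on valid ancestry. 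A subtlety is that a miner mining on valid ancestry might pick a valid parent lower than $\ell$ because a longer invalid-but-pending branch existed. But then it is not on the longest eligible chain unless that chain is valid, and a longest-chain choice of a valid parent has length at least the valid length $\ell$. We will verify this carefully.

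It follows that $H$ increases by at least one in every "cycle'' consisting of a wait of $\Delta$ real time after the last increment, followed by the next valid-ancestry honest success. In effective time, the $\Delta$ real-time wait costs at most $\Delta$ effective time, since $a\leq 1$.

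\textbf{Step 2 (renewal comparison).} On the effective axis, $H(t)-H(u)$ therefore dominates the number of completed cycles of a renewal process whose inter-renewal times are at most $\Delta+\mathrm{Exp}(\lambda_h)$. By the strong Markov property of $N^\star$ at the (stopping) cycle starts, these times can be coupled with i.i.d. variables. The mean is $1/\rho$. For $r<\rho$, a Chernoff bound gives
\[
\Pr\big[\text{fewer than } r x - m \text{ renewals in effective length } x\big]\leq e^{-c(rx+m)}
\]
for some $c>0$ depending on $r,\lambda_h,\Delta$. This handles a single starting point $u$, up to an additive constant for the partially elapsed cycle at $u$.

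\textbf{Step 3 (uniformity).} We discretize real time $[0,T]$ into a grid of mesh $1/\lambda$ (about $\lambda T+2$ points). Both $A$ and $H$ are monotone, and $A$ is $1$-Lipschitz, so the bound on grid pairs transfers to all pairs $u\leq t$ at the cost of an additive constant. We take a union bound over $O((2+\lambda T)^2)$ pairs. Choosing the deficit $m=C_HK$ with $C_H$ large makes each term at most $e^{-\kappa}(2+\lambda T)^{-2}/\mathrm{const}$, giving total failure probability at most $e^{-\kappa}$.

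\textbf{Main obstacle.} The delicate point is the coupling in Steps 1–2. The starting times of cycles and the intensity $a$ are adaptively chosen, so independence must come from the time-change plus strong Markov argument on $N^\star$ rather than from real-time Poisson increments. We also need the deterministic claim that a valid-ancestry success after the $\Delta$ window always raises $H$. We will handle the latter by proving that any honest miner's optimistic tip has height at least $H(s-\Delta)$, so a valid-ancestry parent chosen as a longest eligible chain has height at least that. If the tip is pending, the parent might be a pending but objectively valid block, which still counts toward $H$ once known, so the argument goes through.
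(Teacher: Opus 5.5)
Your Step 1 is the paper's selection scheme. From $u+\Delta$ on, count the first honest success on valid ancestry, then wait $\Delta$ after each counted success before counting again. Induction gives $H(t)-H(u)\geq Q_u(t)$, and the waits cost at most $\Delta[Q_u(t)+1]$ of effective time. Anchor each wait to the previous counted success, not to an arbitrary increment of $H$, so that the exponential gaps in your Step 2 come from disjoint stretches of the effective axis. Your worry about a low valid parent is moot: the parent is the tip of a longest chain in $E_i^{\mathrm{opt}}$, and that set contains every valid chain the miner knows.

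The genuine difference is the concentration engine. The paper never time-changes. It notes that the counted process has predictable intensity $\lambda_h J_u(v)a(v)$, so $\exp\{-\xi Q_u(t)+\chi B_u(t)\}$ is a nonnegative supermartingale, where $B_u(t)=\int_u^t J_u(v)a(v)\,dv$ and $\chi=\lambda_h(1-e^{-\xi})$. The maximal inequality then gives, in one step, a bound valid for every $t\geq u$, conditionally on $\mathcal F_u$ at any stopping time $u$. That conditional form is reused later for the stock checks, and the uniform statement needs a union only over the $1+\lambda T$ grid starting points.

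You instead use Meyer's time change to make valid-ancestry successes a rate-$\lambda_h$ Poisson process on the effective axis. You then compare with i.i.d.\ cycles of length at most $\Delta+\mathrm{Exp}(\lambda_h)$. That route is sound and makes the rate $\rho=1/(\Delta+1/\lambda_h)$ transparent. It does need bookkeeping the martingale avoids:
\begin{itemize}
\item $A(\infty)$ can be finite, since a surviving carrier busy period forces $a\equiv0$. So $N^\star$ must be extended past $A(\infty)$ by an independent Poisson process.
\item The strong Markov property must be applied at the effective times $A(\sigma_k)$. These are stopping times for the time-changed filtration $\mathcal{G}_x:=\mathcal{F}_{\inf\{t:\,A(t)>x\}}$.
\end{itemize}

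One step must be tightened. In Step 3 you union-bound a fixed-$x$ Chernoff estimate over real-time grid pairs $(u,t)$. For such a pair, $x=A(t)-A(u)$ is random and adversarially adapted. For example, the adversary can release carriers to stall effective time right after a run of slow cycles. A bound for each deterministic $x$ then says nothing about $\Pr[\text{fewer than } rX-m \text{ cycles}]$ at the random $X$, so you need the renewal bound uniformly in $x\geq0$.

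Your tail $e^{-c(x+m)}$ is summable, so this is easy to fix:
\begin{itemize}
\item Apply the tail at $x=jh$ for all integers $j\geq0$.
\item Interpolate by monotonicity of the cycle count, at cost $rh$.
\item This yields one event of probability at least $1-C'e^{-cm}$ on which $H(t)-H(u)\geq r(A(t)-A(u))-m-O(1)$ for all $t\geq u$.
\end{itemize}
A union over starting grid points alone, with $m=C_HK$, then finishes the proof exactly as in the paper.
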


Combining this height bound with the accounting of consumed and surviving blocks yields the following uniform bound on $S(u)$.

\begin{theorem}[Uniform stock and effective time]\label{thm:effective-time-intervals}
Suppose $\lambda_a<\lambda_h/(1+\lambda_h\Delta)$ and $\lambda_a\tau<1$.
There are constants $C_S,C_A>0$, depending only on the model parameters, such that, for every $T\geq0$ and $\kappa\geq1$, under every admissible strategy from genesis, with probability at least $1-e^{-\kappa}$,
\begin{equation}
 \begin{split}
 S(u)&\leq C_SK,\\
 A(t)-A(u)&\geq(t-u)-\tau Z_{\mathrm{bad}}(u,t)-C_AK
 \end{split}
 \label{eq:effective-time-interval-stock}
\end{equation}
simultaneously for all $0\leq u\leq t\leq T$, where $K=\kappa+\log(2+\lambda T)$.
\end{theorem}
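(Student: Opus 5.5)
The plan is to prove the stock bound first and then obtain the effective-time bound as a direct consequence. Once $S(u)\le C_SK$ holds uniformly, Lemma~\ref{lem:interval-charging} gives
\[
 A(t)-A(u)\ge(t-u)-\tau Z_{\mathrm{bad}}(u,t)-\tau C_SK-(2\Delta+\tau),
\]
and since $K\ge1$ we may take $C_A=\tau C_S+2\Delta+\tau$. So everything reduces to bounding $S(u)$ on a single good event of probability at least $1-e^{-\kappa}$. I would intersect three events, each with failure probability at most $e^{-\kappa}/3$; absorbing $\log 3$ into the constants is harmless.
\begin{enumerate}
 \item The event of Lemma~\ref{thm:effective-height-growth}, applied with some fixed $r\in(\lambda_a,\rho)$.
 \item A uniform Poisson deviation bound for adversarial successes: $Z(v,u)\le \lambda'(u-v)+C_ZK$ for all $0\le v\le u\le T$, with $\lambda_a<\lambda'$ chosen so that the margins below stay positive. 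This follows by discretizing $[0,T]$ into $O(\lambda T)$ grid points and applying a union bound, which is the source of the $\log(2+\lambda T)$ term.
 \item The same type of bound for honest successes, if needed.
\end{enumerate}

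\textbf{Stock accounting.} Fix $u$. Every block counted in $S(u)$ is adversarial, unreleased, and of height at least $H(u)$, and it was mined at some time in $(v,u]$, where I take $v$ to be the last time before $u$ at which the public height $H$ had a value forcing such blocks to be recent. Concretely, the counted blocks have height at least $H(u)$, but their chains branch off below. The cleanest route is to argue through the lowest counted block's lineage: let $v$ be the latest time such that every counted block, together with the private ancestry above $H(v)$, was mined after $v$. This gives
\[
 S(u)\le \#\{\text{adversarial blocks mined in }(v,u]\}-(\text{those spent as released carriers}),
\]
and also $S(u)\le Z(v,u)-(H(u)-H(v))$. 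The latter holds because the private branch must climb from height $H(v)$ up to $H(u)$, and that climb uses adversarial blocks. Roughly, the accounting I aim for is that at least $H(u)-H(v)-S(v)$ of the blocks in $(v,u]$ are consumed just to keep pace.

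\textbf{Tradeoff.} Combine the pieces on $(v,u]$. Lemma~\ref{thm:effective-height-growth} gives $H(u)-H(v)\ge r(A(u)-A(v))-C_HK$. The from-genesis form of Lemma~\ref{lem:interval-charging} bounds the loss by released carriers, but for a general $v$ I would use an interval version that charges only released bad blocks, so that carriers spent diverting honest mining are no longer in stock. Write $Z=Z(v,u)$ and split it as $Z=S_{\text{kept}}+Z_{\text{released bad}}+Z_{\text{other}}$. Then
\[
 H(u)-H(v)\ge r\bigl((u-v)-\tau Z_{\text{released bad}}\bigr)-C K .
\]
The kept stock must reach height $H(u)$, so $S_{\text{kept}}+Z_{\text{other}}\gtrsim H(u)-H(v)$ is required for the stock to be usable at all. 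With $Z\le\lambda'(u-v)+C_ZK$, the maximum of $S_{\text{kept}}$ is a linear program. Each released carrier buys only $r\tau<1$ units of height suppression, using $r\le\rho$ together with the hypothesis $\lambda_a\tau<1$; I expect to need to choose $r$ and $\lambda'$ so that $\lambda'\tau<1$ and $\lambda'<r$. Each kept or valid block buys one unit. Since $\lambda'<r$, honest height outruns the total adversarial budget over long windows, which forces $u-v=O(K)$. Then $S(u)\le Z(v,u)=O(K)$.

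\textbf{Main obstacle.} The hardest step is making the choice of $v$ and the ``consumed versus surviving'' accounting rigorous. Stock counted at $u$ may rest on ancestry that was itself stock at earlier times, so the argument is circular unless it is run as an induction over time or as a first-violation argument. My first attempt would take the first time $u$ at which $S(u)>C_SK$ and use $S(v)\le C_SK$ for earlier $v$. This yields the interval inequality
\[
 S(u)\le S(v)+Z(v,u)-\bigl(H(u)-H(v)\bigr)+\tau^{-1}\text{-weighted carrier slack},
\]
and choosing $v$ far enough back shows that the right-hand side is negative for long windows, which contradicts the violation.
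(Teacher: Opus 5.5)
Your final step is fine: Lemma~\ref{lem:interval-charging} plus a uniform stock bound, with $C_A=\tau C_S+2\Delta+\tau$. Working on uniform height and production events is also fine. The gap is the stock bound itself.

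\textbf{Old carriers are left out of your budget.} In your tradeoff program the budget is $Z(v,u)$, which counts only blocks mined in $(v,u]$. But carriers released during $(v,u]$ can be old: withheld invalid blocks on other branches that already belonged to $\mathcal U(v)$. These are exactly the term $R_0(v,u)\le S(v)$ in the proof of Lemma~\ref{lem:interval-charging}. For example, $N$ withheld invalid siblings, released one per $\tau$, keep $H$ constant for about $N\tau$ time. Meanwhile a private chain begun at $v$ gathers about $\lambda_a\tau N$ stock blocks. That stock's lineage starts at $v$, so $u-v=O(K)$ fails unless $N$ is already controlled.

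\textbf{Your repair does not close the circularity.} The inequality $S(u)\le S(v)+Z(v,u)-(H(u)-H(v))+\cdots$ is false for general $v$. The adversary can spend its few successes on low blocks while $H$ rises, then mine a few blocks on the public tip. Even where the inequality holds, $S(v)$ enters with coefficient one. Then $S(v)\le C_SK$ cannot contradict $S(u)>C_SK$ on short windows. Separately, $r\tau<1$ does not follow from the hypotheses; the per-block charge must be $\max\{1,r\tau\}$, as in Proposition~\ref{prop:mixed-budget}.

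\textbf{What is missing is a strict contraction of earlier stock.} The paper obtains it forward in time.
\begin{itemize}
\item By Lemma~\ref{lem:effective-stock-clearance}, an old block survives only through a hidden path above $H(u)$ that is disjoint from the consumed old carriers. Hence $H(v)-H(u)+R_0(u,v)>S(u)$ clears all old stock.
\item An adaptive check time, the first $v$ with $r(v-u)-r\tau R_1(u,v)\ge MS_0+C_0\eta+1$, then gives $S(\sigma)\le\rho_SS_0+C_1\eta$ with $\rho_S=\max\{\lambda_a^+/r,\lambda_a^+\tau\}<1$. Each new carrier contributes net $\lambda_a^+\tau-1<0$.
\item Lemma~\ref{lem:effective-stock-uniform} iterates this from $S(0)=0$, bounds stock peaks between checks, and sums conditional failure probabilities over at most $1+rT$ checks.
\end{itemize}

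\textbf{Your backward route can be repaired, with different bookkeeping.} Let $v$ be just before the start of the adversarial segment (above the highest honest ancestor) of the stock block whose segment begins earliest. Three sets of blocks mined in $(v,u]$ are then disjoint: the stock at $u$, the new charged carriers counted by $R_1(v,u)$, and that segment's blocks below height $H(u)$. This gives $S(u)\le Z(v,u)-R_1(v,u)-(H(u)-H(v)-1)^+$. Earlier stock now enters only through $R_0(v,u)\le S(v)$ in the loss bound. Maximizing over $u-v$, using $\lambda'<r$ and $\lambda'\tau<1$, yields $S(u)\le\lambda'\tau S(v)+C'K$. This factor below one is what lets a first-violation argument close.
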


\noindent\emph{Proof sketch.}
An old unpublished block in the stock can remain usable only if its hidden path reaches the new public valid height. The path blocks above $H(u)$ belong to the initial stock and cannot also be consumed by carriers, since that would expose invalid ancestry on the surviving path. Hence the initial stock must cover both height growth and old carrier consumption. Once their sum exceeds $S(u)$, only newly mined blocks can remain usable.

Combining this disjoint accounting with the height and production bounds gives a strict contraction under $\lambda_a<\rho$ and $\lambda_a\tau<1$, after deducting newly consumed carriers from replacement stock. Iterating from empty stock at genesis, with concentration bounds controlling fluctuations and production between comparisons, yields $S(u)\leq C_SK$ uniformly over $[0,T]$. Substituting this bound into Lemma~\ref{lem:interval-charging} gives the effective-time inequality.


\begin{theorem}[Chain growth]\label{thm:spv-backbone-chain-growth}
Suppose $\lambda_a<\lambda_h/(1+\lambda_h\Delta)$ and $\lambda_a\tau<1$.
For every $0<\gamma<\rho(1-\lambda_a\tau)$, there is a constant $C_{\mathrm{cg}}>0$, depending only on the model parameters and $\gamma$, such that, for every $T\geq0$ and $\kappa\geq1$, under every admissible strategy from genesis, with probability at least $1-e^{-\kappa}$,
\[
  L_i(t)-L_i(u)\geq\gamma(t-u)
\]
holds simultaneously for all honest miners $i$ and all $0\leq u\leq t\leq T$ satisfying $t-u\geq C_{\mathrm{cg}}K$.
\end{theorem}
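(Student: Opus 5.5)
The plan is to chain three earlier results. Theorem~\ref{thm:effective-time-intervals} gives a lower bound on effective mining time. Lemma~\ref{thm:effective-height-growth} turns effective time into growth of the public valid height $H$. A propagation argument then transfers growth of $H$ to each strict length $L_i$. Fix $\gamma<\rho(1-\lambda_a\tau)$. Choose $r<\rho$ and a slack $\epsilon>0$ with $r\bigl(1-\tau(\lambda_a+\epsilon)\bigr)>\gamma$; this is possible since $\lambda_a\tau<1$ and the map $(r,\epsilon)\mapsto r(1-\tau(\lambda_a+\epsilon))$ is continuous. Work on the intersection of three events, each with failure probability at most $e^{-\kappa}/3$ after replacing $\kappa$ by $\kappa+\log 3$ and absorbing the constant into $K$:
\begin{itemize}
\item[(i)] the event of Theorem~\ref{thm:effective-time-intervals};
\item[(ii)] the event of Lemma~\ref{thm:effective-height-growth} with this $r$;
\item[(iii)] a uniform Poisson upper-deviation event $Z(u,t)\leq(\lambda_a+\epsilon)(t-u)+C_ZK$ for all $0\leq u\leq t\leq T$.
\end{itemize}
Event (iii) follows by discretizing $[0,T]$ into a grid of mesh $1/\lambda$, applying Chernoff bounds on each grid interval pair, and taking a union bound over $O((\lambda T+2)^2)$ pairs. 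This is where $\log(2+\lambda T)$ enters $K$. Using $Z_{\mathrm{bad}}\leq Z$, which charges every adversarial success as a potential carrier as the roadmap says, (i) and (iii) give
\[
A(t)-A(u)\geq (t-u)\bigl(1-\tau(\lambda_a+\epsilon)\bigr)-(\tau C_Z+C_A)K .
\]
Plugging this into (ii) then yields
\[
H(t)-H(u)\geq r\bigl(1-\tau(\lambda_a+\epsilon)\bigr)(t-u)-C'K .
\]

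The remaining step is to pass from $H$ to $L_i$, and I expect this to be the main obstacle. The value $H(t')$ is attained by an objectively valid chain known to some honest miner at time $t'$. Every honest miner receives it by $t'+\Delta$, and it becomes $\VALID$ in its view by $t'+\Delta+\tau$. The validation bound applies per block, including pending ancestors, so all validations complete within $\tau$ of receipt. Hence $L_i(t'+\Delta+\tau)\geq H(t')$ for every $i$. The delicate direction is the upper bound $L_i(u)\leq H(u)$. The strict chain is fully validated by miner $i$, so all its blocks have validity bit $1$ and it is objectively valid. It is also known to honest miner $i$ at time $u$, so $L_i(u)\leq H(u)$ follows directly from the definition of $H$. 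Combining the two bounds, with $t'=t-\Delta-\tau$,
\[
L_i(t)-L_i(u)\geq H(t-\Delta-\tau)-H(u)\geq \bigl(\gamma+\delta\bigr)(t-u)-C''K ,
\]
where $\delta>0$ is the gap between $r(1-\tau(\lambda_a+\epsilon))$ and $\gamma$. The $\Delta+\tau$ shift is absorbed into $C''K$ because $K\geq1$. When $t-u<\Delta+\tau$, the shifted time precedes $u$, but this case is excluded by the length requirement below.

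Finally, choose $C_{\mathrm{cg}}$ so that $\delta\,C_{\mathrm{cg}}K\geq C''K$, together with $C_{\mathrm{cg}}\geq\Delta+\tau$. Then $t-u\geq C_{\mathrm{cg}}K$ implies $L_i(t)-L_i(u)\geq\gamma(t-u)$ simultaneously for all $i$ and all such $u,t$. The bound holds for all miners at once because every per-miner statement is deterministic on the good event.

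A subtle point to check is that Lemma~\ref{thm:effective-height-growth} and Theorem~\ref{thm:effective-time-intervals} hold simultaneously over all $u\leq t\leq T$. Both statements assert this, so no further union bound over continuous times is needed. The constants $C',C''$ depend only on the model parameters and $\gamma$, through the choice of $r$ and $\epsilon$, as the statement requires.
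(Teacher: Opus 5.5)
Your proposal is correct and follows the paper's proof essentially step for step. Both use the same choice of $r<\rho$ and slack $\varepsilon$, and the same three events at confidence $\kappa+\log 3$: the stock/effective-time event, uniform height growth, and a grid-based uniform adversarial count bound. Both also use the bound $Z_{\mathrm{bad}}\le Z$ and the transfer $L_i(u)\le H(u)\le \min_j L_j(u+\Delta+\tau)$.

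The only cosmetic difference is how short intervals are handled. The paper isolates this transfer as Lemma~\ref{lem:strict-height-bridge}, which covers $t-u<\Delta+\tau$ by monotonicity. You instead inline the transfer and exclude short intervals by taking $C_{\mathrm{cg}}\ge\Delta+\tau$.
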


\noindent\emph{Proof sketch.}
Choose $0<r<\rho$ and $\varepsilon>0$ such that
$\gamma':=r(1-\lambda_a\tau-\varepsilon)>\gamma$.
Apply the stock, production, and height bounds with confidence parameter $\kappa+\log3$, so they hold jointly with probability at least $1-e^{-\kappa}$.
Since $Z_{\mathrm{bad}}(u,t)\leq Z(u,t)$, Theorem~\ref{thm:effective-time-intervals} and the uniform production bound give
\begin{equation}
  A(t)-A(u)\geq
  (1-\lambda_a\tau-\varepsilon)(t-u)-C_\varepsilon K.
  \label{eq:effective-time-interval-rate}
\end{equation}
Every objectively valid chain is fully validated by all honest miners within $\Delta+\tau$ time units of first becoming known to an honest miner.
Accounting for this fixed delay in the height bound yields
$L_i(t)-L_i(u)\geq\gamma'(t-u)-DK$
for a constant $D>0$, uniformly over all honest miners and intervals.
Taking $C_{\mathrm{cg}}\geq D/(\gamma'-\gamma)$ absorbs the additive deficit whenever $t-u\geq C_{\mathrm{cg}}K$, proving the claim.

\subsection{Shared adversarial budget}\label{sec: shared-budget}
The chain-growth bound above charges every adversarial success as a potential carrier.
We now analyze arbitrary allocations between carriers and valid competing branches, comparing the remaining honest progress with valid competing work under the same adversarial mining budget.

\begin{proposition}[Mixed-budget accounting]\label{prop:mixed-budget}
Let $r>0$. Suppose an interval of length $\ell=t-u$ has an honest progress quantity $P(u,t)$ satisfying
$P(u,t)\geq r(A(t)-A(u))-E(u,t)$.
Then the pathwise loss bound of Lemma~\ref{lem:interval-charging} implies
\begin{align*}
 P(u,t)-Z_{\mathrm{val}}(u,t)
 &\geq r\ell-r\tau Z_{\mathrm{bad}}(u,t)-Z_{\mathrm{val}}(u,t)-F(u,t)\\
 &\geq r\ell-\max\{1,r\tau\}Z(u,t)-F(u,t),
\end{align*}
where $F(u,t)=E(u,t)+r\tau S(u)+r(2\Delta+\tau)$.
The coefficient of $\ell$ after replacing $Z(u,t)$ by its mean $\lambda_a\ell$ is positive exactly when
$\lambda_a<r\quad\text{and}\quad\lambda_a\tau<1$.

\end{proposition}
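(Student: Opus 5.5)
The argument is pathwise and purely algebraic: substitute the loss bound of Lemma~\ref{lem:interval-charging} into the hypothesis on $P(u,t)$, subtract $Z_{\mathrm{val}}(u,t)$, and then merge the two adversarial counts into $Z(u,t)$.

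\emph{Step 1.} Since $r>0$, multiplying the inequality of Lemma~\ref{lem:interval-charging} by $r$ preserves its direction. This gives
\[
 r(A(t)-A(u))\geq r\ell-r\tau Z_{\mathrm{bad}}(u,t)-r\tau S(u)-r(2\Delta+\tau).
\]
Combining this with $P(u,t)\geq r(A(t)-A(u))-E(u,t)$ and subtracting $Z_{\mathrm{val}}(u,t)$ from both sides yields the first displayed inequality. The terms $E(u,t)$, $r\tau S(u)$ and $r(2\Delta+\tau)$ collect exactly into $F(u,t)$.

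\emph{Step 2.} Put $c=\max\{1,r\tau\}$. Because $Z_{\mathrm{bad}}(u,t)$ and $Z_{\mathrm{val}}(u,t)$ are nonnegative, we have $r\tau Z_{\mathrm{bad}}\leq cZ_{\mathrm{bad}}$ and $Z_{\mathrm{val}}\leq cZ_{\mathrm{val}}$. Adding these and using $Z_{\mathrm{bad}}+Z_{\mathrm{val}}=Z$ by definition gives
\[
 r\tau Z_{\mathrm{bad}}(u,t)+Z_{\mathrm{val}}(u,t)\leq cZ(u,t),
\]
which is the second inequality. This step is where arbitrary adaptive allocations are handled: the bound holds whatever split the adversary chooses. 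No independence or Poisson structure of the classified counts is needed, because the argument is pointwise in the execution.

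\emph{Step 3.} Replacing $Z(u,t)$ by $\lambda_a\ell$ makes the coefficient of $\ell$ equal to $r-\max\{1,r\tau\}\lambda_a$. This is positive if and only if both $r>\lambda_a$ and $r>r\tau\lambda_a$ hold. Since $r>0$, the second condition is equivalent to $\lambda_a\tau<1$. The two directions of the equivalence follow directly from the elementary fact that a number exceeds a maximum exactly when it exceeds each argument.

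I expect no real obstacle here. The only care needed is with signs: the multiplication in Step~1 uses $r>0$, and the domination in Step~2 uses nonnegativity of the counts. The substantive difficulty, controlling $S(u)$ and $E(u,t)$ uniformly over the interval, is deferred to Theorem~\ref{thm:effective-time-intervals} and the height bound; it does not enter this pathwise accounting.
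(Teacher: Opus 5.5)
Your proposal is correct and follows essentially the same route as the paper's proof. Both substitute the loss inequality of Lemma~\ref{lem:interval-charging} into the progress bound, use $Z=Z_{\mathrm{bad}}+Z_{\mathrm{val}}$ with the per-success coefficient $\max\{1,r\tau\}$, and read off the equivalence $r>\lambda_a\max\{1,r\tau\}\iff\lambda_a<r$ and $\lambda_a\tau<1$, which implicitly uses $\lambda_a\geq0$.
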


\begin{proof}
Substitute the loss inequality into the assumed progress bound and use
$Z=Z_{\mathrm{bad}}+Z_{\mathrm{val}}$.
Each classified success has coefficient at most $\max\{1,r\tau\}$.
Finally, $r>\lambda_a\max\{1,r\tau\}$ is equivalent to the two stated inequalities.
\end{proof}

\subsection{Permanent blocks and security guarantees}\label{sec: analysis-permanence}
Next, we establish the remaining properties stated in Section~\ref{sec: model} using the stock bound and mixed-budget accounting established above.
Our construction follows the permanent-block approach of Dembo et al.~\cite{dembo2020everything}.
In our SPV setting, protecting a block against all valid competing branches also requires control of invalid ancestry and usable unpublished stock.
We call a block \emph{permanent from $v$} if it belongs to every honest strict chain at every time $t'\geq v$.

\begin{theorem}[Mining-time persistence]\label{thm:spv-race-persistence}
Suppose $\lambda_a<\lambda_h/(1+\lambda_h\Delta)$ and $\lambda_a\tau<1$.
Call a non-genesis block $B\in \mathcal{C}_i^{\mathrm{str}}(t)$ $\sigma$-confirmed if the tip of $\mathcal{C}_i^{\mathrm{str}}(t)$ was mined at least $\sigma$ time units after $B$.
There is a constant $C_{\mathrm{pers}}>0$, depending only on $(\beta,\lambda,\Delta,\tau)$, such that the following holds under every admissible strategy from genesis. For every $T\geq0$ and $\kappa\geq1$, set
$K:=\kappa+\log(2+\lambda T)$.
For every $\sigma\geq C_{\mathrm{pers}}K$, with probability at least $1-e^{-\kappa}$, the following holds simultaneously for all honest miners $i$ and all $t\in[0,T]$. If $B$ is $\sigma$-confirmed in $\mathcal{C}_i^{\mathrm{str}}(t)$, then the prefix ending at $B$ is a prefix of $\mathcal{C}_j^{\mathrm{str}}(t')$ for every honest miner $j$ and every $t'\geq t$.
\end{theorem}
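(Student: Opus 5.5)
Following the roadmap, I would prove persistence via the permanent-block construction of Dembo et al., adapted to strict chains. First I would establish an intermediate block-liveness statement. With probability at least $1-e^{-\kappa}$, every interval $(s,s+\sigma_0]\subseteq[0,T]$ with $\sigma_0=C_{\mathrm{live}}K$ contains the creation of an honest block $P$ that is permanent from some time $v\le m(P)+\Delta+\tau$.

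To get this, call an honest block $P$ mined at time $m(P)$ on objectively valid ancestry a \emph{candidate}. I would show that $P$ is displaced only if, for some $u\le m(P)\le t$, valid competing work in $(u,t]$ plus the usable stock $S(u)$ at least matches the honest progress $P(u,t)$ on valid ancestry. This requirement must hold on both sides of $m(P)$, in the style of a ``pre-mining plus post-mining'' race. Lemma~\ref{thm:effective-height-growth} supplies $P(u,t)\ge r(A(t)-A(u))-C_HK$ for $r$ chosen with $\lambda_a<r<\rho$ and $r\tau\lambda_a<r$. Proposition~\ref{prop:mixed-budget} with $E=C_HK$ then gives
\[P(u,t)-Z_{\mathrm{val}}(u,t)\ge r(t-u)-\max\{1,r\tau\}Z(u,t)-F(u,t).\]
Theorem~\ref{thm:effective-time-intervals} bounds $S(u)\le C_SK$, so $F=O(K)$. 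A uniform Poisson concentration bound on $Z(u,t)$ over all pairs in $[0,T]$ costs only the $\log(2+\lambda T)$ term in $K$. Together these show the margin is at least $c(t-u)-O(K)$ with $c>0$. Hence for intervals longer than $C'K$ no valid competitor plus stock can overtake the honest chain through $P$.

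The short intervals are the delicate part. Within any window of length $\sigma_0$, I would pick candidates using the positive drift, namely the first honest block after a ``good'' time at which the margin process has reached a running minimum. This is analogous to Dembo et al.'s choice of a block at a record of the honest-minus-adversarial walk. The difference here is that the walk uses effective time, with the carrier loss $\tau Z_{\mathrm{bad}}$ already subtracted. One must also check that a candidate's valid ancestry is eventually validated by all honest miners within $\Delta+\tau$, so that it lies in every strict chain. This follows since strict chains are longest fully validated chains and the competitor count is bounded.

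I expect the main obstacle to be the existence of such a record within $O(K)$ time, uniformly over all windows, since the strategy is adaptive and $A$ is not independent of the success process. I would handle this with a union bound over a grid of $O(\lambda T)$ window starts, each failing with probability $e^{-cK'}$, using the pathwise inequalities above rather than independence.

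Finally, I would derive persistence from liveness. Suppose $B$ is $\sigma$-confirmed in $\mathcal{C}_i^{\mathrm{str}}(t)$ with tip $D$, so $m(D)-m(B)\ge\sigma\ge C_{\mathrm{pers}}K$. Take the interval $(m(B),m(B)+\sigma_0]$ with $\sigma_0\le\sigma-(\Delta+\tau)$. It contains a permanent honest block $P$ with $m(B)<m(P)$, permanent from $v\le m(D)\le t$ after accounting for the $\Delta+\tau$ validation delay. Since $P\in\mathcal{C}_i^{\mathrm{str}}(t)$ and $P$ was mined after $B$ at a height exceeding $h(B)$, the block $B$ is an ancestor of $P$ on that chain. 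Permanence of $P$ then places $B$'s prefix in every $\mathcal{C}_j^{\mathrm{str}}(t')$ for $t'\ge t$. The edge case $m(B)+\sigma_0>T$ is covered by extending the union bound slightly past $T$, or by noting that $m(D)\le t\le T$.
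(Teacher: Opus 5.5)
\paragraph{Verdict.} The proposal has a genuine gap: the liveness step, which carries all of the probabilistic content, is asserted rather than proved.

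Your last step, deriving persistence from liveness, is fine and is essentially the paper's argument. The paper uses the window $(w-\sigma,w]$ ending at the tip's mining time. Your window starting at $m(B)$ also works, and it only needs $P$ to be permanent by the window's end, so the stronger claim that $P$ is permanent from $m(P)+\Delta+\tau$ is unnecessary.

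The gap is in liveness, which you leave at ``pick a record of the margin process, then union-bound over a grid.'' The estimates you cite cannot carry this step:
\begin{itemize}
\item Lemma~\ref{thm:effective-height-growth}, Theorem~\ref{thm:effective-time-intervals} and Proposition~\ref{prop:mixed-budget} only give a margin of $c(t-u)-O(K)$. That bound is vacuous on the short intervals around $m(P)$, which are exactly the ones that decide whether a given block is displaced.
\item In Dembo et al., honest progress is a strategy-independent random walk. Here it is not, since $A$ is shaped by the adversary's adaptive carrier releases. So you cannot compute a positive probability that a record block wins all its races.
\item A per-window failure bound of $e^{-cK}$ needs two things: a lower bound on the conditional success probability of a fresh attempt, uniform over strategies and histories, and control of the time and leftover stock of failed attempts. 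Pathwise inequalities plus a union bound supply neither.
\end{itemize}
Your characterization of displacement via pre- and post-mining races is also only asserted. Increments of $H$ need not lie on chains through $P$ until you show, by induction over mining events, that no effective honest success omits $P$.

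\paragraph{Concrete obstruction.} At a candidate time the usable stock is only bounded by $C_SK$, and your candidate rule does not account for this. Suppose the adversary holds that many withheld carriers on $P$'s parent. It can release them back to back right after $P$ is mined, tying $P$ at height $h(P)$ and keeping honest work on invalid ancestry for time of order $\tau C_SK$. Meanwhile a valid private fork from $P$'s parent needs only two blocks to overtake. So no block with a bounded lead has survival probability bounded away from zero until the stock has been brought down to a constant.

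\paragraph{How the paper closes this.} Its proof is built around exactly this issue:
\begin{itemize}
\item Lemma~\ref{lem:framework-stock-return} returns the stock to a fixed bound $m$, with $\mathbb E[e^{q_r(v-u)}\mid\mathcal F_u]\le e^{\alpha S(u)}$.
\item Lemma~\ref{lem:framework-clean-reset} then requests a fixed pattern of well-separated honest successes with no adversarial successes. This event has strategy-independent conditional probability $e^{-\lambda d_*}(\lambda_hw)^n$. It yields a clean lead of $L$ blocks over all valid branches, hidden ones included, with zero stock.
\item Lemma~\ref{lem:framework-clean-race} uses conditional supermartingale estimates to show the lead is never caught with probability at least $3/4$. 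It also gives exponential tails for the duration and leftover stock of a failure.
\item Lemma~\ref{lem:framework-permanent-wait} restarts at stopping times to obtain $\mathbb E[e^{q_p(\Theta_u-u)}\mid\mathcal F_u]\le C_pe^{\alpha S(u)}$.
\item Combining this with the uniform stock bound, Markov's inequality and a grid of windows gives Lemma~\ref{lem:framework-permanent-windows}.
\end{itemize}
Without some substitute for this chain, the step you yourself flag as the main obstacle remains unproved.
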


\begin{theorem}[Block liveness]\label{thm:spv-race-liveness}
Suppose $\lambda_a<\lambda_h/(1+\lambda_h\Delta)$ and $\lambda_a\tau<1$.
There is a constant $C_{\mathrm{live}}>0$, depending only on $(\beta,\lambda,\Delta,\tau)$, such that the following holds under every admissible strategy from genesis.
For every $T\geq0$ and $\kappa\geq1$, set $K:=\kappa+\log(2+\lambda T)$.
For every $u\in[C_{\mathrm{live}}K,T]$, with probability at least $1-e^{-\kappa}$, every interval $(s,s+u]$, where $0\leq s\leq T-u$, contains the mining time of an honest block $B$ such that
\[
  B\in \mathcal{C}_i^{\mathrm{str}}(t)
  \qquad\text{for every honest miner $i$ and every $t\geq s+u$}.
\]
\end{theorem}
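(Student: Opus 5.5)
We prove block liveness by adapting the permanent-block construction of Dembo et al.~\cite{dembo2020everything}, with the valid-ancestry accounting of Theorem~\ref{thm:effective-time-intervals} and Proposition~\ref{prop:mixed-budget} replacing their all-valid race.

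\textbf{Good event.} Fix $0<r<\rho$ with $\lambda_a<r$. This is possible since $\lambda_a<\rho$. Let $\mathcal{G}$ be the intersection of three events, each taken with confidence $\kappa+\log c$ for a small constant $c$ so that $\Pr[\mathcal{G}]\geq1-e^{-\kappa}$:
\begin{itemize}
\item the stock and effective-time bounds of Theorem~\ref{thm:effective-time-intervals};
\item the height bound of Lemma~\ref{thm:effective-height-growth};
\item a uniform Poisson production bound $Z(u,t)\leq(\lambda_a+\varepsilon)(t-u)+C_ZK$ for all $0\leq u\leq t\leq T$.
\end{itemize}
The third bound follows from a union bound over a grid of mesh $O(1/\lambda)$ in $[0,T]$. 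This grid is where $\log(2+\lambda T)$ enters $K$. On $\mathcal{G}$, Proposition~\ref{prop:mixed-budget} with $P=H(t)-H(u)$, $E=C_HK$ and $S(u)\leq C_SK$ gives
\[
H(t)-H(u)-Z_{\mathrm{val}}(u',t)\geq \eta\,(t-u)-C'K
\]
with $\eta>0$, uniformly over intervals. The same holds when $Z_{\mathrm{val}}$ is taken over an earlier start $u'\leq u$, at the cost of the premining term, which is controlled by the stock.

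\textbf{Permanent candidate.} For a window $(s,s+u]$, consider honest blocks mined on valid ancestry in the middle of the window. Call such a block $B$, mined at time $v$, a \emph{loner-type candidate} if two conditions hold. First, for every $u'\leq v$, the valid honest progress on $(u',v]$ strictly exceeds the valid adversarial work on $(u',v]$ plus the usable stock $S(u')$. Second, for every $t\geq v+\Delta+\tau$, the valid height growth on $(v,t]$ strictly exceeds the valid adversarial work on $(v,t]$. Both margins must be counted in the block-race sense, i.e. with one unit per $\Delta$-window, so that concurrent honest blocks are not double counted.

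\textbf{Deterministic step.} The candidate $B$ is permanent from $v+\Delta+\tau$. The argument is by contradiction. Suppose some honest strict chain at time $t'$ excludes $B$. Then it forks below $B$, at some block mined at time $u'<v$. Its valid suffix above $H(u')$ must consist of adversarial valid blocks. Any honest valid blocks on it would have been built on a chain at least as long as the public valid height, which forces the race accounting through the fork. Those adversarial valid blocks are mined after $u'$, or they come from stock. Invalid-ancestry blocks cannot appear on a strict chain. Comparing lengths against the honest progress, which reaches every honest miner within $\Delta$ and is validated within a further $\tau$, contradicts the two margin conditions. The condition $t\geq s+u$ then follows by taking $v+\Delta+\tau\leq s+u$.

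\textbf{Existence (main obstacle).} It remains to show that on $\mathcal{G}$, or on a further event of probability loss at most $e^{-\kappa}$, every window of length $u\geq C_{\mathrm{live}}K$ contains such a candidate. The first thing to try is to define the random walk
\[
W(t)=\text{(valid honest progress)}-Z_{\mathrm{val}}-\tau Z_{\mathrm{bad}}\text{-loss}.
\]
By the display above, $W$ has positive drift $\eta$ up to $O(K)$ fluctuations. Candidates correspond to strict ladder (record) times of $W$ that are also future-minimum points. By the uniform drift bound, a window of length $\gg K/\eta$ forces $W$ to rise by more than the fluctuation allowance. Hence some honest progress event in the middle third is both a past record and never undercut later. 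This is the standard ``two-sided record'' argument.

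The difficulty is that the adversary's allocation between carriers and valid work is adaptive. It is therefore not enough to handle only a fixed split of the adversarial budget. The remedy is the mixed-budget inequality: it bounds the combined coefficient by $\max\{1,r\tau\}$ for every split, so the drift is uniform over strategies. For the future condition, which extends beyond $T$, I would use the positive drift together with a tail sum over times $t>T$ of exponentially small probabilities, giving a convergent union bound. Choosing $C_{\mathrm{live}}$ large, depending on $\eta$, $C'$, $\Delta$ and $\tau$, makes each window succeed on $\mathcal{G}$.
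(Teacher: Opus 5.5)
\textbf{There is a genuine gap in your existence step.} On $\mathcal G$ you only have cone bounds of the form $W(t)-W(u)\geq\eta(t-u)-C'K$. Such bounds do not force any time to be both a strict past record and a strict future minimum. For example, a path whose increments repeat the pattern $+1,+1,-1$ satisfies a cone bound with drift $1/3$ and a constant allowance, yet every strict record is later matched or undercut.

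The underlying problem is that your candidate has a lead of only one unit when it is mined. The event that it is never caught afterwards therefore has probability bounded away from zero, but not close to one. After $v$ the process may dip by up to $C'K$ while staying inside the cone, and no union bound over future times changes this. The past half of your condition has the same defect: taken at $u'$ just before $v$, it requires the usable stock to be essentially zero at $v$, whereas $\mathcal G$ only gives $S\leq C_SK$. The two-sided record argument you invoke is inherently probabilistic: it needs fresh randomness after each failed candidate, not merely a cone bound. With an adaptive adversary controlling releases, ties, and the carrier/valid split, there is also no fixed random walk to which it applies directly.

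\textbf{What is missing is a regeneration argument} that turns a constant-probability success into a $1-e^{-\kappa}$ guarantee over windows of length $\Theta(K)$. The paper builds this by conditioning at stopping times:
\begin{enumerate}
\item It returns the stock to a fixed bound $m$, with exponential moment $e^{\alpha S(u)}$ (Lemma~\ref{lem:framework-stock-return}).
\item It uses a fixed positive-probability mining pattern to create a \emph{clean lead} (Lemma~\ref{lem:framework-clean-reset}). This means zero stock, a height plateau of length $\Delta+\tau$, and every valid chain omitting the new block $B$, hidden or honest, at least $L$ below $H$.
\item It shows that this lead survives forever with probability at least $3/4$, and that failures have exponential tails in both duration and leftover stock (Lemma~\ref{lem:framework-clean-race}).
\item It restarts at finite crossing times to obtain $\mathbb E[e^{q_p(\Theta_u-u)}\mid\mathcal F_u]\leq C_pe^{\alpha S(u)}$ (Lemma~\ref{lem:framework-permanent-wait}).
\end{enumerate}
Markov's inequality on a deterministic grid of windows of length $C_WK$, combined with the uniform stock bound, then gives liveness with $C_{\mathrm{live}}=2C_W$.

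\textbf{The clean-lead formulation also repairs your deterministic step.} Your claim that the valid suffix of a competing chain consists only of adversarial blocks is not justified. Honest miners can extend a competing branch that is longest in their current view, because of delay, pending chains, or adversarial tie-breaking. This is especially so during the $\Delta+\tau$ lag before $B$'s chain is received and validated everywhere. So growth of $H$ need not occur on chains containing $B$. The paper instead bounds every competing valid branch by $H(z)-L+Z_{\mathrm{val}}(z,t)$ and compares this with the height $H(t-D)$ available to all honest miners. It then rules out, by induction while $G_z<L$, the first effective honest success omitting $B$.
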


\noindent\emph{Proof sketch for Theorems~\ref{thm:spv-race-persistence} and~\ref{thm:spv-race-liveness}.}  Both proofs rely on a bound on the waiting time for a new permanent honest block.
We first use the stock contraction underlying Theorem~\ref{thm:effective-time-intervals} to bring the stock below a fixed bound depending only on the model parameters.
A fixed pattern of sufficiently separated honest successes, with no new adversarial successes, then has positive probability of exhausting or outgrowing this stock and giving a new honest block a fixed lead over every valid branch omitting it, including hidden branches.

While this lead holds in every honest miner's view, honest mining on valid ancestry extends the new block, so valid competing branches gain new blocks only through adversarial mining.
The shared-budget comparison gives honest progress a positive rate advantage after accounting for both carrier losses and valid competing work.
Concentration bounds therefore imply that a sufficiently large lead has a fixed positive probability of surviving forever.
We restart whenever the comparison ceases to certify the lead.
Bounds on the durations of failed attempts and resets give an exponential waiting-time bound in terms of the starting stock.
Combining this with uniform stock control and a union bound over a grid of windows shows that every sufficiently long interval contains a new honest block that is permanent by its end, proving block liveness.

For persistence, consider a $\sigma$-confirmed block $B$ on a strict chain whose tip was mined at $w$.
Liveness supplies an honest block $N$ mined in $(w-\sigma,w]$ and permanent by $w$.
Since the chain is observed at or after $w$, it contains both $B$ and $N$.
Moreover, $N$ was mined after $B$, so $B\preceq N$.
Every future strict chain contains $N$ and hence the entire prefix ending at $B$.

Next, we derive the common-prefix property from Theorem~\ref{thm:spv-race-persistence}.

\begin{theorem}[Common prefix]\label{thm:spv-backbone-common-prefix}
Suppose $\lambda_a<\lambda_h/(1+\lambda_h\Delta)$ and $\lambda_a\tau<1$.
There is a constant $C_{\mathrm{cp}}>0$, depending only on the model parameters, such that, for every $T\geq0$ and $\kappa\geq1$, under every admissible strategy from genesis, with probability at least $1-e^{-\kappa}$,
\[
  \bigl(\mathcal{C}_i^{\mathrm{str}}(t_1)\bigr)^{\lceil k}
  \preceq \mathcal{C}_j^{\mathrm{str}}(t_2)
\]
holds simultaneously for all honest miners $i,j$, all $0\leq t_1\leq t_2\leq T$, and every integer $k\geq k_0:=\lceil C_{\mathrm{cp}}K\rceil$, where $K=\kappa+\log(2+\lambda T)$.
\end{theorem}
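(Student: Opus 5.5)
We derive common prefix from mining-time persistence (Theorem~\ref{thm:spv-race-persistence}) by converting block depth into mining-time depth. Fix $i$, $t_1\le t_2\le T$, and $k\ge k_0$. If $|\mathcal{C}_i^{\mathrm{str}}(t_1)|\le k$, the pruned chain is genesis and there is nothing to prove. Otherwise let $B$ be the tip of $(\mathcal{C}_i^{\mathrm{str}}(t_1))^{\lceil k}$ and let $D$ be the tip of $\mathcal{C}_i^{\mathrm{str}}(t_1)$. Then $D$ lies exactly $k$ blocks above $B$ on a single chain. If we can show $m(D)-m(B)\ge\sigma$ with $\sigma:=C_{\mathrm{pers}}K'$, where $K'$ is the persistence parameter at an adjusted confidence level, then $B$ is $\sigma$-confirmed in $\mathcal{C}_i^{\mathrm{str}}(t_1)$. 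Persistence then gives that the prefix ending at $B$ is a prefix of $\mathcal{C}_j^{\mathrm{str}}(t_2)$ for every honest $j$, which is the claim.

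The conversion uses a uniform production bound. All blocks on the segment strictly above $B$ up to $D$ were mined in $(m(B),m(D)]$ and have distinct heights. Hence the total mining count $N(m(B),m(D)]$ of the rate-$\lambda$ Poisson process (honest plus adversarial successes) is at least $k$. We then need the following event: every interval $(s,s+\sigma]\subseteq[0,T]$ contains fewer than $k_0$ successes. Since a chain observed at $t_1\le T$ consists of blocks mined by $T$, the relevant interval lies in $[0,T]$. Under this event, $k\ge k_0$ successes in $(m(B),m(D)]$ force $m(D)-m(B)>\sigma$. To establish the event, we cover $[0,T]$ by a grid of $\lceil T/\sigma\rceil+1$ windows of length $2\sigma$, so that every window of length $\sigma$ lies in one grid window. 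We apply a Chernoff bound: $\Pr[\mathrm{Poisson}(2\lambda\sigma)\ge k_0]\le e^{-k_0}$ once $k_0\ge c\,\lambda\sigma$ for a suitable absolute constant $c$ (say $2e^2$). A union bound over the grid costs a factor $O(1+\lambda T)$, which $K$ absorbs because $K$ contains $\log(2+\lambda T)$.

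The last step is bookkeeping of constants and probabilities. We apply persistence with $\kappa+\log 2$ in place of $\kappa$, which gives failure probability $e^{-\kappa}/2$. The parameter $K'=K+\log 2$ is at most $2K$ since $\kappa\ge1$, so we take $\sigma=2C_{\mathrm{pers}}K$. We choose $C_{\mathrm{cp}}$ so that $k_0=\lceil C_{\mathrm{cp}}K\rceil\ge c\lambda\sigma$ and so that the production-bound failure probability is also at most $e^{-\kappa}/2$. Both conditions are met by $C_{\mathrm{cp}}=\max\{2c\lambda C_{\mathrm{pers}},\,c'\}$ for a suitable $c'$. A union bound yields total failure probability at most $e^{-\kappa}$, simultaneously for all $i,j,t_1,t_2,k$, because both events are uniform.

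The main obstacle is a subtlety in the definition of $\sigma$-confirmation. Persistence requires the tip of the observed strict chain to be mined at least $\sigma$ after $B$, and $D$ could be an adversarial block whose creation time was chosen strategically. This causes no problem, because the Poisson count bounds all successes regardless of who mined them or when they were released. One should still check that persistence as stated allows adversarial tips. It does, since liveness supplies the honest block $N$ inside $(m(D)-\sigma,m(D)]$ irrespective of who mined $D$.
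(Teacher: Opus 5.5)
Your proposal is correct and follows essentially the same route as the paper. Both apply persistence at confidence $\kappa+\log 2$ together with a uniform bound on all mining successes in windows of length $\sigma$, so any non-genesis block surviving the pruning is $\sigma$-confirmed; the differences are cosmetic (a Chernoff bound on a $2\sigma$-grid instead of the paper's $1/\lambda$-grid count lemma, and treating every $k\geq k_0$ directly instead of via monotonicity of pruning), except that you should state the count event for every subinterval of $[0,T]$ of length at most $\sigma$, which your grid already delivers, since the literal event over windows $(s,s+\sigma]\subseteq[0,T]$ is vacuous when $T<\sigma$.
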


\noindent\emph{Proof sketch.}
We convert the confirmation duration in persistence into a block depth by bounding how many blocks can be mined during that duration.
Apply persistence and the uniform total-count bound with confidence parameter $\kappa+\log2$, and take $\sigma=C_{\mathrm{pers}}(K+\log2)$.
Choosing $C_{\mathrm{cp}}$ sufficiently large ensures that, on their common event of probability at least $1-e^{-\kappa}$, every window $((w-\sigma)^+,w]$ with $w\leq T$ contains at most $k_0$ mining successes.

Now prune $k_0$ blocks from any observed strict chain.
If a non-genesis block $B$ remains at its tip, then $B$ and its $k_0$ descendants represent $k_0+1$ distinct mining successes.
They cannot all lie in the last $\sigma$ time units before the original tip was mined.
Since every descendant was mined after $B$, this forces $B$ to precede that tip by at least $\sigma$ time units.
Thus $B$ is $\sigma$-confirmed, and persistence preserves the retained prefix in every later strict chain.
Pruning additional blocks preserves the prefix relation, so the same event covers every $k\geq k_0$.

\noindent\textbf{Observation window $T$.} The observation cutoff $T$ limits the confirmations and creation windows covered simultaneously.
Retention of their permanent blocks extends to every future time.
This distinction is consistent with the per-transaction and finite-ledger guarantees discussed by Dembo et al.~\cite{dembo2020everything}.
Our proof obtains the simultaneous guarantee directly from the stock and waiting-time bounds.
Taking $T\to\infty$ does not yield a fixed confirmation window for an infinite ledger, because the required window grows with $K$.

\section{The Carrier Attack and Resilience Upper Bound}\label{sec: analysis-carrier}
We now give an admissible execution in which carriers eventually displace a block at any prescribed finite confirmation depth when $\lambda_a\tau>1$.
The result concerns eventual success strictly above this boundary.
It does not establish success at equality or a finite-horizon error exponent.

\subsection{Strategy and queue dynamics}
Start with a shared, fully validated public chain containing a non-genesis target block $b$ at the prescribed confirmation depth.
This state is attainable from genesis by delivering and validating valid blocks immediately and waiting for the required honest blocks.
Split the adversarial power into a carrier share $\beta_c$ and a private-fork share $\beta-\beta_c$, where
$\frac{1}{\lambda\tau}<\beta_c<\beta$.
The resulting independent mining rates are $c=\beta_c\lambda$ and $\mu=(\beta-\beta_c)\lambda>0$.
The private fork extends the parent of $b$ and therefore excludes $b$.
The adversary delivers released blocks immediately, schedules immediate validation of valid blocks, and schedules rejection of each carrier exactly $\tau$ after receipt at every honest miner.

When no carrier is available, honest miners extend the valid public tip and the adversary waits for a carrier mining success on that tip.
When this success occurs, the adversary releases the carrier to all honest miners.
Let $P$ be its valid parent.
During the next $\tau$ time units, all honest miners extend the carrier's invalid ancestry, while carrier miners produce and withhold further invalid siblings with parent $P$.
At rejection, every honest descendant of the active carrier becomes ineligible.
If another carrier is available, release it immediately and start another window.
Otherwise, honest miners resume work on valid ancestry until a new carrier is mined on the then-current public tip.

A \emph{busy period} is a consecutive sequence of windows with a nonempty carrier queue.
Let $Q_n$ count the active carrier and the unused siblings at the start of window $n$.
Until the queue reaches zero,
\begin{equation}
 Q_{n+1}=Q_n-1+N_n,\qquad
 N_n\overset{\mathrm{iid}}{\sim}\operatorname{Poisson}(c\tau),\qquad Q_0=1.
 \label{eq:carrier-stock-recursion}
\end{equation}
At each rejection boundary, if the private chain is strictly longer than the valid public chain, the adversary publishes it instead of releasing another carrier.
The invalid optimistic descendants have already disappeared, so they do not obstruct adoption of this longer valid chain.

\begin{theorem}[Eventual carrier success]\label{thm:carrier-success}
For every fixed $\beta\in[0,1)$ satisfying $\beta\lambda\tau>1$ and every prescribed finite initial confirmation depth, the strategy above removes $b$ from every honest strict chain with probability one, starting from the stated shared-chain setup.
\end{theorem}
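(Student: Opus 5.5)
The queue recursion~\eqref{eq:carrier-stock-recursion} is a random walk with i.i.d.\ increments $N_n-1$ of mean $c\tau-1>0$, since $\beta_c\lambda\tau>1$. I will combine three facts:
\begin{itemize}
\item a busy period has positive probability of never ending;
\item once a busy period is infinite, honest valid height is frozen;
\item the private fork at rate $\mu>0$ then overtakes the frozen public chain after finitely many windows.
\end{itemize}

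\textbf{Step 1: busy periods can be infinite.} By standard Galton--Watson/random-walk theory (or the strong law of large numbers applied to $Q_n=1+\sum_{m<n}(N_m-1)$), the event $\{Q_n\geq1\ \forall n\}$ has probability $\theta>0$, and on that event $Q_n\to\infty$ almost surely. Each finite busy period ends after an almost surely finite number of windows. Between busy periods, the wait for a new carrier success on the public tip is exponential with rate $c$, hence almost surely finite.

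\textbf{Step 2: finite busy periods are tried independently until one is infinite.} The queue increments in distinct busy periods use disjoint stretches of the carrier Poisson process. By the strong Markov property at the start of each busy period, successive busy periods are i.i.d.\ trials, each infinite with probability $\theta$. With probability one, some busy period, say the $J$-th with $J<\infty$, is infinite.

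\textbf{Step 3: during the infinite busy period, honest valid height stays constant.} Throughout every window, all honest miners extend the active carrier's invalid ancestry, because the carrier is released on the current public tip and is at least as long as any eligible chain. It is rejected exactly $\tau$ after receipt, and the next sibling is released at once with the same valid parent $P$. So $a(t)=0$ during the busy period. Up to receipt-time ties handled by the adversary's tie-breaking, no honest block on valid ancestry is produced, and the public valid height $H$ equals the fixed value $H_J$ it had when the busy period began.

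\textbf{Step 4: the private fork overtakes the frozen public chain.} Let $h(b)=\ell_b$. The private fork grows as an independent Poisson process of rate $\mu$ from height $\ell_b-1$. Before busy period $J$, the public chain grows only during finite intervals, so $H_J<\infty$ almost surely. During the infinite busy period, the private fork's height tends to infinity almost surely. At some rejection boundary it is therefore strictly longer than $H_J$. The adversary publishes it at that boundary with immediate delivery and validation. Every honest strict chain is then a longest valid chain. Since the published private chain is strictly longer than any valid chain containing $b$, it is selected, and it excludes $b$.

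This argument does not rely on the adversary stopping early. The prescribed initial depth enters only through the finite starting gap.

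\textbf{Step 5: $b$ stays removed.} I still need to check that $b$ is removed from every honest strict chain permanently. After publication, the adversary can keep the removal by continuing to extend its fork with all its power. A simpler route is available if the theorem's ``removes'' means at some time. I expect that reading, and would state the success event as the existence of a time at which no honest strict chain contains $b$.

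\textbf{Main obstacle.} The hardest part is making Step 3 rigorous at window boundaries. Rejection and release must happen simultaneously at every honest miner so that no honest success lands on valid ancestry in between, and equal-length ties must resolve toward the new carrier. If zero-length gaps are not allowed, I would fall back to a coupling argument. Honest valid growth during the infinite busy period would then be at most a Poisson number of successes arising in vanishing gaps, and that is still dominated by the private fork's linear growth.
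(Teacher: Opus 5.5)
Your proposal is correct and is essentially the paper's proof. The shared steps are: Galton--Watson survival of a busy period with offspring mean $c\tau>1$; geometric failure of restarts begun at stopping times; a valid public height frozen at $h(P)$ during the surviving busy period; and overtaking by the rate-$\mu$ private fork, published at the next rejection boundary, with success meaning, as in the paper, that every honest strict chain omits $b$ at that time.

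Your window-boundary worry is moot in this model, because the strategy releases the next sibling at the very instant of rejection with immediate delivery, and parent selection is predictable. Your positive-gap fallback would not work as stated, since honest valid blocks mined in the gaps would eventually leave the queued siblings of $P$ too short to attract honest mining. Finally, the paper makes your ``does not rely on stopping early'' remark precise through an auxiliary execution that never publishes and agrees with the actual one up to the first qualifying rejection boundary.
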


\noindent\emph{Proof sketch.}
By Eq.~\ref{eq:carrier-stock-recursion}, each window consumes one carrier and adds $N_n$ new ones.
Viewing these new carriers as offspring gives a branching process with mean $c\tau>1$.
Thus the queue has a fixed positive probability of never emptying, allowing the busy period to continue indefinitely.

When a busy period ends, the strategy waits for a new carrier on the current valid public tip and restarts.
This waiting time is finite almost surely, and fresh Poisson increments give each restart the same positive survival probability.
The probability of successive failures therefore decreases geometrically, so almost surely a busy period begins that can sustain suppression indefinitely.

Throughout such a period, honest miners extend invalid ancestry, keeping the valid public height at $h(P)$.
The private fork grows at rate $\mu>0$ and therefore eventually exceeds this finite height.
At the next rejection boundary, the prescribed publication and validation make every honest miner adopt the strictly longer valid chain, which omits $b$.
The full proof appears in Appendix~\ref{app:carrier-success}.

\begin{figure}[t]
  \centering
  \captionsetup{font=small,labelfont=bf,labelsep=period,skip=5pt}
  \captionsetup[subfigure]{font=small,labelformat=parens,labelsep=space,skip=3pt}
  \begin{subfigure}[t]{0.230\linewidth}
    \centering
    \begin{tikzpicture}[x=1cm,y=1cm,
      blk/.style={draw,fill=white,minimum width=0.32cm,minimum height=0.32cm,inner sep=0pt},
      adv/.style={blk,fill=black},
      lab/.style={font=\scriptsize,inner sep=1pt},
      note/.style={lab,align=center}]
      \scriptsize
      \path[use as bounding box] (0,0) rectangle (2.76,4.40);

      \node[lab] at (1.38,4.10) {At crossover};
      \node[note] at (0.42,3.64) {one\\carrier};
      \node[note] at (1.96,3.64) {one expected\\unit};
      \node[adv] (CE) at (0.42,2.99) {};
      \draw[white,line width=0.55pt]
        (CE.north west) -- (CE.south east)
        (CE.north east) -- (CE.south west);
      \draw[fill=white] (1.46,2.84) rectangle (2.42,3.14);
      \draw (1.48,2.86) -- (2.40,3.12)
            (1.48,3.12) -- (2.40,2.86);
      \draw[-{Latex[length=2.4pt,width=3pt]},shorten <=1pt,shorten >=1pt]
        (CE.east) -- (1.46,2.99)
        node[lab,midway,above=2pt] {revokes};

      \draw[densely dotted,black!35] (0.08,2.56) -- (2.68,2.56);
      \node[lab] at (1.38,2.28) {Just past crossover};
      \node[note] at (0.42,1.82) {one\\carrier};
      \node[note] at (1.97,1.82) {one unit, plus\\a little more};
      \node[adv] (CB) at (0.42,1.11) {};
      \draw[white,line width=0.55pt]
        (CB.north west) -- (CB.south east)
        (CB.north east) -- (CB.south west);
      \draw[fill=white] (1.46,0.96) rectangle (2.60,1.26);
      \draw (2.42,0.96) -- (2.42,1.26);
      \draw (1.48,0.98) -- (2.40,1.24)
            (1.48,1.24) -- (2.40,0.98)
            (2.44,0.98) -- (2.58,1.24)
            (2.44,1.24) -- (2.58,0.98);
      \draw[-{Latex[length=2.4pt,width=3pt]},shorten <=1pt,shorten >=1pt]
        (CB.east) -- (1.46,1.11)
        node[lab,midway,above=2pt] {revokes};
      \draw[decorate,decoration={brace,mirror,amplitude=2pt}]
        (2.42,0.84) -- (2.60,0.84);
      \node[lab] at (2.49,0.58) {extra};
    \end{tikzpicture}\unskip\par
    \caption{Crossover}
    \label{fig:revoke-crossover}
    \label{fig:mixed}
  \end{subfigure}\hfill
  \begin{subfigure}[t]{0.745\linewidth}
    \centering
    \begin{tikzpicture}[x=1cm,y=1cm,
      blk/.style={draw,fill=white,minimum width=0.30cm,minimum height=0.30cm,inner sep=0pt},
      adv/.style={blk,fill=black,text=white},
      link/.style={-{Latex[length=2.1pt,width=2.8pt]},shorten >=0.7pt,shorten <=0.7pt},
      lab/.style={font=\scriptsize,inner sep=1pt},
      note/.style={lab,align=center},
      brace/.style={decorate,decoration={brace,mirror,amplitude=2pt}}]
      \scriptsize
      \path[use as bounding box] (0,0) rectangle (9.04,4.40);

      \node[lab,anchor=west] at (0.02,4.23) {\textbf{1} Confirm};
      \node[lab,anchor=west] at (1.89,4.23) {\textbf{2} Revoke \& private mine};
      \node[lab,anchor=west] at (6.80,4.23) {\textbf{3} Reverse};
      \draw[-{Latex[length=2.4pt,width=3pt]},black!60] (5.90,4.23) -- (6.48,4.23);
      \draw[densely dotted,black!40] (1.69,0.65) -- (1.69,3.98);
      \draw[densely dotted,black!40] (6.63,0.65) -- (6.63,3.98);

      \node[note] at (0.84,3.57) {honest strict\\chain};
      \node[blk] (Sg) at (0.16,3.04) {};
      \node[blk] (Sb) at (0.65,3.04) {$b$};
      \node[blk] (St) at (1.50,3.04) {};
      \draw[link] (Sb) -- (Sg);
      \draw[link] (St) -- (Sb);
      \node[lab,fill=white,inner sep=0.3pt] at (1.12,3.04) {$\cdots$};
      \draw[brace] (0.81,2.77) -- (1.66,2.77);
      \node[lab] at (1.25,2.48) {$k$};

      \node[lab] at (4.25,3.87)
        {$\frac{1}{\lambda\tau}<\beta=\frac{1}{\lambda\tau}+\epsilon
          <\beta_{\mathrm{pa}}(\lambda\Delta)$};
      \node[blk] (Mg) at (1.99,2.29) {};
      \node[blk] (Mb) at (2.52,3.04) {$b$};
      \node[blk] (Mt) at (3.43,3.04) {};
      \node[adv] (Mc) at (4.05,3.04) {};
      \node[blk] (Mh1) at (4.64,3.04) {};
      \node[blk] (Mh2) at (5.94,3.04) {};
      \draw[link] (Mb) -- (Mg);
      \draw[link] (Mt) -- (Mb);
      \draw[link] (Mc) -- (Mt);
      \draw[link] (Mh1) -- (Mc);
      \draw[link] (Mh2) -- (Mh1);
      \node[lab,fill=white,inner sep=0.3pt] at (2.99,3.04) {$\cdots$};
      \node[lab,fill=white,inner sep=0.3pt] at (5.34,3.04) {$\cdots$};
      \node[lab] at (2.76,3.49) {current chain};
      \node[lab] at (4.10,3.49) {carrier};
      \node[lab] (Mminers) at (5.65,3.49) {honest miners};
      \draw[-{Latex[length=2.1pt,width=2.8pt]},densely dotted,shorten >=1pt]
        (Mminers.south east) -- (Mh2.north);
      \draw[white,line width=0.55pt]
        (Mc.north west) -- (Mc.south east)
        (Mc.north east) -- (Mc.south west);
      \foreach \i in {1,2}
        \draw (Mh\i.north west) -- (Mh\i.south east)
              (Mh\i.north east) -- (Mh\i.south west);
      \draw[brace] (3.89,2.76) -- (6.10,2.76);
      \node[lab] at (5.00,2.51) {revoked};

      \node[lab] at (4.67,2.22) {carriers: $1/(\lambda\tau)+\epsilon/2$};
      \draw[-{Latex[length=2.4pt,width=3pt]},black!60] (2.94,1.86) -- (6.31,1.86);
      \node[adv] (Tc1) at (3.15,1.86) {};
      \node[adv] (Tc2) at (4.30,1.86) {};
      \foreach \i in {1,2}
        \draw[white,line width=0.55pt]
          (Tc\i.north west) -- (Tc\i.south east)
          (Tc\i.north east) -- (Tc\i.south west);
      \node[lab,fill=white] at (5.51,1.86) {$\cdots$};
      \node[lab,fill=white] at (6.01,1.86) {time};
      \draw[brace] (3.15,1.59) -- (4.30,1.59)
        node[lab,midway,below=3pt] {$\tau$};
      \draw[brace] (4.30,1.59) -- (5.45,1.59)
        node[lab,midway,below=3pt] {$\tau$};

      \node[adv] (Mp1) at (2.52,1.01) {};
      \node[adv] (Mp2) at (3.54,1.01) {};
      \draw[link] (Mp1) -- (Mg);
      \draw[link] (Mp2) -- (Mp1);
      \node[lab,fill=white,inner sep=0.3pt] at (3.07,1.01) {$\cdots$};
      \node[lab,anchor=west] at (3.90,1.01) {private: $\epsilon/2$};

      \node[lab] at (7.96,3.49) {old strict chain};
      \node[blk] (Rg) at (6.96,2.29) {};
      \node[blk] (Rb) at (7.30,3.04) {$b$};
      \node[blk] (Rt) at (8.27,3.04) {};
      \draw[link] (Rb) -- (Rg);
      \draw[link] (Rt) -- (Rb);
      \node[lab,fill=white,inner sep=0.3pt] at (7.82,3.04) {$\cdots$};
      \foreach \x/\j in {7.30/1,8.27/2,8.79/3}
        \node[adv] (Rp\j) at (\x,1.57) {};
      \draw[link] (Rp1) -- (Rg);
      \draw[link] (Rp2) -- (Rp1);
      \draw[link] (Rp3) -- (Rp2);
      \node[lab,fill=white,inner sep=0.3pt] at (7.82,1.57) {$\cdots$};
      \node[note] at (8.30,2.43) {eventually\\longer};
      \draw[-{Latex[length=1.8pt,width=2.5pt]},black!60,shorten >=0.5pt]
        (8.79,2.10) -- (Rp3.north);
      \node[note] at (7.99,0.98) {confirmed $b$\\reversed};

      \draw[densely dotted,black!35] (0.04,0.53) -- (8.99,0.53);
      \node[lab] at (4.52,0.23)
        {Growth rates during suppression: valid public $0$, private $\epsilon\lambda/2>0$};
    \end{tikzpicture}\unskip\par
    \caption{Attack}
    \label{fig:realistic-scenario}
    \label{fig:mixed-execution}
    \label{fig:mixed-asymptotic}
  \end{subfigure}
  \caption{Crossover and carrier attack. White denotes honest work and black denotes adversarial work.
  (a)~Reference-rate accounting at and just beyond $\beta_{\mathrm{pa}}(\lambda\Delta)\lambda\tau=1$.
  (b)~The carrier-dominant regime, with $\beta=1/(\lambda\tau)+\epsilon<\beta_{\mathrm{pa}}(\lambda\Delta)$.
  After an infinite busy period begins under continued suppression, the valid public height is constant while the private fork keeps growing. At a rejection boundary, publishing the longer private fork removes $b$.}
  \label{fig:carrier-attack}
\end{figure}

\subsection{Comparison with ordinary private mining}
Ordinary private mining remains an admissible special case when all blocks are valid and validation is immediate.
In the fully decentralized regime, its known threshold is~\cite{dembo2020everything}
$\beta_{\mathrm{pa}}(\lambda\Delta)
=\frac{2}{2+\lambda\Delta+\sqrt{4+(\lambda\Delta)^2}}$.
This expression concerns that miner-power regime, rather than every fixed distribution of honest power.
Together with Theorem~\ref{thm:carrier-success}, it gives the upper bound
$\min\{\beta_{\mathrm{pa}}(\lambda\Delta),1/(\lambda\tau)\}$
on asymptotic resilience for security guarantees covering both attacks.
For $\tau=0$, the carrier term is interpreted as $+\infty$.
The two boundaries coincide when $\beta_{\mathrm{pa}}(\lambda\Delta)\lambda\tau=1$.
At the reference progress rate $\beta_{\mathrm{pa}}(\lambda\Delta)\lambda$, a duration $\tau$ corresponds to one unit of progress at this crossover, and to more than one for larger $\tau$.
Figure~\ref{fig:carrier-attack} illustrates this rate comparison and the carrier-dominant regime.
The queue proof uses the strict inequality $\beta_c\lambda\tau>1$, rather than treating equality of average supply and consumption as sufficient for an infinite busy period.
The sufficient condition $\lambda_a<\rho$ is equivalent to
$\beta<\beta_{\mathrm{pa}}(\lambda\Delta)$, and $\lambda_a\tau<1$ is equivalent to $\beta<1/(\lambda\tau)$ when $\tau>0$.
The stock and permanence theorems therefore give security strictly below the displayed upper boundary.
The thresholds match in the fully decentralized regime. We do not settle behavior at equality.

\section{Related Work}\label{sec: related-work}
\label{sec: related-work-permissionless-consensus}
\label{sec: related-work-spv-mining}

\noindent\textbf{Consensus security.} Analyses of Nakamoto consensus establish security under synchrony~\cite{garay2015bitcoin} and bounded network delay~\cite{pass2017analysis,garay2024bitcoin,kiffer2018better,ren2019analysis}.
Later work gives tight asymptotic resilience bounds under the respective models~\cite{dembo2020everything,gavzi2020tight,blake2026rigorous}, matching the ordinary private-chain attack threshold in the fully decentralized regime~\cite{dembo2020everything}.
Finite-confirmation analyses quantify safety probabilities~\cite{li2021close,guo2022bitcoin,gavzi2023practical} and prove a bait-and-switch attack optimal in a particular bounded-delay model~\cite{cao2025beat}.

\noindent\textbf{Resources and participation.} Permissionless consensus models distinguish resource and participation assumptions~\cite{lewis2020resource,lewis2023permissionless,terner2022permissionless}.
Stake-based protocols~\cite{daian2019snow,kiayias2017ouroboros} determine eligibility, but eligible participants need not be online~\cite{lewis2023permissionless}.
Protocols for changing participation~\cite{pass2017sleepy,goyal2021instant,momose2022constant,malkhi2023towards,d2024goldfish,tang2026communication} study guarantees under varying availability.
Our model fixes aggregate mining rates to isolate delayed validation.

\noindent\textbf{Processing and validation.} Kiffer et al.~\cite{kiffer2024nakamoto} analyze bounded processing capacity, where queues and processing policies affect security and honest miners process a chain before extending it.
Tuxedo~\cite{das2021tuxedo} modifies the protocol to overlap processing and mining, allowing transaction validation to lag by a bounded number of blocks.
Our model permits extending a received block before validation completes and assumes a load-independent bound on validation latency.

\noindent\textbf{Mining before validation.} Cao et al.~\cite{cao2023leveraging} analyze a double-spending attack that divides adversarial mining between a distraction chain and a competing private chain.
Their miners may extend headers before receiving payloads, and the distraction relies on withholding those payloads.
Our model assumes that payloads have arrived and bounds the remaining validation time by $\tau$.
We analyze how rejecting invalid ancestry discards honest work.
Manifoldchain~\cite{che2024manifoldchain} uses predictive mining on multiple verified and unverified parents in a sharded protocol, whereas we retain a single-parent longest-chain rule under which rejecting an invalid ancestor excludes all its descendants from eligible chains.

\noindent\textbf{Incentives.} The verifier's dilemma~\cite{luu2015demystifying} and quantitative studies~\cite{alharby2020data,smuseva2022verifier,smuseva2025verifier,alharby2024quantitative,wadhwa2026blockchain} examine validation costs, mining policies, and rewards.
Blockchain denial-of-service (BDoS)~\cite{mirkin2020bdos} studies how withholding block data affects incentives to continue mining.
Our security analysis takes the honest mining policy as given and does not establish that it is an equilibrium.

\section{Conclusion}\label{sec: discussion}

We show that SPV mining, or mining before validation, can reduce the
adversarial mining power needed to break Nakamoto consensus below the
level predicted by network delay alone.
Specifically, when all honest miners follow this policy, an adversary that
produces more than one block on average during the allowed validation time
can repeatedly draw honest mining onto invalid branches.
Rejecting those branches erases honest work while the adversary's private
chain grows, eventually allowing it to replace the honest chain.
Under a validation-time bound independent of processing load, our security
guarantees and matching attacks give an asymptotically tight resilience
threshold as each honest miner's share of mining power becomes negligible.

We further establish security and attack bounds for fixed mixtures of
honest miners who extend only validated blocks and those who also extend
blocks awaiting validation.
These bounds depend on each policy's share of honest mining power.
We leave the tight resilience threshold for mixed policies and the effects
of validation queues and processing limits for future work.

\bibliographystyle{splncs04}
\bibliography{refs}

\appendix
\renewcommand{\theHsection}{\Alph{section}}
\section{Proofs of Theorems in Sections~\ref{sec: analysis} and~\ref{sec: analysis-carrier}}
\label{sec: proof-analysis}

\spnewtheorem*{restatedintervallemma}{Lemma~\noexpand\ref{lem:interval-charging}}{\bfseries\upshape}{\itshape}
\spnewtheorem*{restatedheightlemma}{Lemma~\noexpand\ref{thm:effective-height-growth}}{\bfseries\upshape}{\itshape}

We follow the roadmap in Section~\ref{sec: analysis}. We first prove Lemmas~\ref{lem:interval-charging} and~\ref{thm:effective-height-growth} from Section~\ref{sec: analysis}.
We then use their loss and height bounds to control unpublished stock and establish chain growth.
For the remaining properties, we return to a fixed stock bound and create an honest block with a lead over every valid branch omitting it.
The shared-budget comparison controls the survival of this lead and the cost of restarting after a failed comparison.
This gives a permanent honest block in every sufficiently long interval.
We then prove block liveness and mining-time persistence, and convert the latter into common prefix.
Finally, we prove the eventual carrier success claimed in Section~\ref{sec: analysis-carrier}.

Throughout the appendix, $\mathcal F_t$ denotes the complete execution history through $t$, including unpublished blocks, objective validity, deliveries, and completed validations.
The selected mining parents are predictable with respect to this history, and the independent Poisson mining processes retain their rates after every finite stopping time.
All constants are uniform over admissible strategies, the number of honest miners, and their fixed power shares.
Their permitted dependence on rates and confidence parameters is specified where they are introduced.

\noindent\textbf{Roadmap for the stock bound.}
We prove Theorem~\ref{thm:effective-time-intervals} in five steps. Fix a starting time $u$, calling blocks mined by $u$ old and later blocks new.

\noindent\emph{1. Account for consumption and height growth.}
The proof of Lemma~\ref{lem:interval-charging} sets aside a fixed boundary allowance and charges the remaining loss of effective time to distinct adversarial blocks that become public.
Each charge accounts for at most $\tau$ loss, and old and new consumption are counted separately.
Lemma~\ref{thm:effective-height-growth} converts the remaining effective time into growth of $H$. Its proof also supplies the conditional estimate needed at finite stopping times.
Lemma~\ref{lem:strict-height-bridge} transfers this height to strict chains for the later chain-growth argument.

\noindent\emph{2. Determine when old stock becomes unusable.}
Lemma~\ref{lem:effective-stock-clearance} quantifies the tradeoff between keeping old blocks usable and spending them on carriers.
An old block usable at a later time $v$ needs at least $H(v)-H(u)$ old unpublished blocks on its path above $H(u)$.
None can be an old consumed block, as publishing such a block would expose invalid ancestry on the surviving path.
The initial stock must therefore cover both the height increase and old consumption.
If their sum exceeds $S(u)$, every block still usable must be new.

\noindent\emph{3. Bound the stock that replaces it.}
Lemma~\ref{lem:effective-poisson-stopping} bounds new production after such stopping times.
Lemma~\ref{lem:effective-stock-contraction} combines this with the preceding bounds to choose a later comparison time when no old stock remains.
Choose an effective-height rate $r$ with $\lambda_a<r<\rho$.
Waiting for height growth permits replacement at rate $\lambda_a$, below the effective-height rate $r$. Consuming old carriers instead adds at most $\tau$ per block to the elapsed-time bound, with $\lambda_a\tau<1$.
Since the two uses share the same initial stock, the larger replacement ratio $\max\{\lambda_a/r,\lambda_a\tau\}<1$ controls any mixture.
New carriers used to prolong the interval must also be deducted from new production: each adds at most $\tau$ to the time bound but removes one replacement block.
With small rate margins for concentration, these deductions leave at most a fixed fraction of $S(u)$, plus an additive fluctuation allowance.

\noindent\emph{4. Include stock peaks between comparisons.}
Lemma~\ref{lem:effective-stock-uniform} iterates this contraction from empty stock at genesis.
The accumulated allowances form a geometric sum, bounding stock at every comparison time.
Stock can increase only through new production. Lemma~\ref{lem:effective-stock-contraction} bounds both the elapsed time and production before each comparison, thereby covering intervening stock peaks.

\noindent\emph{5. Obtain one event covering $[0,T]$.}
To complete Lemma~\ref{lem:effective-stock-uniform}, a positive minimum duration between comparisons bounds their number before $T$ by a quantity linear in $T$.
Their conditional failure probabilities can then be summed, even though the starts depend on the execution.
Since each failure probability decreases exponentially with the allowance, choosing an allowance of order $K=\kappa+\log(2+\lambda T)$ gives $S(v)\leq C_SK$ uniformly over $[0,T]$ with the required probability.
Substituting this bound into Lemma~\ref{lem:interval-charging} yields the effective-time inequality and completes Theorem~\ref{thm:effective-time-intervals}.

\subsection{Interval loss with explicit stock}
\label{app:effective-time-accounting}

\begin{restatedintervallemma}[Interval loss with explicit stock]
Every admissible execution satisfies, for all $0\leq u\leq t$,
\[
 A(t)-A(u)\geq(t-u)-\tau Z_{\mathrm{bad}}(u,t)-\tau S(u)-(2\Delta+\tau).
\]
From genesis, the sharper inequality $A(t)\geq t-\tau Z_{\mathrm{bad}}(0,t)$ holds.
\end{restatedintervallemma}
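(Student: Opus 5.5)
The plan is a pathwise charging argument: bound the bad mass $\int_u^t (1-a(s))\,ds$ by charging each miner's bad mining instants to distinct adversarial blocks, with each block absorbing at most $\tau$ of weighted loss. Since $a(s)=\sum_i w_i\,\mathbf 1\{\text{miner } i\text{'s parent has fully valid ancestry}\}$, we have $(t-u)-(A(t)-A(u))=\sum_i w_i\,|\{s\in(u,t]: \text{$i$ mines on invalid ancestry}\}|$. It therefore suffices to show, for each fixed miner $i$, that its bad time in $(u,t]$ is at most $\tau$ times the number of distinct charged blocks plus $2\Delta+\tau$. The charged set must also be independent of $i$, so that summing with weights $\sum_i w_i=1$ loses nothing.

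\emph{Per-miner charging.} At a bad instant $s$, miner $i$ extends $\mathcal C_i^{\mathrm{opt}}(s)$, whose tip has an objectively invalid ancestor. Let $X_i(s)$ be the \emph{first} (lowest) invalid block on that chain. Because the status is not $\INVALID$ at $s$, we have $\nu_i(X)>s$, hence $s\in[r_i(X),r_i(X)+\tau)$. Thus for each fixed $X$, the set of bad instants charged to $X$ has measure at most $\tau$. Only adversarial blocks can have validity bit $0$, so $X$ is adversarial.

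\emph{From genesis.} Every such $X$ with $r_i(X)\le t$ was mined by $t$, so it is counted in $Z_{\mathrm{bad}}(0,t)$. The first invalid block on a chain has itself as an invalid ancestor, so it is included in that count. This gives $A(t)\ge t-\tau Z_{\mathrm{bad}}(0,t)$ directly.

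\emph{General interval.} Discard $(u,u+2\Delta+\tau]$ and take $s>u+2\Delta+\tau$. We must show that the first invalid block $X$ charged at $s$ is either mined after $u$ or belongs to the stock $S(u)$. Suppose instead that $X$ was mined by $u$ and is not in $S(u)$. There are two ways this can happen.

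\emph{Case 1: some honest miner received $X$, or an invalid ancestor of it, by time $u$.} Every honest miner then receives it by $u+\Delta$ and rejects it by $u+\Delta+\tau<s$. This contradicts the pending status at $s$, because an invalid ancestor $A\preceq X$ has $\nu_i(A)\le t$.

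\emph{Case 2: $X$ has height $h(X)<H(u)$.} A valid chain of height $H(u)$ is known to some honest miner at time $u$. Every honest miner has it by $u+\Delta$ and has fully validated it by $u+\Delta+\tau$. Miner $i$'s optimistic chain at $s$ must have length at least $H(u)$, but the bad chain might still be longer than $H(u)$ even though $X$ sits lower. The $2\Delta$ slack is meant to handle this. The adjustment is to charge not $X$ itself but the lowest block $Y$ on the bad chain above height $H(u)-1$ that is adversarial and unreceived-by-honest at $u$. Such a $Y$ exists because the honest blocks on the bad path above $X$ were mined and broadcast after $X$'s first honest receipt. That receipt was after $u$ by Case 1, so any bad-path block at height $\ge H(u)$ is either mined after $u$ or is an adversarial block unpublished at $u$. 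If $Y$ was mined after $u$ it has invalid ancestry, so it counts in $Z_{\mathrm{bad}}(u,t)$; otherwise it lies in $S(u)$.

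\emph{Main obstacle.} We must keep the $\tau$-per-block bound once the charge moves from $X$ to $Y$. The plan is to exploit that $Y$ lies on the same path and is received no earlier than $X$. Then $r_i(Y)\ge r_i(X)$ and $s<\nu_i(X)\le r_i(X)+\tau\le r_i(Y)+\tau$. Using the lowest such $Y$ on the path makes the map $s\mapsto Y$ consistent for a given chain. Because the charge map depends only on the path and not on $i$, the charged blocks form one set of size at most $Z_{\mathrm{bad}}(u,t)+S(u)$. Summing the per-miner bounds with weights $w_i$ then gives the lemma.
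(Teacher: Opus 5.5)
Your route differs usefully from the paper's, but it has a gap at its crux: the existence of $Y$ in Case~2. Your reason shows only that each block of the bad path at height at least $H(u)$ is either mined after $u$ or adversarial and hidden at $u$. That disjunction still allows every such block to be an honest block mined after $u$, for instance $X$ at height $H(u)-1$ followed by an honest child. Honest blocks are counted neither by $S(u)$ nor by $Z_{\mathrm{bad}}(u,t)$, which counts adversarial successes only. Your sentence ``if $Y$ was mined after $u$ it counts in $Z_{\mathrm{bad}}(u,t)$'' holds only because you defined $Y$ to be adversarial, so it cannot replace an existence proof. Your Case~1, stated with receipt by time $u$, is also too weak to rule this configuration out. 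An honest miner that receives $X$ during $(u,u+\Delta)$ may not yet hold the valid chain of height $H(u)$, and may legitimately extend $X$ below that height.

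The missing idea is the paper's first-honest-descendant argument, and it needs the full boundary. If some honest miner knew $X$ by $u+\Delta$, every honest miner would reject it by $u+2\Delta+\tau<s$. So at a bad instant after the boundary, no honest miner knew $X$ by $u+\Delta$. The first honestly mined block above $X$ on the path was therefore mined after $u+\Delta$. Its miner's optimistic eligible set then already contained a valid chain of height $H(u)$, so its selected parent has height at least $H(u)$. That parent is adversarial, because every block from $X$ up to it is. If no honest block lies above $X$ on the path, the tip itself is adversarial with height at least $H(u)$. Either way, the path block at height $H(u)$ is adversarial, hidden at $u$ by ancestor closure, and serves as $Y$.

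With this repair your argument goes through. You charge each bad instant to a block on its current path, and $r_i(Y)\geq r_i(X)$ gives at most $\tau$ per charged block and miner, so you need no injectivity across roots. The paper instead charges each first-invalid root once, to its first published adversarial descendant at height at least $H(u)$, and gets distinctness from disjoint root subtrees. In return it obtains adapted, nondecreasing counts $R_0,R_1$ of published blocks, which it reuses in the stock-clearance and contraction lemmas. (In Case~1, the contradiction should read $\nu_i(A)\leq u+\Delta+\tau<s$ rather than $\nu_i(A)\leq t$.)
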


\begin{proof}[Lemma~\ref{lem:interval-charging}]
Recall that $a(t)$ is the honest power share mining on valid ancestry and $A(t)=\int_0^t a(v)\,dv$.
Thus $1-a(t)$ is the share wasted on invalid ancestry, and $A$ is nondecreasing and $1$-Lipschitz.
To identify which earlier blocks can cause a later loss, write $\mathcal P(t)=\bigcup_{i\text{ honest}}\mathcal T_i(t)$ for the blocks known to at least one honest miner.
The height $H(t)$ is the greatest height of a block in $\mathcal P(t)$ whose ancestry, including itself, is objectively valid.
It is nondecreasing because blocks with valid ancestry remain known and eligible for optimistic fork choice.

The stock in Section~\ref{sec: analysis} consists of blocks hidden from every honest miner that can still reach this public valid height.
More precisely, write $S(u)=|\mathcal U(u)|$, where
\begin{equation}
\begin{split}
 &\mathcal U(u):=\{F\in\mathcal T(u)\setminus\mathcal P(u):F\text{ is adversarial},\\
 &\qquad h(F)\geq H(u),\quad
 \nexists B\preceq F\text{ with }B\in\mathcal P(u),\ \Val(B)=0\}.
\end{split}
\label{eq:effective-stock-definition}
\end{equation}
The height condition includes ties because tie-breaking is adversarial.
Once an old block fails one of these conditions, it cannot satisfy it later, as publication is permanent. $H$ is nondecreasing, and a public invalid ancestor remains public.
Consequently, stock can increase only through new adversarial production, giving the pathwise bound
\begin{equation}
 S(v)\leq S(u)+Z(u,v),\qquad v\geq u.
 \label{eq:effective-stock-increase}
\end{equation}

We group wasted honest work by the invalid block whose rejection excludes the subtree being mined. For a chain containing an invalid block, its \emph{first-invalid root} is the first block $B$ on the path from genesis
with $\Val(B)=0$. The unique-parent rule makes this root unique. Distinct first-invalid roots have disjoint descendant subtrees, so every unit of wasted work belongs to exactly one such group. The validation deadline bounds how long each miner can work in a group's subtree.

We first show that a single first-invalid root $B$ causes at most $\tau$ units of lost effective honest mining time.
Let $\mathcal I_i(B)$ be the set of times when miner $i$'s selected parent has $B$ as its first-invalid root. If the miner never receives $B$, ancestor closure makes $\mathcal I_i(B)$ empty. Otherwise, the miner can select this ancestry only after receipt. At $\nu_i(B)$, the model sets $B$ and its entire subtree to \INVALID, permanently removing them from $E_i^{\mathrm{opt}}$. Thus $\mathcal I_i(B)\subseteq[r_i(B),\nu_i(B))$ and its duration is at most $\tau$, including any switches away from this subtree and back. Weighting by the fixed honest shares gives
\[
  \sum_iw_i\int_0^\infty\mathbf 1\{v\in \mathcal I_i(B)\}\,dv
  \leq\sum_iw_i\tau=\tau.
\]

The single-root loss bound does not yet identify which stock blocks pay for that loss. To make this connection, we assign each relevant first-invalid root to an adversarial descendant that becomes public at height at least $H(u)$. We call this assignment a \emph{charge}. A charged block is already public and therefore cannot remain in unpublished stock.

Roots known to an honest miner by $u+\Delta$ are covered by a fixed boundary allowance. For each remaining root, charge the first adversarial descendant (possibly the root itself) at height at least $H(u)$ that becomes public, using a fixed deterministic rule for simultaneous publications. The charge occurs at that publication time, whether or not the subtree attracts honest mining. We use $R_0(u,v)$ and $R_1(u,v)$ to count charges made by $v$ whose blocks were mined by $u$ or after $u$, respectively. Thus, mining time determines whether a charge is old or new, and publication determines when it enters the count. We now show that these charged blocks are distinct and bound the resulting loss, including the boundary allowance.

Set $b_\partial:=2\Delta+\tau$. The counts $R_0(u,v)$ and $R_1(u,v)$ defined above are nondecreasing and adapted to $\mathcal F_v$. We will prove that, for all $v\geq u$,
\begin{equation}
\begin{split}
  R_0(u,v)&\leq S(u),\qquad R_1(u,v)\leq Z_{\mathrm{bad}}(u,v),\\
  A(v)-A(u)&\geq(v-u)-b_\partial
                    -\tau[R_0(u,v)+R_1(u,v)].
\end{split}
\label{eq:effective-root-charge-budget}
\end{equation}
Each unit counted by $R_0$ or $R_1$ is a distinct adversarial block published by $v$, mined respectively by $u$ or after $u$. In particular,
\begin{equation}
  A(v)-A(u)\geq(v-u)-b_\partial
                       -\tau S(u)-\tau Z_{\mathrm{bad}}(u,v).
  \label{eq:effective-time-existing-stock}
\end{equation}

First, consider roots known to an honest miner by $u+\Delta$. Every honest miner receives them by $u+2\Delta$ and rejects them by $u+2\Delta+\tau$. Their total weighted loss after $u$ is at most $b_\partial$, since $1-a\leq1$, irrespective of their number or overlap. 

For each remaining root, the charged publication occurs after $u+\Delta$. There are only finitely many block publications in any bounded interval, so the first qualifying publication, when one occurs, is well defined. It is determined by the current history. Distinct roots give distinct charged blocks because their subtrees are disjoint.

We show that every remaining root attracting honest mining by $v$ has been charged by then. Every honest miner knows a valid chain of height $H(u)$ after $u+\Delta$, so any selected tip has height at least $H(u)$. If a selected tip in the root's subtree is adversarial, it is already a qualifying public descendant. Otherwise, follow its path back to the first honestly mined descendant of the root. That honest block was mined after $u+\Delta$, because its miner already knew the root. Its selected parent was adversarial and had height at least $H(u)$. This parent was public before that success, so a qualifying publication also occurred in this case.

If the charged block was mined by $u$, it was still hidden then. Otherwise, ancestor closure would have made its root public by $u$. It has no public invalid ancestor at $u$, since such an ancestor would also expose its first-invalid root. Together with its height, these facts place it in $\mathcal U(u)$. Count this charge in $R_0$. If the block was mined after $u$, it contributes to $Z_{\mathrm{bad}}(u,v)$, and we count the charge in $R_1$. The first qualifying publication determines each charge, so both counts are adapted and nondecreasing. Every charged block is public when it is counted.

Every charged root accounts for at most $\tau$ units of loss by the single-root loss bound proved above. Adding these losses to the initial boundary cost proves \eqref{eq:effective-root-charge-budget}. The bounds on $R_0$ and $R_1$ then give \eqref{eq:effective-time-existing-stock}.

The charge counts give the interval inequality of the first main-text lemma.
Starting from genesis removes the boundary cost because every invalid root is created during the interval itself.

Substitute $R_0(u,t)\leq S(u)$ and $R_1(u,t)\leq Z_{\mathrm{bad}}(u,t)$ into \eqref{eq:effective-root-charge-budget}.
For the genesis bound, every first-invalid root encountered by time $t$ is an adversarial block mined in $(0,t]$ on invalid ancestry.
Distinct roots number at most $Z_{\mathrm{bad}}(0,t)$, and the same single-root bound assigns at most $\tau$ loss to each.
Hence $t-A(t)\leq\tau Z_{\mathrm{bad}}(0,t)$.
\qed
\end{proof}

\subsection{Height growth in effective time}
\label{app:effective-time-height}

\begin{restatedheightlemma}[Effective-time height growth]
For every $0<r<\rho$, there is a constant $C_H>0$ depending only on the model parameters and $r$ such that, for every $T\geq0$ and $\kappa\geq1$, under every admissible strategy,
\[
 H(t)-H(u)\geq r(A(t)-A(u))-C_HK
\]
holds simultaneously for all $0\leq u\leq t\leq T$ with probability at least $1-e^{-\kappa}$.
\end{restatedheightlemma}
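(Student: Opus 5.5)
We would prove Lemma~\ref{thm:effective-height-growth} by a block-race (renewal) argument in effective time, adapting the honest-progress counting of Dembo et al.~\cite{dembo2020everything} to a thinned honest process. The first step is to construct a sequence of stopping times that mark successive units of honest progress on valid ancestry. Set $\tau_0=u$. Given $\tau_k$, let $\tau_{k+1}$ be the first honest success after $\tau_k+\Delta$ whose selected parent has fully valid ancestry. We then prove the pathwise claim $H(\tau_{k+1})\geq H(\tau_k)+1$. At $\tau_k+\Delta$, every honest miner knows a valid chain of height at least $H(\tau_k)$, because $H$ counts blocks known to some honest miner and delivery takes at most $\Delta$. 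An optimistic miner whose parent has valid ancestry therefore selects a tip of height at least $H(\tau_k)$; if its tip were shorter, some eligible block would be longer. Its new block is valid, since honest blocks carry bit~$1$, so it raises $H$ by at least one. Consequently $H(t)-H(u)$ is at least the number of completed cycles in $(u,t]$.

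Next we time-change to effective time. Within each cycle, the wait after the $\Delta$ dead period is governed by the valid-ancestry intensity $\lambda_h a(s)$. By the standard time-change theorem, honest successes on valid ancestry with compensator $\lambda_hA(\cdot)$ form a unit-rate Poisson process in the clock $\lambda_hA$. This needs the predictability of selected parents and the stated strong Markov property of the mining processes. The dead period is a nuisance, but it consumes at most $\Delta$ of effective time, because $A$ is $1$-Lipschitz. Each cycle therefore consumes at most $\Delta+E_k/\lambda_h$ effective time, where the $E_k$ are i.i.d.\ $\operatorname{Exp}(1)$ with mean cycle cost $\Delta+1/\lambda_h=1/\rho$. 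The number of cycles completed within effective time $A(t)-A(u)$ is then at least $N(A(t)-A(u))-1$, where $N(x)$ counts renewals with interarrival law $\Delta+\operatorname{Exp}(\lambda_h)$.

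The concentration step comes next. For $r<\rho$, a Chernoff bound on sums of i.i.d.\ shifted exponentials gives $\Pr[N(x)<rx-y]\leq e^{-c y}$ for a constant $c=c(r)>0$. This must hold uniformly over all pairs $u\le t\le T$, and the starting point $u$ is random relative to the renewal structure. We would handle this by discretizing $u$ to a grid of mesh $1/\lambda$ (or $\Delta$), which gives about $\lambda T+2$ starting points. At each grid point the cycle construction restarts at a stopping time. We then take a union bound over starting points and over the number $n$ of cycles, choosing $y=C_HK$ so that the total failure probability is at most $e^{-\kappa}$. Rounding $u$ to the grid costs only $O(1)$ in $H$ and $A$, and that is absorbed into $C_HK$. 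The $\log(2+\lambda T)$ term in $K$ exactly pays for the union bound.

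The main obstacle is making the time change rigorous when $a(\cdot)$ is adversarially adapted and may drop to $0$ for long stretches. If the effective clock stalls, the renewal comparison must not break. We would formalize the comparison by running, for each cycle, the waiting period on the clock $\lambda_h\int a$ started at $\tau_k+\Delta$. The required fact is that, conditionally on $\mathcal F_{\tau_k+\Delta}$, the effective time until the next valid-ancestry honest success is dominated by $\operatorname{Exp}(\lambda_h)$. Product-form conditional bounds then give the Chernoff estimate via a supermartingale, $\exp(\theta\cdot\text{cycles}-\psi(\theta)\cdot\text{effective time})$, avoiding any independence assumption between $A$ and the success process.
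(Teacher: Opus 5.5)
Your proposal is correct, and its skeleton matches the paper's. It uses the same cycles (the first valid-ancestry honest success after a $\Delta$ wait), the same pathwise bound of $H(t)-H(u)$ by the number of completed cycles, and the same union bound over starts on a grid of mesh $1/\lambda$.

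The routes differ in the concentration step. You compare with a renewal process in effective time, either through a time change by $\lambda_hA$ or through conditional $\operatorname{Exp}(\lambda_h)$ domination of each wait. You then need a Chernoff bound summed over the cycle index $n$ to cover all $t$.

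The paper never time-changes. It removes the dead periods from the clock instead, setting $B_u(t)=\int_u^tJ_u(v)a(v)\,dv$, where $J_u$ is the predictable indicator that selection is permitted. The selected count $Q_u$ then has intensity $\lambda_hJ_ua$, so $\exp\{-\xi Q_u+\chi B_u\}$ with $\chi=\lambda_h(1-e^{-\xi})$ is a nonnegative local martingale. The dead periods are restored by the pathwise inequality $B_u(t)\geq A(t)-A(u)-\Delta[Q_u(t)+1]$. A single maximal inequality then covers every $t\geq u$. When $a$ stalls, both $Q_u$ and $B_u$ simply freeze, so the obstacle you flag never arises and no enlargement of the probability space is needed.

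Your route makes the renewal rate $1/(\Delta+1/\lambda_h)=\rho$ transparent, but it carries that extra machinery. If you keep it, the per-cycle supermartingale you want is $\exp\{\theta[A(\tau_n)-A(u)]-n\log\phi(\theta)\}$ with $\phi(\theta)=e^{\theta\Delta}\lambda_h/(\lambda_h-\theta)$.

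Two further details need care. First, round each start $u$ \emph{up} to the grid. $H$ is monotone but can jump arbitrarily when a withheld valid chain is released, so rounding down gives no $O(1)$ control. Second, avoid mesh $\Delta$, which breaks when $\Delta=0$.
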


\begin{proof}[Lemma~\ref{thm:effective-height-growth}]
The loss bound measures the honest work that remains on valid ancestry.
We next show how this work raises public valid height.
Successes separated by $\Delta$ give successive height increases, because every selected block reaches every miner before the next selection.
The stock proof will restart this comparison at times determined by the execution, so we first establish a conditional bound after an arbitrary finite stopping time.

Fix $0<r<\rho=\lambda_h/(1+\lambda_h\Delta)$.
There is $C_H>0$ such that, after every almost surely finite stopping time $u$ and for every $\mathcal F_u$-measurable $\eta\geq1$,
\begin{equation}
\begin{split}
 \Pr\bigl[\,&H(t)-H(u)\geq r[A(t)-A(u)]-C_H\eta\\
 &\text{for every }t\geq u\mid\mathcal F_u\bigr]\geq1-e^{-\eta}.
\end{split}
\label{eq:effective-height-input-conditional}
\end{equation}
One choice of coefficient is obtained by setting
\begin{equation}
 \xi:=\frac12\left(\frac{\lambda_h}{r}-1-\lambda_h\Delta\right),
 \qquad \chi:=\lambda_h(1-e^{-\xi}),
 \qquad C_H:=\frac{1+\chi\Delta}{\xi+\chi\Delta}.
 \label{eq:effective-height-coefficient}
\end{equation}

Starting after $u+\Delta$, select the first effective honest success.
After each selected success, wait $\Delta$ physical time units before selecting the next effective success.
Let $Q_u(t)$ count selected successes by $t$.
Before the first selection, a valid chain attaining $H(u)$ has reached every honest miner.
Since an effective success extends a longest optimistic chain with valid ancestry, its block has height at least $H(u)+1$.
Each selected block reaches every miner before the next selection, so induction gives
\begin{equation}
 H(t)-H(u)\geq Q_u(t),\qquad t\geq u.
 \label{eq:effective-selected-height}
\end{equation}

To account for effective time skipped during these waits, let $J_u(v)$ indicate whether selection is permitted immediately before $v$ and set $B_u(t):=\int_u^t J_u(v)a(v)\,dv$.
The initial wait and the waits after selected successes exclude at most $\Delta[Q_u(t)+1]$ effective time, hence
\begin{equation}
 B_u(t)\geq A(t)-A(u)-\Delta[Q_u(t)+1].
 \label{eq:effective-selected-clock}
\end{equation}
The selected count has predictable intensity $\lambda_hJ_u(v)a(v)$.
With $\xi,\chi$ from \eqref{eq:effective-height-coefficient}, the process
\[
 \mathcal M_u(t):=\exp\{-\xi Q_u(t)+\chi B_u(t)\}
\]
is therefore a nonnegative local martingale starting from one, and hence a supermartingale.
Indeed, its drift coefficient is $J_u(v)a(v)[\chi+\lambda_h(e^{-\xi}-1)]=0$.
The conditional maximal inequality gives $\Pr(\sup_{t\geq u}\mathcal M_u(t)>e^\eta\mid\mathcal F_u)\leq e^{-\eta}$.

On the complementary event, $\chi B_u(t)-\xi Q_u(t)\leq\eta$ for every $t\geq u$.
Together with \eqref{eq:effective-selected-clock}, this gives
\begin{equation}
 Q_u(t)\geq\frac{\chi}{\xi+\chi\Delta}[A(t)-A(u)]
             -\frac{\chi\Delta+\eta}{\xi+\chi\Delta}.
 \label{eq:effective-selected-rate}
\end{equation}
The choice $r<\rho$ makes $\xi>0$.
Since $e^\xi\geq1+\xi$, we have $\xi/(1-e^{-\xi})\leq1+\xi$ and consequently
\[
 \frac{\chi}{\xi+\chi\Delta}
 \geq\frac{\lambda_h}{1+\xi+\lambda_h\Delta}>r.
\]
The deficit in \eqref{eq:effective-selected-rate} is at most $C_H\eta$ because $\eta\geq1$.
Combining this estimate with \eqref{eq:effective-selected-height} proves the conditional bound.

We now make this estimate hold for all intervals in $[0,T]$.
A finite grid controls their starting points, while the conditional bound already covers every later endpoint.
Rounding a start to the next grid point loses at most one grid spacing of effective time.

Fix $0<r<\rho$, and let $C_0$ be a coefficient supplied by
the conditional bound \eqref{eq:effective-height-input-conditional}.
Use the deterministic grid
\[
 \mathcal G_T:=\{j/\lambda:0\leq j\leq\lfloor\lambda T\rfloor\}.
\]
For each $g\in\mathcal G_T$, the conditional bound with reliability parameter $K$
gives
\[
 H(v)-H(g)\geq r[A(v)-A(g)]-C_0K
 \qquad\text{for all }v\geq g,
\]
except with probability at most $e^{-K}$.
A union bound makes these estimates hold together with failure
probability at most
\[
 |\mathcal G_T|e^{-K}
 \leq(1+\lambda T)e^{-K}<e^{-\kappa}.
\]

Work on this event and fix any $0\leq u\leq t\leq T$.
Set $g:=\lceil\lambda u\rceil/\lambda$.
If $g\leq t$, then $g\in\mathcal G_T$.
Monotonicity of $H$ and the $1$-Lipschitz property of $A$ give
\begin{align*}
 H(t)-H(u)
 &\geq H(t)-H(g)\\
 &\geq r[A(t)-A(u)]-C_0K-r/\lambda.
\end{align*}
If $g>t$, then $A(t)-A(u)\leq t-u<1/\lambda$, and the same lower
bound follows from $H(t)-H(u)\geq0$.
Since $K\geq1$, taking $C_H:=C_0+r/\lambda$ proves the result
simultaneously for all interval endpoints.
\qed
\end{proof}

The main-text height lemma is now proved. To use it for chain growth and agreement, we must transfer public valid height to the fully validated chains selected by individual miners. Propagation and validation impose the fixed lag $D:=\Delta+\tau$.

\begin{lemma}[Transfer to individual strict heights]\label{lem:strict-height-bridge}
Let $L_i(t)=|\mathcal{C}_i^{\mathrm{str}}(t)|$ and $D=\Delta+\tau$.
For every execution,
\[
 L_i(u)\leq H(u)\leq\min_j L_j(u+D).
\]
Consequently, on the event of Lemma~\ref{thm:effective-height-growth},
\begin{equation}
 L_i(t)-L_i(u)\geq r(A(t)-A(u))-C_HK-rD
 \label{eq:strict-effective-growth}
\end{equation}
simultaneously for every honest miner and every $0\leq u\leq t\leq T$.
\end{lemma}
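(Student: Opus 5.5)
We prove the two pathwise inequalities $L_i(u)\leq H(u)$ and $H(u)\leq L_j(u+D)$ separately. Then we chain them with Lemma~\ref{thm:effective-height-growth} to obtain \eqref{eq:strict-effective-growth}.

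\emph{Left inequality.} The strict chain $\mathcal C_i^{\mathrm{str}}(u)$ lies in $E_i^{\mathrm{str}}(u)$. Every block on it therefore has status $\VALID$ at miner $i$. Since validation returns the truthful bit, every block on it has $\Val=1$, so the chain is objectively valid. It is also contained in $\mathcal T_i(u)\subseteq\mathcal P(u)$. By definition, $H(u)$ is the maximum height of such a chain, hence $L_i(u)\leq H(u)$.

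\emph{Right inequality.} Let $F$ be a block attaining $H(u)$, with objectively valid ancestry and known to some honest miner by $u$. Let $s\leq u$ be the first time $F$, with its full ancestry, became known to an honest miner. By the network model, every honest miner $j$ has received $F$ and all its ancestors by $s+\Delta\leq u+\Delta$; receipt of a block includes its ancestry. Validation completes within $\tau$ of receipt, independently of pending ancestor validations. So each ancestor $A\preceq F$ satisfies $\nu_j(A)\leq r_j(A)+\tau\leq u+\Delta+\tau$. All these bits are $1$, so $\status_j(F,u+D)=\VALID$ and $F\in E_j^{\mathrm{str}}(u+D)$. Since $\mathcal C_j^{\mathrm{str}}$ is a longest chain in that set, $L_j(u+D)\geq h(F)=H(u)$. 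The only care needed is that receipt times refer to availability with complete ancestry, which the model guarantees.

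\emph{Consequence.} We work on the event of Lemma~\ref{thm:effective-height-growth} and fix $0\leq u\leq t\leq T$. We distinguish two cases.

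If $t-u\geq D$, the right inequality at time $t-D$ and the left inequality at time $u$ give
\[
L_i(t)-L_i(u)\geq H(t-D)-H(u)\geq r\bigl(A(t-D)-A(u)\bigr)-C_HK.
\]
Here $t-D\in[u,T]$, so the lemma applies. Since $A$ is $1$-Lipschitz, $A(t-D)\geq A(t)-D$, and the bound becomes $r(A(t)-A(u))-C_HK-rD$.

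If $t-u<D$, strict height is nondecreasing, so $L_i(t)-L_i(u)\geq0$. Moreover $r(A(t)-A(u))\leq r(t-u)<rD$, so the right side of \eqref{eq:strict-effective-growth} is negative and the bound holds trivially.

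Both cases hold simultaneously for all miners and intervals on the single event, which proves \eqref{eq:strict-effective-growth}. The main subtlety is the $\tau$-after-receipt validation of ancestors in the right inequality. It is handled by the load-independent validation assumption.
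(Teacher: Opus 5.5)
Your proposal is correct and follows essentially the same route as the paper. You prove the two pathwise inequalities the same way: strict chains are objectively valid, and a public valid chain reaches every miner within $\Delta$ and validates within $\tau$. You then use the same case split, applying Lemma~\ref{thm:effective-height-growth} on $[u,t-D]$ with the $1$-Lipschitz property of $A$, and monotonicity of strict height when $t-u<D$.
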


\begin{proof}
A strictly eligible chain is objectively valid, which gives the first inequality.
A valid chain known to one honest miner at $u$ reaches all honest miners by $u+\Delta$.
Every block on it completes validation by $u+\Delta+\tau$, under the load-independent validation assumption, so the entire chain is strictly eligible everywhere by $u+D$.
If $t-u\geq D$, apply Lemma~\ref{thm:effective-height-growth} between $u$ and $t-D$ and use
$A(t)-A(t-D)\leq D$.
If $t-u<D$, the claimed lower bound follows from the monotonicity of each strict height and $A(t)-A(u)\leq D$.
\end{proof}

\subsection{Uniform stock and effective time}
\label{app:effective-time-constant}

The interval and height lemmas describe how old carriers can delay honest progress.
To bound stock, we also need to account for old blocks that remain hidden.
A surviving hidden path must reach the new public valid height, while each consumed carrier has already left the stock.
These two requirements occupy disjoint parts of the initial stock.

The loss bound counts old blocks spent on attacks, but unspent old blocks may still remain hidden. We next determine when none of them can remain in stock. An old block, meaning one mined by $u$, has a fixed height. For it to remain in stock at $v$, its path must reach $H(v)$. We will show that the blocks on this path above $H(u)$ all belong to the initial stock. They must also be distinct from the old blocks counted by $R_0(u,v)$. Each charged block is public by $v$ and has invalid ancestry, so placing it on the path would expose an invalid ancestor of the
surviving block.

Thus, a surviving old block requires the initial stock to cover both the public height increase and the old blocks already consumed. Once their sum exceeds $S(u)$, only newly mined blocks can remain in stock. The following lemma also bounds this remaining stock. It subtracts the new blocks counted by $R_1(u,v)$ from total new adversarial production, since these blocks are already public and cannot remain in stock.

\begin{lemma}[Elimination of initial stock]
\label{lem:effective-stock-clearance}
Fix $u\leq v$ and use the charges in
the proof of Lemma~\ref{lem:interval-charging}. If
\begin{equation}
  H(v)-H(u)+R_0(u,v)>S(u),
  \label{eq:effective-old-stock-clearance-test}
\end{equation}
then no block mined by $u$ belongs to $\mathcal U(v)$, and
\begin{equation}
  S(v)\leq Z(u,v)-R_1(u,v).
  \label{eq:effective-old-stock-clearance}
\end{equation}
\end{lemma}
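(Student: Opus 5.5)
We argue by contraposition for the first claim. Suppose some block $F$ mined by $u$ belongs to $\mathcal U(v)$. Then $F$ is hidden from every honest miner at $v$, has $h(F)\geq H(v)$, and has no public invalid ancestor at $v$. Consider the path from genesis to $F$ and the set $\Pi$ of its blocks with heights in $(H(u),H(v)]$; this set has exactly $H(v)-H(u)$ elements, all mined by $u$ since they are ancestors of $F$.

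First we show $\Pi\subseteq\mathcal U(u)$. Each $G\in\Pi$ is an ancestor of a block hidden at $v$. Since a block is known only together with its ancestry, but $G$ itself may have been published, we must argue more carefully. If $G$ were public by $u$ it would be a public block of height above $H(u)$; being non-public-invalid on its ancestry would contradict the maximality of $H(u)$, unless it has an invalid ancestor, which would then be public at $u$ and hence at $v$, contradicting $F\in\mathcal U(v)$. So $G\notin\mathcal P(u)$. Next, $G$ is adversarial, since honest blocks are published immediately and their ancestors become public with them. Finally, $h(G)\geq H(u)$, and any public invalid ancestor of $G$ at $u$ would also be a public invalid ancestor of $F$ at $v$. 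Hence $G\in\mathcal U(u)$, and this is the step I expect to need the most care, particularly the height-maximality argument for a public ancestor.

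Second, we show $\Pi$ is disjoint from the $R_0(u,v)$ charged blocks. Each charged block is public by $v$ and is a descendant of a first-invalid root, so it has an invalid ancestor that is public by $v$ through ancestor closure. If it lay on the path to $F$, that invalid ancestor would be a public invalid ancestor of $F$, contradicting $F\in\mathcal U(v)$. Since the $R_0$ charges are distinct blocks of $\mathcal U(u)$ by \eqref{eq:effective-root-charge-budget}, we get $H(v)-H(u)+R_0(u,v)\leq S(u)$, contradicting \eqref{eq:effective-old-stock-clearance-test}.

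For \eqref{eq:effective-old-stock-clearance}, every block of $\mathcal U(v)$ is now mined in $(u,v]$ and is hidden at $v$. The $R_1(u,v)$ charged blocks are distinct adversarial blocks mined in $(u,v]$ and public by $v$, so they are excluded. Therefore $S(v)\leq Z(u,v)-R_1(u,v)$.
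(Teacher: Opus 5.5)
Your proof is correct and follows the paper's argument essentially step for step. You place the path blocks above $H(u)$ in $\mathcal U(u)$ and show they are disjoint from the old charged blocks, which gives $H(v)-H(u)+R_0(u,v)\leq S(u)$; you then exclude the already-public new charged blocks from the $Z(u,v)$ fresh blocks. The only cosmetic difference is that you count just the path blocks at heights in $(H(u),H(v)]$, whereas the paper counts all $h(F)-H(u)$ of them and then substitutes $h(F)\geq H(v)$.
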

\begin{proof}
Suppose that a block $F$ mined by $u$ remains in $\mathcal U(v)$. It must already belong to $\mathcal U(u)$, because a previously mined block outside stock cannot enter stock later. Since parents are fixed when blocks are mined, every block on the path to $F$ was also mined by $u$. The path contains exactly $h(F)-H(u)$ blocks strictly above height $H(u)$. We first show that all these blocks belong to $\mathcal U(u)$.

Consider any such path block. If it were public at $u$ with valid ancestry, its height would contradict the definition of $H(u)$. If it were public at $u$ with invalid ancestry, ancestor closure would make an invalid ancestor of $F$ public. That ancestor would remain public at $v$, contradicting $F\in\mathcal U(v)$. The path block is therefore hidden at $u$ and hence adversarial, since honest blocks are public when mined. It also has no public invalid ancestor at $u$, since any such ancestor would be an exposed invalid ancestor of $F$ at $v$. Together with its height, these properties place the path block in $\mathcal U(u)$.

The charge construction in the proof of Lemma~\ref{lem:interval-charging} places the $R_0(u,v)$ distinct charged old blocks in $\mathcal U(u)$. Each block has a first-invalid root on its ancestry and is public by the time it is charged. If a charged block lay on the path to $F$, ancestor closure would make that root public by $v$, contradicting $F\in\mathcal U(v)$. The path blocks and the charged old blocks therefore form disjoint subsets
of the initial stock, giving
\[
  h(F)-H(u)+R_0(u,v)\leq S(u).
\]
Membership in $\mathcal U(v)$ also requires $h(F)\geq H(v)$. Substituting this lower bound contradicts
\eqref{eq:effective-old-stock-clearance-test}, so no old block remains in stock at $v$.

Every block in $\mathcal U(v)$ must therefore come from the $Z(u,v)$ adversarial blocks mined in $(u,v]$. Among these blocks, the $R_1(u,v)$ charged new blocks are distinct and already public by $v$. They cannot belong to $\mathcal U(v)$. Removing them from the possible sources of stock proves
\eqref{eq:effective-old-stock-clearance}.
\qed
\end{proof}

The clearance lemma identifies when all remaining stock must be newly mined.
We next choose a check time that reaches this condition before new production replaces too much old stock.
Assume first that $\lambda_a>0$.
To bound adversarial production uniformly over future times, we introduce an
auxiliary rate $\lambda_a^+$ above the actual rate $\lambda_a$, leaving
slack for upward fluctuations.
We choose it below the rate $r<\rho$ used in the effective-height lower bound
and require $\lambda_a^+\tau<1$, preserving both strict margins needed for
stock contraction.
Fix such rates and define the positive exponential parameter $z_a$, which
will determine the additive allowance in the production bound, by
\begin{equation}
 \lambda_a<\lambda_a^+<r<\rho,\qquad \lambda_a^+\tau<1,
 \qquad z_a:=\log(\lambda_a^+/\lambda_a)>0.
 \label{eq:explicit-rate-margin}
\end{equation}
Such choices are possible under $\lambda_a<\rho$ and $\lambda_a\tau<1$.
The strict margins allow concentration estimates for both height growth and adversarial production.
The next estimate holds conditionally after the stopping times used for stock checks.

\begin{lemma}[Adversarial production after a stopping time]
\label{lem:effective-poisson-stopping}
For every almost surely finite stopping time $u$ and every $\mathcal F_u$-measurable $\eta\geq1$,
conditionally on $\mathcal F_u$, with probability at least $1-e^{-\eta}$,
\begin{equation}
  Z(u,v)\leq\lambda_a^+(v-u)+\eta/z_a
  \quad\text{for all }v\geq u.
  \label{eq:explicit-poisson-cone}
\end{equation}
\end{lemma}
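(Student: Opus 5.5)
Since the adversary's mining successes form a Poisson process of rate $\lambda_a$ independent of the past at any stopping time, the plan is to reuse the exponential-supermartingale argument from the proof of Lemma~\ref{thm:effective-height-growth}, with the roles of count and clock reversed. Fix the stopping time $u$. The strong Markov property of the Poisson process (the mining processes retain their rates after every finite stopping time, as stated at the start of the appendix) makes $v\mapsto Z(u,v)$, conditionally on $\mathcal F_u$, a rate-$\lambda_a$ Poisson counting process started at $0$. The adversary's choices only classify successes as bad or valid, so they do not affect the total count $Z$.

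Define, for $v\geq u$,
\[
 \mathcal N_u(v):=\exp\bigl\{z_aZ(u,v)-\lambda_a(e^{z_a}-1)(v-u)\bigr\}.
\]
This is the standard exponential martingale of a Poisson process, with drift coefficient $\lambda_a(e^{z_a}-1)-\lambda_a(e^{z_a}-1)=0$. It is nonnegative, starts at one, and is therefore a (super)martingale with respect to $(\mathcal F_v)_{v\geq u}$. Since $\eta$ is $\mathcal F_u$-measurable, the conditional maximal inequality gives
\[
\Pr\bigl(\sup_{v\geq u}\mathcal N_u(v)\geq e^{\eta}\mid\mathcal F_u\bigr)\leq e^{-\eta}.
\]

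On the complementary event, for all $v\geq u$ we have $z_aZ(u,v)\leq\eta+\lambda_a(e^{z_a}-1)(v-u)$. By the definition \eqref{eq:explicit-rate-margin}, $e^{z_a}=\lambda_a^+/\lambda_a$, so $\lambda_a(e^{z_a}-1)=\lambda_a^+-\lambda_a$. Dividing by $z_a>0$ yields
\[
 Z(u,v)\leq\frac{\lambda_a^+-\lambda_a}{\log(\lambda_a^+/\lambda_a)}(v-u)+\eta/z_a.
\]
The remaining step is the elementary inequality $\log x\geq 1-1/x$, applied at $x=\lambda_a^+/\lambda_a$. It gives $\lambda_a^+-\lambda_a\leq\lambda_a^+\log(\lambda_a^+/\lambda_a)$, so the coefficient is at most $\lambda_a^+$, which is \eqref{eq:explicit-poisson-cone}.

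The only point needing care is the conditioning on a random start with a random threshold $\eta$. I would handle it exactly as in \eqref{eq:effective-height-input-conditional}: apply the optional-stopping/Ville inequality to the process restarted at $u$ under the regular conditional law given $\mathcal F_u$, treating $\eta$ as a constant there. The restarted Poisson increments are independent of $\mathcal F_u$, because strategies are causal and cannot depend on future PoW successes. The case $\lambda_a=0$ is trivial, since then $Z\equiv0$.
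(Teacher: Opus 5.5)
Your proof is correct and is essentially the paper's argument. The paper applies the conditional maximal inequality directly to the supermartingale $\exp\{z_a[Z(u,v)-\lambda_a^+(v-u)]\}$, using $\lambda_a(e^{z_a}-1)=\lambda_a^+-\lambda_a<z_a\lambda_a^+$ to obtain the supermartingale property. You instead apply it to the exact exponential martingale and use the same inequality afterward to bound the slope; this is only a reordering, since the paper's process is yours multiplied by a deterministic nonincreasing factor.
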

\begin{proof}
Conditionally on $\mathcal F_u$, future adversarial successes form a
Poisson process of rate $\lambda_a$.
To control excess production at all later times, we apply the exponential
moment bound to the excess count $Z(u,v)-\lambda_a^+(v-u)$.
The choice of $z_a$ gives
\[
  \lambda_a(e^{z_a}-1)=\lambda_a^+-\lambda_a<z_a\lambda_a^+.
\]
Consequently, the process
$\exp\{z_a[Z(u,v)-\lambda_a^+(v-u)]\}$ is a nonnegative
supermartingale starting from one at $u$.
Exceeding the count bound in \eqref{eq:explicit-poisson-cone} would make
this process exceed $e^\eta$.
Its conditional maximal inequality bounds the probability of such a
crossing at any later time by $e^{-\eta}$, proving the lemma.
\qed
\end{proof}

We now choose when to compare the stocks.
The clearance condition in Lemma~\ref{lem:effective-stock-clearance} requires
$H(v)-H(u)+R_0(u,v)>S(u)$.
It combines public height growth with the number of old blocks already
consumed. Each consumed old block contributes one unit to this sum.
The loss bound charges it at most $\tau$ units of effective time, reducing
the height lower bound by at most $r\tau$.
Its net coefficient in the combined lower bound is therefore $1-r\tau$.
If this coefficient is nonnegative, we may discard the old contribution.
If it is negative, we bound the number of old charges by $S(u)$.

We use $M$ to cover both cases when choosing the required progress.
It accounts for the greater of the height needed to overtake unspent
old stock and the height allowance for delay if old stock is consumed.
We also use $d$ for the positive margin in the delay condition and
$\rho_S$ for the coefficient that will bound replacement stock relative
to initial stock. These constants are
\begin{align}
  d&:=1-\lambda_a^+\tau,& M&:=\max\{1,r\tau\},\notag\\
  \rho_S&:=\frac{\lambda_a^+M}{r}
       =\max\{\lambda_a^+/r,\lambda_a^+\tau\}<1.
  \label{eq:explicit-contraction-parameters}
\end{align}
The two terms in $\rho_S$ correspond to replacement production while
honest height overtakes old stock and while consumed blocks delay that
growth. Both are strictly below one by our choice of $\lambda_a^+$.

The comparison must also allow for the boundary loss $b_\partial$ and
the height deficit $C_H\eta$.
We use $C_0\eta$ to cover their combined effect on height, choosing
$C_0\geq1$ to ensure a positive minimum wait.
The coefficient $C_1$ converts this extra wait and the production
allowance $\eta/z_a$ into an additive stock bound.
Define
\begin{equation}
  C_0:=r b_\partial+\max\{C_H,1\},\qquad
  C_1:=\frac{\lambda_a^+(C_0+1)}r+\frac{1}{z_a}.
  \label{eq:explicit-check-constants}
\end{equation}

Starting at $u$ with $S_0=S(u)$, we convert elapsed time into the height
contribution $r(v-u)$ and deduct $r\tau R_1(u,v)$ for newly consumed blocks.
We call the first time their difference reaches $M S_0+C_0\eta+1$ a
\emph{check time}.
The threshold covers initial stock, the worst contribution of old
charges, and the boundary and fluctuation allowances.
The extra unit makes the clearance inequality strict.
The next lemma proves that the chosen check eliminates old stock and
bounds the new stock remaining there whenever the estimates hold.
It also bounds the wait and the stock accumulated during it, which we
will need when repeating the comparison.

\begin{lemma}[Contraction of unpublished stock]
\label{lem:effective-stock-contraction}
Fix a finite stopping time $u$ and an $\mathcal F_u$-measurable $\eta\geq1$, and set $S_0:=S(u)$.
Let $\sigma$ be the first $v\geq u$ such that
\begin{equation}
  r(v-u)-r\tau R_1(u,v)\geq M S_0+C_0\eta+1.
  \label{eq:explicit-stock-check}
\end{equation}
The check time $\sigma$ is an almost surely finite stopping time and
satisfies $\sigma-u\geq\eta/r$.
Conditionally on $\mathcal F_u$, with probability at least $1-2e^{-\eta}$,
\begin{align}
  S(\sigma)&\leq\rho_S S_0+C_1\eta,
  \label{eq:explicit-stock-contraction}\\
  \sigma-u&\leq\frac{M S_0+(C_0+1+r\tau/z_a)\eta}{rd},
  \label{eq:explicit-check-duration}\\
  \sup_{u\leq v\leq\sigma}S(v)
       &\leq\left(1+\frac{\rho_S}{d}\right)S_0+\frac{C_1}d\eta.
  \label{eq:explicit-stock-between-checks}
\end{align}
\end{lemma}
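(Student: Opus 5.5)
The plan is to work on the intersection of two conditional events at the stopping time $u$, each holding with conditional probability at least $1-e^{-\eta}$. The first, $E_1$, is the conditional height bound \eqref{eq:effective-height-input-conditional} with parameter $\eta$, valid for all $t\geq u$. The second, $E_2$, is the production cone \eqref{eq:explicit-poisson-cone} of Lemma~\ref{lem:effective-poisson-stopping} with the same $\eta$. A union bound gives probability at least $1-2e^{-\eta}$. All three claimed inequalities will then be derived pathwise on $E_1\cap E_2$. The deterministic parts use only the pathwise charge budget \eqref{eq:effective-root-charge-budget}, the clearance Lemma~\ref{lem:effective-stock-clearance}, and the increase bound \eqref{eq:effective-stock-increase}. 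The degenerate case $\lambda_a=0$, where $Z\equiv0$ and $R_1\equiv0$, is handled directly.

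\emph{Stopping-time properties.} The process $X(v):=r(v-u)-r\tau R_1(u,v)$ is adapted because $R_1$ is adapted. It increases continuously with slope $r$ and jumps only downward, when a new charge is published. Hence $\sigma$ is a first-passage time of an adapted right-continuous process, so it is a stopping time. Since $X$ can reach a level only by continuous increase, $X(\sigma)$ equals the threshold exactly. Because $X(v)\leq r(v-u)$ and the threshold is at least $C_0\eta\geq\eta$, we get $\sigma-u\geq\eta/r$. For almost sure finiteness, use $R_1\leq Z_{\mathrm{bad}}\leq Z$ and the strong law $Z(u,v)/(v-u)\to\lambda_a$. Since $\lambda_a\tau<1$, this gives $X(v)\geq r(v-u)(1-\tau Z(u,v)/(v-u))\to\infty$.

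\emph{Clearance at $\sigma$.} On $E_1$, combining the height bound with \eqref{eq:effective-root-charge-budget} gives
\[
H(\sigma)-H(u)+R_0\geq X(\sigma)-rb_\partial-C_H\eta+(1-r\tau)R_0 .
\]
If $r\tau\leq1$, the last term is nonnegative and is dropped. Otherwise, $R_0\leq S_0$ gives $(1-r\tau)R_0\geq-(M-1)S_0$. In both cases, using $X(\sigma)=MS_0+C_0\eta+1$ and $C_0\eta\geq rb_\partial+C_H\eta$, the right side is at least $S_0+1>S_0$. This is exactly the test \eqref{eq:effective-old-stock-clearance-test}, so Lemma~\ref{lem:effective-stock-clearance} yields $S(\sigma)\leq Z(u,\sigma)-R_1(u,\sigma)$.

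\emph{Duration and contraction.} For the duration, the equality at $\sigma$ together with $R_1\leq Z\leq\lambda_a^+(\sigma-u)+\eta/z_a$ on $E_2$ gives
\[
r d(\sigma-u)\leq MS_0+C_0\eta+1+r\tau\eta/z_a .
\]
Using $1\leq\eta$ yields \eqref{eq:explicit-check-duration}. For the contraction, substitute the exact relation $\sigma-u=(MS_0+C_0\eta+1)/r+\tau R_1$ into the production bound. This gives
\[
S(\sigma)\leq \frac{\lambda_a^+(MS_0+(C_0+1)\eta)}{r}+\eta/z_a-(1-\lambda_a^+\tau)R_1 .
\]
Dropping the last, nonpositive term gives \eqref{eq:explicit-stock-contraction} with $\rho_S=\lambda_a^+M/r$ and $C_1$ as defined. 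The key point is that deducting newly charged carriers from replacement stock exactly absorbs the extra time they buy.

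\emph{Peaks between checks.} For $u\leq v\leq\sigma$, \eqref{eq:effective-stock-increase} and $E_2$ give $S(v)\leq S_0+\lambda_a^+(\sigma-u)+\eta/z_a$. Inserting the duration bound gives an $S_0$ coefficient of $1+\rho_S/d$. The $\eta$ coefficient becomes $\lambda_a^+(C_0+1)/(rd)+\lambda_a^+\tau/(z_ad)+1/z_a$. This equals $C_1/d$ because $\lambda_a^+\tau+d=1$.

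I expect the main obstacle to be the clearance step. It requires correctly combining the conditional height bound, stated from $u$, with the pathwise budget \eqref{eq:effective-root-charge-budget} at the random time $\sigma$. It also requires handling the sign of $1-r\tau$ by means of $M$. The remaining steps are algebra once one notes that $X$ hits the threshold exactly at $\sigma$.
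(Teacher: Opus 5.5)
Your proposal is correct and follows essentially the same route as the paper. It uses the same intersection of the conditional height and production events, the same clearance computation with $M$ absorbing the sign of $1-r\tau$, the same substitution of the exact equality at $\sigma$ into the production bound with the $-dR_1$ term discarded, and the same peak bound via $d+\lambda_a^+\tau=1$. The only difference is cosmetic: you read the duration bound off the equality at the almost surely finite time $\sigma$, whereas the paper uses the increasing lower bound $rd(v-u)-r\tau\eta/z_a$; both give the same estimate.
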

\begin{proof}
We first verify that the check time is well defined, independently of
the events used for the quantitative bounds.
The counts defining $\sigma$ are adapted, so the threshold can be tested
using the complete history $\mathcal F_v$ at each time $v$.
Since $R_1(u,v)\leq Z(u,v)$, the expression on the left of
\eqref{eq:explicit-stock-check} is at least
$r(v-u)-r\tau Z(u,v)$.
The Poisson strong law and $\lambda_a\tau<1$ make this lower bound tend
to infinity almost surely. Thus, $\sigma$ is a finite stopping time.

The expression defining the check grows continuously at rate $r$ between
downward jumps. It therefore reaches the threshold without an upward
overshoot, and equality holds in \eqref{eq:explicit-stock-check} at
$\sigma$.
Also, $R_1\geq0$ bounds the expression above by $r(v-u)$.
The threshold is at least $C_0\eta\geq\eta$, so reaching it requires
$\sigma-u\geq\eta/r$.

We next show that the check eliminates the initial stock.
Intersect the height event in
\eqref{eq:effective-height-input-conditional} with the production event
in \eqref{eq:explicit-poisson-cone}.
Each has conditional failure probability at most $e^{-\eta}$ at $u$,
so their intersection has conditional probability at least
$1-2e^{-\eta}$.
Both estimates hold for every $v\geq u$ on this event, including the
random endpoint $\sigma$.

On this intersection, substitute the loss bound from
the proof of Lemma~\ref{lem:interval-charging} into the height bound and
add the old consumption $R_0(u,v)$.
The resulting lower bound on the clearance quantity is
\begin{align}
  &H(v)-H(u)+R_0(u,v)\notag\\
  &\quad\geq r(v-u)-r\tau R_1(u,v)
      +(1-r\tau)R_0(u,v)-r b_\partial-C_H\eta\notag\\
  &\quad\geq r(v-u)-r\tau R_1(u,v)
      -(M-1)S_0-C_0\eta.
  \label{eq:explicit-combined-progress}
\end{align}
For the second inequality, if $r\tau\leq1$, the old-charge term is
nonnegative and $M=1$.
If $r\tau>1$, its negative coefficient and $R_0(u,v)\leq S_0$ give the
lower bound $-(M-1)S_0$.
Finally, $\eta\geq1$ and the definition of $C_0$ give
$r b_\partial+C_H\eta\leq C_0\eta$.
At $\sigma$, the last line of \eqref{eq:explicit-combined-progress}
equals $S_0+1$.
The clearance condition is therefore satisfied, and
Lemma~\ref{lem:effective-stock-clearance} excludes every old block from
the remaining stock.

It remains to bound the new blocks that replace this stock.
The same clearance lemma gives
$S(\sigma)\leq Z(u,\sigma)-R_1(u,\sigma)$.
The subtraction is essential when accounting for the wait.
In the equality defining $\sigma$, each newly consumed block adds $\tau$
units to the waiting time.
At the enlarged rate, this permits $\lambda_a^+\tau$ additional blocks
in the production bound, but the consumed block itself cannot remain in stock.
Its net contribution is therefore $\lambda_a^+\tau-1=-d<0$.
Writing $R_1:=R_1(u,\sigma)$ for this calculation, the production bound
and the equality at the check give
\begin{align}
  S(\sigma)
  &\leq\lambda_a^+(\sigma-u)+\eta/z_a-R_1\notag\\
  &=\rho_S S_0-dR_1+
          \frac{\lambda_a^+}{r}(C_0\eta+1)+\eta/z_a\notag\\
  &\leq\rho_S S_0+C_1\eta.
  \label{eq:explicit-contraction-calculation}
\end{align}
The last inequality discards the nonpositive term $-dR_1$ and uses
$1\leq\eta$.
Since $\rho_S<1$, the dependence on initial stock contracts, apart from
the additive allowance $C_1\eta$.

To control stock before the check, we also need an upper bound on the wait.
On the same production event, $R_1(u,v)\leq Z(u,v)$ gives
$r(v-u)-r\tau R_1(u,v)\geq rd(v-u)-r\tau\eta/z_a$.
Because $d>0$, this lower bound increases with elapsed time.
The check must occur by the time it reaches $M S_0+C_0\eta+1$.
Solving for that time and replacing $1$ by $\eta$ yields
\eqref{eq:explicit-check-duration}.

Finally, stock can increase after $u$ only through newly mined
adversarial blocks, as stated in \eqref{eq:effective-stock-increase}.
For every $u\leq v\leq\sigma$, bound that production by
$\lambda_a^+(\sigma-u)+\eta/z_a$ and apply the upper bound on the wait.
We obtain
\[
  S(v)\leq\left(1+\frac{\rho_S}{d}\right)S_0+
  \left[\frac{\lambda_a^+(C_0+1+r\tau/z_a)}{rd}+\frac{1}{z_a}\right]\eta.
\]
The bracket equals $C_1/d$, since $d+\lambda_a^+\tau=1$.
Hence \eqref{eq:explicit-stock-between-checks} holds on the same event
as the contraction and duration bounds.
\qed
\end{proof}

One check contracts the initial stock up to an additive fluctuation allowance.
We repeat these checks from empty stock at genesis, using the same allowance at every start.
The contraction makes the accumulated allowances a geometric sum, and the bound between checks covers intervening stock peaks.
The minimum duration of a check bounds their number before $T$, so conditional failure probabilities can be summed over these random starts.

For $K=\kappa+\log(2+\lambda T)$, the following constants convert the check count and the accumulated allowance into a uniform stock bound.

\begin{lemma}[Uniform stock bound]
\label{lem:effective-stock-uniform}
For $T\geq0$ and $\kappa\geq1$, set $K:=\kappa+\log(2+\lambda T)$ and
\begin{equation}
  G:=1+\log\bigl(8\max\{1,r/\lambda\}\bigr),
  \qquad C_S:=G\frac{(1+d)C_1}{d(1-\rho_S)}.
  \label{eq:explicit-uniform-stock-constant}
\end{equation}
Under every admissible strategy from genesis,
\begin{equation}
  \Pr\!\left[\sup_{0\leq v\leq T}S(v)>C_S K\right]
       \leq\tfrac14e^{-\kappa}.
  \label{eq:explicit-uniform-stock}
\end{equation}
\end{lemma}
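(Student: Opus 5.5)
We iterate Lemma~\ref{lem:effective-stock-contraction} along a sequence of check times starting at genesis, with the same allowance $\eta:=GK$ at every start. Set $u_0:=0$, so $S(u_0)=0$ because the adversary initially holds no blocks. Given the stopping time $u_n$, let $u_{n+1}:=\sigma$ be the check time of Lemma~\ref{lem:effective-stock-contraction} started at $u_n$ with $\eta=GK$. Each $u_{n+1}$ is an almost surely finite stopping time, and $u_{n+1}-u_n\geq\eta/r$. Call the $n$th step good if the contraction, duration and between-check bounds \eqref{eq:explicit-stock-contraction}--\eqref{eq:explicit-stock-between-checks} hold at that step. Conditionally on $\mathcal F_{u_n}$, a step fails with probability at most $2e^{-GK}$.

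\emph{Deterministic consequence on all good steps.} Suppose steps $0,\dots,n-1$ are good. Unrolling $S(u_{k+1})\leq\rho_S S(u_k)+C_1\eta$ from $S(u_0)=0$ gives
\[
 S(u_k)\leq\frac{C_1\eta}{1-\rho_S}\quad\text{for all }k\leq n.
\]
Substituting this into \eqref{eq:explicit-stock-between-checks} bounds the stock on $[u_k,u_{k+1}]$ by
\[
 \Bigl(1+\frac{\rho_S}{d}\Bigr)\frac{C_1\eta}{1-\rho_S}+\frac{C_1\eta}{d}
 \leq\frac{(1+d)C_1\eta}{d(1-\rho_S)}.
\]
For this inequality we multiply through by $d(1-\rho_S)$. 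The left side becomes $(d+\rho_S)C_1\eta+(1-\rho_S)C_1\eta=(1+d)C_1\eta$, so the bound in fact holds with equality. With $\eta=GK$ the right side is exactly $C_SK$. The check times partition $[0,\infty)$, so every $v\leq T$ lies in some $[u_k,u_{k+1}]$. It therefore suffices that every step whose start satisfies $u_k\leq T$ is good.

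\emph{Counting steps and union bound.} Since $\eta\geq K\geq1$, consecutive checks are separated by at least $GK/r\geq1/r$. Hence at most $N:=\lfloor rT\rfloor+1$ starts lie in $[0,T]$, deterministically. The starts are random, so we cannot apply a plain union bound over fixed times. Instead, the event that some step $k<N$ fails has probability at most
\[
 \sum_{k<N}\mathbb E\bigl[\Pr(\text{step }k\text{ fails}\mid\mathcal F_{u_k})\bigr]\leq 2Ne^{-GK},
\]
by the tower property, since each conditional failure probability is at most $2e^{-GK}$ regardless of history. Steps with $u_k>T$ are irrelevant. It remains to check that $2Ne^{-GK}\leq\tfrac14e^{-\kappa}$. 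Write $m:=\max\{1,r/\lambda\}$. Then $N\leq1+(r/\lambda)\lambda T\leq m(2+\lambda T)$, and
\[
 e^{-GK}\leq e^{-K}e^{-(G-1)K}\leq\frac{e^{-\kappa}}{2+\lambda T}\cdot\frac{1}{8m},
\]
using $K\geq1$ and $G-1=\log(8m)$. Multiplying gives $2Ne^{-GK}\leq\tfrac14e^{-\kappa}$, which proves \eqref{eq:explicit-uniform-stock}.

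\emph{Main obstacle.} The delicate point is the validity of conditioning at the random check times. The height estimate \eqref{eq:effective-height-input-conditional} and Lemma~\ref{lem:effective-poisson-stopping} were stated precisely for this purpose: they hold after finite stopping times with an $\mathcal F_u$-measurable $\eta$. Here $\eta=GK$ is deterministic, so it is trivially $\mathcal F_u$-measurable. What needs care is the requirement that the deterministic bound on the number of starts does not rely on the good event. It does not, because the minimum wait $\sigma-u\geq\eta/r$ holds pathwise. If $\lambda_a=0$, production and stock vanish after genesis, and the bound is trivial.
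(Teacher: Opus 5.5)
Your proposal is correct and follows essentially the same route as the paper: you iterate the contraction lemma from empty stock at genesis, control stock peaks with the between-check bound, and apply a conditional union bound over at most $1+rT$ random starts. The only difference is that you take $\eta=GK$ directly, whereas the paper takes $\eta=\kappa+\log(8(2+rT))$ and then shows $\eta\leq GK$; this does not affect the argument.
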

\begin{proof}
Use the fixed value $\eta:=\kappa+\log(8(2+rT))$ for every comparison.
The logarithmic term will pay for the number of comparisons before $T$.
Let $u_0=0$, and let $u_{j+1}$ be the check time from
Lemma~\ref{lem:effective-stock-contraction} starting at $u_j$.
These are finite stopping times with
$u_{j+1}-u_j\geq\eta/r$.
Thus, the check times cannot accumulate in a bounded interval, and their
successive intervals cover $[0,T]$.
Since $\eta\geq1$, at most $1+rT$ comparisons start by $T$.

For each $j$, let $E_j$ be the event that the height and production
bounds used in the comparison starting at $u_j$ both hold.
We first bound stock on executions where every $E_j$ with $u_j\leq T$
holds.
On such an execution, Lemma~\ref{lem:effective-stock-contraction} gives
$S(u_{j+1})\leq\rho_S S(u_j)+C_1\eta$ at each of these starts.
Genesis has $S(u_0)=0$, so iterating gives
$S(u_j)\leq C_1\eta\sum_{\ell=0}^{j-1}\rho_S^\ell
\leq C_1\eta/(1-\rho_S)$ at every check start up to $T$.

To include a time between checks, apply
\eqref{eq:explicit-stock-between-checks} to the comparison interval
containing it and substitute the bound on that interval's initial stock.
The estimate also covers the interval containing $T$, even if its next
check occurs after $T$.
Hence
\[
  \sup_{0\leq v\leq T}S(v)
  \leq\left[\frac{1+\rho_S/d}{1-\rho_S}+\frac1d\right]C_1\eta
  =\frac{(1+d)C_1}{d(1-\rho_S)}\eta.
\]
To express this bound on the theorem's scale, use
$2+rT\leq\max\{1,r/\lambda\}(2+\lambda T)$.
It gives $\eta\leq K+\log(8\max\{1,r/\lambda\})\leq GK$, since
$K\geq1$.
The definition of $C_S$ now yields the required stock bound on every
execution under consideration.

It remains to bound the probability that one of these comparisons fails.
At each actual start, the conditional estimates give
$\Pr(E_j^c\mid\mathcal F_{u_j})\leq2e^{-\eta}$.
The event $\{u_j\leq T\}$ is measurable at $u_j$, so we may condition
on the history there before summing over the starts.
The deterministic bound on their number gives
\[
  \Pr[\exists j:u_j\leq T,\ E_j^c]
  \leq\sum_j\mathbb E\!\left[
       \mathbf 1_{\{u_j\leq T\}}
       \Pr(E_j^c\mid\mathcal F_{u_j})\right]
  \leq2(1+rT)e^{-\eta}\leq\tfrac14e^{-\kappa}.
\]
The conditioning accounts for the dependence of each start on the past.
On the complementary event, the preceding stock bound holds throughout
$[0,T]$, proving the lemma.
\qed
\end{proof}

The stock bound is uniform over the interval start, and the loss accounting is pathwise.
We can therefore substitute the former into the latter on one event to obtain both conclusions of the first theorem.

\begin{proof}[Theorem~\ref{thm:effective-time-intervals}]
If $\lambda_a=0$, then $S(u)=0$ and $A(t)=t$ almost surely, so any positive $C_S,C_A$ suffice.
Otherwise, choose $r=(\lambda_a+\rho)/2$ and an enlarged adversarial rate as in \eqref{eq:explicit-rate-margin}.
The conditional estimate \eqref{eq:effective-height-input-conditional} supplies the height bound at this rate.
Use $C_S$ from Lemma~\ref{lem:effective-stock-uniform} and take $C_A\geq b_\partial+\tau C_S$.
On the stock event, Lemma~\ref{lem:interval-charging} gives, for every $0\leq u\leq t\leq T$,
\[
 A(t)-A(u)\geq(t-u)-\tau Z_{\mathrm{bad}}(u,t)-b_\partial-\tau C_SK
 \geq(t-u)-\tau Z_{\mathrm{bad}}(u,t)-C_AK.
\]
Here $K\geq1$ absorbs the fixed boundary cost.
The stock event fails with probability at most $e^{-\kappa}/4$, which is within the stated bound.
The construction also includes $\tau=0$, when $d=M=1$ and $\rho_S=\lambda_a^+/r<1$.
\qed
\end{proof}

\subsection{Chain growth on arbitrary intervals}
\label{app:effective-time-all-intervals}

We have bounded loss using the realized count $Z_{\mathrm{bad}}$.
To express chain growth as a rate in physical time, we bound this count by total adversarial production and control that production uniformly over interval endpoints.
The enlarged rate in the next lemma can be chosen independently of the rates used for stock contraction.

To cover all interval endpoints, we enclose each interval in one whose
endpoints lie on a finite time grid.
We bound production for all grid intervals together and account for the
extra time added at their boundaries.
The constant $C_Z$ below collects the production allowance and the cost
of extending the endpoints.

\begin{lemma}[Uniform bound on adversarial block counts]
\label{lem:effective-poisson-uniform}
Fix any $\lambda_a^+>\lambda_a>0$ and set
$z_a:=\log(\lambda_a^+/\lambda_a)$.
For $T\geq0$, $\kappa\geq1$, and $K:=\kappa+\log(2+\lambda T)$, let
\begin{equation}
  C_Z:=\frac{2\lambda_a^+}{\lambda}+\frac{4}{z_a}.
  \label{eq:explicit-uniform-count-constant}
\end{equation}
With probability at least $1-e^{-\kappa}/4$,
\begin{equation}
  Z(u,t)\leq\lambda_a^+(t-u)+C_Z K
  \quad\text{for all }0\leq u\leq t\leq T.
  \label{eq:explicit-uniform-adversarial-count}
\end{equation}
\end{lemma}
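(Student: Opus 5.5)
We prove the bound from the exponential supermartingale of Lemma~\ref{lem:effective-poisson-stopping}, applied at the deterministic grid points $g_j:=j/\lambda$, $0\leq j\leq\lfloor\lambda T\rfloor$. We then enclose every interval $(u,t]\subseteq[0,T]$ in a grid interval.

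\emph{Grid estimate.} Fix $\eta:=K+\log 4$. Lemma~\ref{lem:effective-poisson-stopping}, used at the deterministic time $g_j$ with this $\eta$, gives the following. Outside an event of probability at most $e^{-\eta}$, we have $Z(g_j,v)\leq\lambda_a^+(v-g_j)+\eta/z_a$ simultaneously for all $v\geq g_j$. The grid has at most $1+\lambda T\leq 2+\lambda T$ points, so a union bound makes the total failure probability at most $(2+\lambda T)e^{-K}/4=e^{-\kappa}/4$. The strength of the maximal inequality is that the right endpoint is already uniform. Only the left endpoint needs discretising, and no separate grid in $t$ is required.

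\emph{Enclosure.} On the good event, fix $0\leq u\leq t\leq T$ and set $g:=\lfloor\lambda u\rfloor/\lambda\in\mathcal G_T$, so $u-1/\lambda<g\leq u$. Monotonicity of the counting process gives $Z(u,t)\leq Z(g,t)$. Hence
\[
Z(u,t)\leq\lambda_a^+(t-u)+\lambda_a^+/\lambda+(K+\log4)/z_a.
\]
The right endpoint needs no rounding, because $t$ is arbitrary in the grid estimate.

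\emph{Absorbing constants.} Since $K\geq1$ and $\log 4<2\leq 3K$, we get $(K+\log4)/z_a\leq 4K/z_a$. Also $\lambda_a^+/\lambda\leq 2\lambda_a^+K/\lambda$. Together these give the constant $C_Z=2\lambda_a^+/\lambda+4/z_a$ with room to spare.

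The main obstacle is purely bookkeeping: ensuring that the union bound over the grid fits inside $e^{-\kappa}/4$, and that the rounding cost $\lambda_a^+/\lambda$ is charged to $C_ZK$. No delicate probabilistic step arises, because conditioning at deterministic times together with the Poisson rate $\lambda_a$ of future adversarial successes already makes Lemma~\ref{lem:effective-poisson-stopping} directly applicable.
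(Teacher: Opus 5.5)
Your proof is correct, but it takes a different route from the paper's. The paper does not invoke Lemma~\ref{lem:effective-poisson-stopping} here. It places \emph{both} endpoints on the grid $j/\lambda$, $0\leq j\leq\lceil\lambda T\rceil$. It bounds each fixed grid pair $(p,q]$ by Markov's inequality applied to the fixed-interval Poisson moment $\mathbb E\exp\{z_a[Z(p,q)-\lambda_a^+(q-p)]\}\leq1$ at threshold $4K/z_a$. It then takes a union bound over the $(2+\lambda T)^2$ ordered pairs, and finally rounds $u$ down and $t$ up, which adds at most $2/\lambda$ of length. The factor $4$ in $4/z_a$ and the $2$ in $2\lambda_a^+/\lambda$ are exactly the costs of that two-dimensional grid and two-sided rounding.

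You instead use the maximal inequality, so each grid start already covers every right endpoint. This cuts the union bound to $1+\lambda T$ events and needs only $\eta=K+\log4$. You round only the left endpoint, so you need no grid points beyond $T$. You obtain the sharper bound $\lambda_a^+(t-u)+\lambda_a^+/\lambda+(K+\log4)/z_a$ and then loosen it to match $C_Z$. This is the same device the paper uses to make Lemma~\ref{thm:effective-height-growth} uniform. The paper's version is more elementary and self-contained: it uses only a fixed-time exponential moment, with no supermartingale maximal inequality.

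One point in your citation needs a sentence of justification. Lemma~\ref{lem:effective-poisson-stopping} is stated after fixing $\lambda_a^+$ under \eqref{eq:explicit-rate-margin}, that is, with $\lambda_a^+<r$ and $\lambda_a^+\tau<1$. Here, however, $\lambda_a^+>\lambda_a$ is arbitrary. For instance, the chain-growth proof uses $\lambda_a^+=\lambda_a+\varepsilon/\tau$, which need not lie below $r$. The proof of that lemma uses only $\lambda_a^+>\lambda_a$ and $z_a=\log(\lambda_a^+/\lambda_a)$, so your application is valid. You should say so explicitly rather than cite the lemma as stated.
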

\begin{proof}
Use grid points $j/\lambda$ for $0\leq j\leq\lceil\lambda T\rceil$.
The spacing $1/\lambda$ keeps the number of points proportional to
$2+\lambda T$ and adds at most $2/\lambda$ to the length of an enclosed
interval.
The execution is defined beyond $T$ if the last grid point exceeds the
observation horizon.
There are at most $(2+\lambda T)^2$ ordered grid pairs.

For a fixed grid pair $p\leq q$, we bound production above rate
$\lambda_a^+$ using the Poisson exponential moment.
With $z_a=\log(\lambda_a^+/\lambda_a)$, it gives
\[
  \mathbb E\exp\{z_a[Z(p,q)-\lambda_a^+(q-p)]\}
  =\exp\{(q-p)[\lambda_a(e^{z_a}-1)-z_a\lambda_a^+]\}\leq1.
\]
Markov's inequality therefore bounds the probability that this count
exceeds $\lambda_a^+(q-p)+4K/z_a$ by $e^{-4K}$.
Summing over all grid pairs bounds the probability of any such failure
by $(2+\lambda T)^2e^{-4K}\leq e^{-\kappa}/4$.

On the event where all grid bounds hold, consider any
$[u,t]\subseteq[0,T]$.
Round $u$ down to a grid point $p$ and $t$ up to a grid point $q$.
The interval inclusion gives $Z(u,t)\leq Z(p,q)$, and the extra length
satisfies $q-p\leq t-u+2/\lambda$.
Applying the grid bound yields
$Z(u,t)\leq\lambda_a^+(t-u)+2\lambda_a^+/\lambda+4K/z_a$.
Since $K\geq1$, the definition of $C_Z$ absorbs the fixed rounding cost
and proves \eqref{eq:explicit-uniform-adversarial-count} for every
interval on the same event.
\qed
\end{proof}

The uniform count bound now gives the real-time rate in \eqref{eq:effective-time-interval-rate}.
The strict-height transfer then proves growth relative to any starting height, as required by the theorem.

\begin{proof}[Theorem~\ref{thm:spv-backbone-chain-growth}]
Fix $0<\gamma<\rho(1-\lambda_a\tau)$.
Choose $0<r<\rho$ and $\varepsilon\in(0,1-\lambda_a\tau)$ so that $\gamma':=r(1-\lambda_a\tau-\varepsilon)>\gamma$.
Apply the stock, count, and uniform height estimates with confidence parameter $\widehat\kappa:=\kappa+\log3$.
Writing $\widehat K:=K+\log3$, their joint failure probability is at most $e^{-\kappa}$.
All constants below are independent of $T$ and $\kappa$.

If $\lambda_a\tau>0$, use Lemma~\ref{lem:effective-poisson-uniform} at rate $\lambda_a^+=\lambda_a+\varepsilon/\tau$ and denote its allowance coefficient by $C_Z$.
On the joint event, Theorem~\ref{thm:effective-time-intervals} gives
\begin{align*}
 A(t)-A(u)
 &\geq(t-u)-\tau Z(u,t)-C_A\widehat K\\
 &\geq(1-\lambda_a\tau-\varepsilon)(t-u)
              -(C_A+\tau C_Z)\widehat K
\end{align*}
for all $0\leq u\leq t\leq T$.
If $\lambda_a=0$ or $\tau=0$, then $A(t)=t$ almost surely and the same conclusion holds with $C_Z=0$.
Since $\widehat K\leq(1+\log3)K$, this proves \eqref{eq:effective-time-interval-rate} with an adjusted coefficient.

By Lemma~\ref{lem:strict-height-bridge}, the same event gives
\[
 L_i(t)-L_i(u)\geq\gamma'(t-u)-C_*K
\]
simultaneously for all honest miners and interval endpoints, where one may take
\[
 C_*:=(1+\log3)\{r(C_A+\tau C_Z)+C_H\}+rD.
\]
Taking $C_{\mathrm{cg}}\geq C_*/(\gamma'-\gamma)$ yields
$L_i(t)-L_i(u)\geq\gamma(t-u)$ whenever $t-u\geq C_{\mathrm{cg}}K$.
\qed
\end{proof}

Chain growth is now established.
For agreement, honest progress must also stay ahead of valid competing branches.
Proposition~\ref{prop:mixed-budget}, proved in the main text, accounts for both carrier losses and valid competing work using the same total adversarial production.
The proof below applies that comparison to a particular honest block after giving it a lead over every valid branch omitting it.

The construction needs two fixed time allowances: $D=\Delta+\tau$ for a public valid chain to become strictly eligible everywhere, and $b:=b_\partial=2\Delta+\tau$ for the boundary loss in Lemma~\ref{lem:interval-charging}.
We continue to assume $\lambda_a<\rho$ and $\lambda_a\tau<1$.

\subsection{Return to bounded stock}\label{app:framework-stock-return}
We now derive the security properties from the stock bound while allowing arbitrary allocation of adversarial power.
The proof constructs permanent honest blocks within sufficiently long intervals, as outlined below.

\noindent\textbf{Roadmap.}
Starting from a finite stopping time $u$, we construct a new honest block that belongs to every strict chain from some later time onward.
The final step turns this construction into liveness, persistence, and common prefix.

\noindent\emph{1. Return to a fixed stock bound.}
We first reduce usable stock to a fixed bound $m$, so that a fixed mining pattern can succeed with probability bounded away from zero uniformly over starting histories.
Lemma~\ref{lem:framework-stock-return} obtains this return by repeating the stock contraction and controlling the time and additional stock even when a comparison fails.
It bounds the exponential moment of the return time in terms of the starting stock.

\noindent\emph{2. Give a new honest block a clean lead.}
For any prescribed integer $L\geq1$, Lemma~\ref{lem:framework-clean-reset} uses sufficiently separated honest successes with no new adversarial successes.
After an initial wait for propagation and rejection, each invalid root can spoil at most one selected success, consuming a distinct block from the stock present at the pattern's start.
Enough remaining successes therefore raise the common valid height above all old hidden tips.
After exposed invalid branches have been rejected, a further run of $L$ successes extends one chain and gives its first block $B$ a lead over every valid branch omitting it, including hidden branches.
Usable stock is then zero. A final interval of length $D$ without mining completes propagation and validation, giving the clean lead of Definition~\ref{def:framework-clean-lead}.
Returning to bounded stock after unsuccessful patterns and repeating gives the lemma's waiting-time bound.

\noindent\emph{3. Show that the lead can survive forever.}
Lemma~\ref{lem:framework-clean-race} uses the initial public valid height minus $L$, plus all new valid adversarial production, to bound every valid branch omitting $B$.
This bound applies until the first effective honest success that omits $B$. While the bound stays below the valid height available to every honest miner, such a success is impossible and every strict chain retains $B$.
Reaching the available height marks a failure of the comparison.
To maintain this inequality, the shared-budget comparison charges each new adversarial block either one level of competing height or at most $r\tau$ levels of lost honest progress, with $\lambda_a<r<\rho$.
Both charges use the same total production, so the two rate conditions leave a positive advantage for honest progress.
Concentration then gives a sufficiently large fixed lead a probability of at least $3/4$ of surviving forever.
For finite failures of the comparison, the lemma also bounds both the duration and the remaining stock with exponential tails.

\noindent\emph{4. Bound the waiting time for a permanent block.}
Lemma~\ref{lem:framework-permanent-wait} repeats the clean-lead construction after each finite failure of the comparison.
The stock left by a failure determines the next reset's cost.
Choosing its exponent $\alpha>0$ in the reset bound small enough lets the failure tail control this cost together with the elapsed time.
Together with Lemma~\ref{lem:framework-clean-reset}, it makes the exponential-moment contribution of each failed cycle less than one at a sufficiently small exponent.
Summing over failed cycles gives an exponential waiting-time bound with an initial-stock factor $e^{\alpha S(u)}$.
The first clean lead whose comparison never fails supplies the new permanent block. All retries use conditional estimates at finite reset and crossing times.

\noindent\emph{5. Derive the three properties.}
Lemma~\ref{lem:framework-permanent-windows} combines this waiting-time bound with uniform stock control and a finite grid of windows of length proportional to $K$.
With probability at least $1-e^{-\kappa}$, every sufficiently long interval in $[0,T]$ then contains a newly mined honest block permanent by its end.
This gives block liveness, Theorem~\ref{thm:spv-race-liveness}.
For persistence, use the confirmation window ending when the observed chain's tip was mined.
Its permanent block was mined after the confirmed block and already lies on the observed chain, so its ancestry preserves the confirmed prefix, proving Theorem~\ref{thm:spv-race-persistence}.
Finally, Lemma~\ref{lem:framework-total-count} bounds total mining successes during a confirmation window.
After sufficient pruning, any non-genesis retained tip is therefore confirmed. Persistence then gives common prefix in Theorem~\ref{thm:spv-backbone-common-prefix}.

We first show that the usable unpublished stock returns to a fixed bound.
This prepares the construction of a block that belongs to every honest strict chain from some time onward.

Fix the rates used in the conditional height and production estimates.
The lower honest rate $r$ must exceed the upper adversarial rate $\lambda_a^+$, and the latter must still satisfy the validation condition.
Choose
\begin{equation}
 \lambda_a<\lambda_a^+<r<\frac{\lambda_h}{1+\lambda_h\Delta},\qquad
 \lambda_a^+\tau<1.
 \label{eq:framework-race-rates}
\end{equation}
The two strict assumptions leave room for these choices. To charge a new block on invalid ancestry in units of the height lower bound, write $c:=r\tau$. Set $M:=\max\{1,c\}$ for the larger of this charge and the one-block
charge for valid competing work. 

We need the height and production estimates after starting times chosen from the execution history.
For a confidence parameter $\eta\geq1$, use $C_H\eta$ as the height deficit and $C_a\eta$ as the production allowance. Here $C_H$ is given by \eqref{eq:effective-height-coefficient}, and $C_a=1/\log(\lambda_a^+/\lambda_a)$ when $\lambda_a>0$. When $\lambda_a=0$, take $C_a=0$ and use $Z=0$. The conditional height bound \eqref{eq:effective-height-input-conditional} and Lemma~\ref{lem:effective-poisson-stopping} imply that, after every finite stopping time $u$, the following bounds hold conditionally on $\mathcal F_u$ with failure probability at most $2e^{-\eta}$,
\begin{equation}
\begin{split}
 H(t)-H(u)&\geq r[A(t)-A(u)]-C_H\eta,\\
 Z(u,t)&\leq\lambda_a^+(t-u)+C_a\eta
 \qquad(t\geq u).
\end{split}
\label{eq:framework-conditional-cones}
\end{equation}
Both estimates cover all later endpoints on the same event. The height bound already accounts for the propagation gaps between selected effective successes in its proof.

The finite-horizon stock estimate alone allows stock of order $K$. We need a return to a fixed stock size before attempting the prescribed mining pattern in the next lemma. The following exponential-moment bound controls the time required for this return. Its parameter $\alpha$ determines how strongly the initial stock enters the bound.
Allowing $\alpha$ to be arbitrarily small will let us combine the return cost with the random stock left after later failures.

\begin{lemma}[Return time to bounded stock]
\label{lem:framework-stock-return}
For every sufficiently small $\alpha>0$, there are an integer $m\geq1$ and $q_r>0$ with the following property. After every finite stopping time $u$, there is a finite stopping time $v\geq u$ at which $S(v)\leq m$ and
\begin{equation}
 \mathbb E[e^{q_r(v-u)}\mid\mathcal F_u]\leq e^{\alpha S(u)}.
 \label{eq:framework-return-moment}
\end{equation}
The constants are uniform over admissible strategies and histories.
\end{lemma}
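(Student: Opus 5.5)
We iterate the contraction of Lemma~\ref{lem:effective-stock-contraction} with a fixed confidence parameter and stop at the first check time whose stock is at most $m$. The return time then has an exponential moment controlled by a supermartingale argument.

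\textbf{Setup.} Fix $\eta_0\geq1$ and an integer $m\geq1$. Both will be chosen large depending on $\alpha$. Set $v_0:=u$. If $S(v_j)\leq m$, stop with $v:=v_j$. Otherwise let $v_{j+1}$ be the check time of Lemma~\ref{lem:effective-stock-contraction} started at $v_j$ with $\eta=\eta_0$. Each $v_j$ is a finite stopping time, and $v_{j+1}-v_j\geq\eta_0/r$.

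\textbf{Good and bad comparisons.} On the good event of a comparison, which has conditional probability at least $1-2e^{-\eta_0}$, we have
\[
S(v_{j+1})\leq\rho_SS(v_j)+C_1\eta_0
\]
and
\[
v_{j+1}-v_j\leq \bigl(MS(v_j)+C'\eta_0\bigr)/(rd).
\]
On the bad event we have no such bounds, so we need a crude tail estimate valid without conditioning on success. From the definition of the check, $r(v_{j+1}-v_j)\leq MS_0+C_0\eta_0+1+r\tau Z(v_j,v_{j+1})$. Together with Lemma~\ref{lem:effective-poisson-stopping} applied with a variable level $\eta\geq\eta_0$ and \eqref{eq:effective-stock-increase}, this gives exponential tails for both $v_{j+1}-v_j$ and $S(v_{j+1})$, of the form $\Pr[\,\cdot>a S_0+bx\mid\mathcal F_{v_j}]\leq 2e^{-x}$. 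These tails hold because $\lambda_a^+\tau<1$ makes the duration inequality solvable.

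\textbf{Lyapunov function.} Take $\Phi(s)=e^{\alpha s}$. For small $q_r$, we aim for the one-step bound
\begin{equation*}
\mathbb E\bigl[e^{q_r(v_{j+1}-v_j)}\Phi(S(v_{j+1}))\,\big|\,\mathcal F_{v_j}\bigr]\leq\Phi(S(v_j))\quad\text{whenever }S(v_j)>m.
\end{equation*}
On the good event the exponent is at most
\[
q_r\bigl(MS_0+C'\eta_0\bigr)/(rd)+\alpha\bigl(\rho_SS_0+C_1\eta_0\bigr),
\]
and we want this below $\alpha S_0-\delta$. Choose $q_r\leq\alpha rd(1-\rho_S)/(2M)$. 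Then, once $S_0>m$ with $m$ large relative to $\eta_0/(1-\rho_S)$, the additive terms are absorbed by $\alpha(1-\rho_S)S_0/2$.

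On the bad event, which has probability at most $2e^{-\eta_0}$, the tails above show that the contribution is at most $2e^{-\eta_0}e^{\alpha' S_0}\cdot O(1)$ for some $\alpha'$ of order $\alpha(1+a)$, provided $\alpha,q_r$ are small enough that the exponential moments of the tails are finite. This is the main obstacle. The bad event multiplies $\Phi$ by a factor $e^{O(\alpha S_0)}$, which cannot be paid for by the $e^{-\eta_0}$ probability uniformly in $S_0$.

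I would try first to handle this by letting the confidence parameter grow with the stock, taking $\eta_j:=\max\{\eta_0,\theta S(v_j)\}$. The failure probability then becomes $2e^{-\theta S_0}$, which beats $e^{O(\alpha S_0)}$ once $\alpha\ll\theta$. The additive terms $C_1\eta_j$ stay below $\tfrac12(1-\rho_S)S_0$ for $\theta$ small. This is why the lemma is only claimed for sufficiently small $\alpha$.

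\textbf{Conclusion.} With the one-step inequality in hand, the process
\[
e^{q_r(v_{j\wedge N}-u)}\Phi\bigl(S(v_{j\wedge N})\bigr)
\]
is a nonnegative supermartingale, where $N$ is the first index with $S(v_N)\leq m$. The drift strictly decreases on good steps, so $N<\infty$ almost surely. Fatou's lemma then gives
\[
\mathbb E\bigl[e^{q_r(v-u)}\bigm|\mathcal F_u\bigr]\leq \mathbb E\bigl[e^{q_r(v-u)}\Phi(S(v))\bigm|\mathcal F_u\bigr]\leq e^{\alpha S(u)}.
\]
If $S(u)\leq m$ initially, then $v=u$ and the bound is trivial. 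All constants depend only on the rates, so they are uniform over strategies and histories.
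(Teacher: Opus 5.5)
Your proposal is correct and follows essentially the paper's route. The paper also makes the confidence parameter proportional to the stock: it uses $\eta=\epsilon S(u)$ at a checkpoint with threshold $(M+\theta)S(u)$, which is your $\eta_j=\theta S(v_j)$ fed into Lemma~\ref{lem:effective-stock-contraction}. It then contracts on the good event, bounds the bad event by an exponential moment of a crossing time combined with Cauchy--Schwarz, and concludes with the $e^{\alpha S}$ supermartingale and Fatou. Your unconditional production tails, with a truncation or H\"older step on the bad event, do the same job and cover $\tau=0$ without a separate case.

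One small fix concerns termination. You should derive $N<\infty$ almost surely from the minimum checkpoint length $\eta_0/r$, as the paper does: on non-return, $e^{q_r(v_j-u)}$ diverges, which contradicts the Fatou bound. Strict decrease of the drift on good steps does not by itself give termination.
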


\begin{proof}
When $\lambda_a=0$, the stock remains zero from genesis, so take $v=u$. Assume henceforth that $\lambda_a>0$.
We repeat the stock comparison at later times chosen from the current history, using a fluctuation allowance proportional to the starting stock. The allowance makes failure exponentially unlikely when the stock is large, while preserving contraction when the estimates hold.

Within this proof, let $\theta>0$ be an extra threshold allowance per initial stock block. Choose it small enough that
$\rho_{R,0}:=\lambda_a^+(M+\theta)/r<1$. Next choose $\epsilon>0$ small enough that $C_H\epsilon<\theta/2$ and $\rho_R:=\rho_{R,0}+C_a\epsilon<1$. The parameter $\epsilon$ sets the confidence level per unit of stock, while $\rho_R$ will bound the fraction left at a successful checkpoint. Recall that $R_0(u,t)$ and $R_1(u,t)$ count the old and new blocks published and consumed in the proof of Lemma~\ref{lem:interval-charging}. For initial stock $s:=S(u)$, define the checkpoint $\sigma$ as the first $t\geq u$ at which
\begin{equation}
 r(t-u)-cR_1(u,t)\geq(M+\theta)s.
 \label{eq:framework-return-check}
\end{equation}
The left side is determined by the current history and has only downward jumps. It reaches its threshold without an upward overshoot. Also, $R_1\leq Z$ and $\lambda_a\tau<1$ make it tend to infinity almost surely. Thus $\sigma$ is a finite stopping time, and equality holds at the checkpoint.

For sufficiently large $s$, use \eqref{eq:framework-conditional-cones} with $\eta=\epsilon s\geq1$.
Call the event where both estimates hold the good event. Combine the height estimate with the loss bound in terms of $R_0+R_1$ from the proof of Lemma~\ref{lem:interval-charging} to obtain
\begin{align*}
 H(\sigma)-H(u)+R_0(u,\sigma)
 &\geq r(\sigma-u)-cR_1(u,\sigma)
       +(1-c)R_0(u,\sigma)-rb-C_H\epsilon s\\
 &\geq (1+\theta-C_H\epsilon)s-rb>s.
\end{align*}
The old-charge term satisfies $(1-c)R_0\geq-(M-1)s$, since $R_0\leq s$.
The extra allowance $\theta s$ exceeds the height fluctuation and the fixed boundary cost once $s$ is large enough.
Lemma~\ref{lem:effective-stock-clearance} therefore clears all old stock. Only new adversarial blocks can remain, and the newly charged blocks must be subtracted because they are already public. Using the equality at the checkpoint, we obtain
\begin{align}
 S(\sigma)
 &\leq Z(u,\sigma)-R_1(u,\sigma)\notag\\
 &\leq\left[\frac{\lambda_a^+(M+\theta)}r+C_a\epsilon\right]s
       -(1-\lambda_a^+\tau)R_1(u,\sigma)
 \leq\rho_R s.
 \label{eq:framework-return-contraction}
\end{align}
The production estimate also bounds the time needed to reach the threshold. Substitute $R_1\leq Z$ into the checkpoint condition and solve the resulting linear lower bound for elapsed time. Writing $C_t$ for the resulting time coefficient gives
\begin{equation}
 \sigma-u\leq C_t s,\qquad
 C_t:=\frac{M+\theta+cC_a\epsilon}{r(1-\lambda_a^+\tau)}.
 \label{eq:framework-return-duration}
\end{equation}
The contraction and duration bounds thus hold except with conditional probability at most $2e^{-\epsilon s}$.

For the exponential moment, a rare failure cannot be allowed to have an uncontrolled cost.
We therefore bound the checkpoint duration and its final stock even outside the good event.
First suppose $\tau>0$ and write $Q:=(M+\theta)s$ for the threshold. Let $\zeta_Q$ be the first $t\geq0$ at which
$rt-cZ(u,u+t)\geq Q$. Replacing the new-charge count by all adversarial production can only delay the crossing, so $\sigma-u\leq\zeta_Q$. 

To bound this longer wait, take a sufficiently small parameter $\xi>0$. The associated time exponent and moment bound are
\begin{equation}
 \omega:=r\xi-\lambda_a(e^{c\xi}-1)>0,\qquad
 \mathbb E[e^{\omega\zeta_Q}\mid\mathcal F_u]\leq e^{\xi Q}.
 \label{eq:framework-busy-moment}
\end{equation}
Positivity holds for small $\xi$ because the derivative at zero is $r-\lambda_a c=r(1-\lambda_a\tau)>0$.
For the moment bound, stop the nonnegative Poisson martingale $\exp\{-\xi[rt-cZ(u,u+t)]+\omega t\}$ at $\zeta_Q\wedge n$. The positive drift makes $\zeta_Q$ finite, and the crossing has no upward overshoot. Fatou's lemma then gives the displayed bound. Stock can increase only through new production, so the same crossing
also gives
\[
 S(\sigma)\leq s+Z(u,u+\zeta_Q)\leq s+\zeta_Q/\tau.
\]

We now balance contraction on the good event against the cost of its complement, which we call the bad event.
Choose $\xi$ with $\xi(M+\theta)<\epsilon$.
Then choose $\alpha,q>0$ with
$2\alpha/\tau+2q\leq\omega$ and
$qC_t<\alpha(1-\rho_R)$.
Cauchy--Schwarz combines the bad-event probability
$2e^{-\epsilon s}$ with \eqref{eq:framework-busy-moment}.
Its contribution to $\mathbb E[e^{\alpha S(\sigma)+q(\sigma-u)}\mid\mathcal F_u]$
is at most
\[
 \sqrt{2}\exp\!\left\{
 \alpha s-\frac{\epsilon-\xi(M+\theta)}2s\right\}.
\]
The good-event contribution is at most $\exp\{(\alpha\rho_R+qC_t)s\}$.
Relative to $e^{\alpha s}$, both terms decrease exponentially in $s$ by the choices of $\xi$ and $q$.

When $\tau=0$, the checkpoint instead has deterministic duration $Q/r$. Use $S(\sigma)\leq s+Z(u,u+Q/r)$ and the Poisson exponential moment in the same Cauchy--Schwarz argument. The bad-event contribution is at most
\[
 \sqrt{2}\exp\!\left\{\alpha s-
 \frac12\left[\epsilon-\frac{M+\theta}{r}
   \{\lambda_a(e^{2\alpha}-1)+2q\}\right]s\right\}.
\]
Taking $\alpha,q$ sufficiently small makes the square bracket positive, and the same good-event estimate applies.
In either case, $\alpha$ may be chosen arbitrarily small, followed by a sufficiently small $q$.

Choose an integer $m\geq1$ large enough that the preceding estimates apply and the two relative factors sum to at most one for all $s>m$. Set $q_r=q$.
We have proved the conditional drift inequality
\begin{equation}
 \mathbb E[e^{\alpha S(\sigma)+q_r(\sigma-u)}\mid\mathcal F_u]
 \leq e^{\alpha S(u)}\qquad(S(u)>m).
 \label{eq:framework-return-drift}
\end{equation}
To turn it into a return-time bound, let $u_0=u$ and repeat the checkpoint construction whenever $S(u_j)>m$.
After the first checkpoint with stock at most $m$, keep the checkpoint times fixed. If the initial stock is already at most $m$, take $v=u$.

Along this stopped sequence, $e^{\alpha S(u_j)+q_r(u_j-u)}$ is a nonnegative supermartingale by
\eqref{eq:framework-return-drift}. Each checkpoint before return lasts at least $(M+\theta)m/r>0$.
On an execution that never returns, its time factor would therefore tend to infinity, contrary to the finite expectation bound and Fatou's lemma. Hence the return time $v$ is finite almost surely. Applying Fatou's lemma at return and using $e^{\alpha S(v)}\geq1$ gives \eqref{eq:framework-return-moment}.
\qed
\end{proof}

\subsection{Permanent honest blocks}
\label{app:framework-permanent}

A return to small stock lets us attempt a favorable mining pattern, but does not yet identify a block that will remain forever. We first seek an honest block whose chain is ahead of every valid branch omitting it, including hidden branches. The starting state must also have zero usable stock and enough time for its public height to become fully validated everywhere. We call these conditions a clean lead and measure its size in blocks.

\begin{definition}[Clean lead]
\label{def:framework-clean-lead}
An honest block $B$ has a \emph{clean lead of $L$ blocks} at time $z\geq D$ if $S(z)=0$, $H$ is constant on $[z-D,z]$, and every objectively valid chain in $\mathcal T(z)$ that omits $B$ has height at most $H(z)-L$. Here $L\geq1$ is an integer and the comparison includes unpublished chains. The height plateau implies that every strict chain has height $H(z)$ at $z$, so each contains $B$.
\end{definition}

The zero-stock condition removes the initial-stock term from future loss accounting.
The comparison with all valid chains includes branches that an adversary has withheld.
Finally, the plateau and Lemma~\ref{lem:strict-height-bridge} give strict height exactly $H(z)$ at every honest miner. Since a chain omitting $B$ is lower, every strict chain contains $B$ at the starting time.

We next bound how long it takes to produce this state with a prescribed lead $L$.
The return-time lemma first reduces the stock to a fixed size. A fixed mining pattern can then remove the old stock and produce the lead with a probability independent of the preceding execution. If that pattern fails to occur, we return to small stock and try again.

\begin{lemma}[Waiting time for a clean lead]
\label{lem:framework-clean-reset}
Fix $L\geq1$ and a sufficiently small $\alpha>0$. There are constants $q_R,C_R>0$ such that, after every finite stopping time $u$, one can find a finite stopping time $z\geq u$ and an honest block $B$ mined in $(u,z]$ that has a clean lead of $L$ at $z$, with
\begin{equation}
 \mathbb E[e^{q_R(z-u)}\mid\mathcal F_u]\leq C_Re^{\alpha S(u)}.
 \label{eq:framework-reset-moment}
\end{equation}
\end{lemma}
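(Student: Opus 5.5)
The plan is a renewal argument in three parts. Repeatedly apply Lemma~\ref{lem:framework-stock-return} to bring the stock down to at most $m$. From each such return time, try a fixed mining pattern of bounded duration, and show it yields a clean lead with probability bounded below by some $p>0$, uniformly over histories with $S\leq m$. If the pattern fails, restart from the failure time and return to small stock again.

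Write $v_0\geq u$ for the first return time. Starting there, define the pattern on a deterministic timetable of total length $\Theta$:
\begin{itemize}
\item[(i)] No adversarial success occurs in $(v_0,v_0+\Theta]$.
\item[(ii)] In an initial window of length $2\Delta+\tau$, exposed invalid roots propagate and are rejected.
\item[(iii)] Then come $m+m'+L$ honest successes. Each lies in its own slot and is separated from its neighbours by more than $D=\Delta+\tau$, with no other honest success in between.
\item[(iv)] A final quiet interval of length $D$ follows.
\end{itemize}
Here $m'$ is a fixed bound chosen to cover the heights of old hidden tips above $H(v_0)$; since $S(v_0)\leq m$, the choice $m'=m$ suffices. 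By the strong Markov property of the Poisson processes at $v_0$, this event has conditional probability at least some $p>0$ that depends only on the rates, $m$, $L$ and the slot lengths.

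On the pattern event, the argument runs as follows.
\begin{itemize}
\item No new adversarial blocks appear, so $Z_{\mathrm{bad}}=Z_{\mathrm{val}}=0$ on the interval.
\item Each isolated honest success either extends a valid chain, raising $H$ by one (it lands in a slot after all honest views agree on the current valid height), or is spoiled by an invalid root. By the charging in the proof of Lemma~\ref{lem:interval-charging}, a spoiling root consumes a distinct stock block published at height at least $H$. The spacing beyond $\tau$ ensures each such root spoils at most one selected success, since it is rejected before the next one.
\item Hence at most $m$ successes are lost, and $H$ rises by at least $m'+L$ beyond $H(v_0)$. That exceeds every old hidden tip by $L$.
\item Once the old stock is public-and-rejected or outgrown, $S=0$. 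The last $L$ successes form a single chain, because each is seen and valid everywhere before the next. Its first block $B$ is honest, was mined in $(u,z]$, and every valid chain omitting $B$ has height at most $H(z)-L$; this includes hidden chains, because all hidden blocks predate $v_0$ and lie below the threshold.
\item The final quiet interval of length $D$ makes $H$ constant on $[z-D,z]$, which is Definition~\ref{def:framework-clean-lead}.
\end{itemize}
I expect the main obstacle to be this deterministic verification. The delicate points are that a stock block can be both a spoiler and part of a hidden competing chain, and that adversarial tie-breaking could split honest successes. The disjointness argument of Lemma~\ref{lem:effective-stock-clearance} handles the first: consumed blocks expose invalid ancestry, so they cannot lie on valid hidden paths. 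Requiring separation greater than $D$ removes ties for the second.

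For the moment bound, let the attempt times be $v_0<v_1<\dots$. If attempt $j$ fails, it is aborted at a stopping time $f_j\leq v_j+\Theta$, and its stock is at most $m+Z(v_j,f_j)$. The next return then costs, by Lemma~\ref{lem:framework-stock-return}, a conditional moment of at most $\mathbb E[e^{\alpha(m+Z)}]\leq e^{\alpha m+\lambda_a\Theta(e^\alpha-1)}=:C_1$. Choose $q_R\leq q_r$ so small that the per-cycle factor satisfies
\[
\gamma:=(1-p)\,e^{q_R\Theta}C_1^{q_R/q_r}<1,
\]
interpolating the return moment through Jensen's inequality as $\mathbb E[e^{q_R(\cdot)}]\leq(\mathbb E[e^{q_r(\cdot)}])^{q_R/q_r}$. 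Note that $C_1\to e^{\alpha m}$ is not small, but raising it to the power $q_R/q_r\to0$ drives it to $1$, while $(1-p)<1$ is fixed; this makes $\gamma<1$ achievable. Summing the geometric series over failed cycles, with the first return contributing $e^{\alpha S(u)\,q_R/q_r}\leq e^{\alpha S(u)}$, gives \eqref{eq:framework-reset-moment} with $C_R=e^{q_R\Theta}/(1-\gamma)$. Since the expected number of cycles is finite, $z$ is finite almost surely.
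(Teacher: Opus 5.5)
Your route is the paper's: reach stock at most $m$ via Lemma~\ref{lem:framework-stock-return}, run a fixed pattern with conditional success probability at least $p>0$, and sum a geometric series over failed attempts. Your leaner pattern ($2m+L$ slots and no second silent interval) does work, but two steps need repair.

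The first concerns the last $L$ successes. The reason you give, that ``each is seen and valid everywhere before the next'', does not show that none of them is spoiled. It also does not show that the first of them lies above every old hidden tip. Separation greater than $D$ only limits each carrier to one spoiled success. You need an ordering count:
\begin{itemize}
\item Let $h_c$ be the common validated height, the minimum strict height over honest miners; this is what a carrier must match. Before the $k$-th success, $h_c\geq H(v_0)+e_k$, where $e_k$ counts earlier effective successes.
\item A carrier tip, necessarily an old hidden block, has height at most $H(v_0)+m$. So a spoiled $k$-th success has $e_k\leq m$.
\item It also has at most $m-1$ spoiled predecessors, so $k\leq 2m$.
\end{itemize}
Hence successes $2m+1,\dots,2m+L$ are all effective. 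The $(2m+1)$-th has at least $m$ effective predecessors, so $h(B)\geq H(v_0)+m+1$ and the lead is at least $L$. The paper avoids this count by using $2m+2$ absorbing slots followed by a silent interval of length $b$. By the time the $L$-run starts, all exposed carriers have been rejected and the common height already exceeds every old tip.

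The second is an invalid inequality in the moment step. Your factor $\gamma=(1-p)e^{q_R\Theta}C_1^{q_R/q_r}$ multiplies the failure probability by an unconditional bound on the next return cost. That treats failure as independent of $Z$, which it is not: any new adversarial success makes the attempt fail, while success forces $Z=0$. With $\alpha'=\alpha q_R/q_r$,
\[
\mathbb E[\mathbf 1_{\mathrm{fail}}e^{\alpha'(m+Z)}]=\mathbb E[e^{\alpha'(m+Z)}]-\Pr(\mathrm{succ})e^{\alpha' m}.
\]
When $\lambda_a>0$ this strictly exceeds $\Pr(\mathrm{fail})\,\mathbb E[e^{\alpha'(m+Z)}]$, so the product form is unjustified.

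Replace Jensen by H\"older with exponents $q_r/(q_r-q_R)$ and $q_r/q_R$. Writing $V$ for the failed attempt plus the next return,
\[
\mathbb E[\mathbf 1_{\mathrm{fail}}e^{q_RV}\mid\mathcal F_{v_j}]\leq(1-p)^{1-q_R/q_r}\bigl(e^{q_r\Theta}C_1\bigr)^{q_R/q_r}.
\]
This tends to $1-p<1$ as $q_R\to0$, and your geometric summation then goes through unchanged. The paper instead uses the convexity bound $e^{qV}-1\leq(q/q_r)(e^{q_rV}-1)$. That bounds the failed-cycle contribution by $1-p+(q/q_r)B_r$ directly, without separating the failure event from the cost.
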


\begin{proof}
Apply Lemma~\ref{lem:framework-stock-return} and let $v$ be the resulting stopping time with stock at most $m$.
We prescribe a mining pattern after $v$ that works under every admissible release and validation schedule.
It begins with enough honest successes to outgrow the old hidden branches, then adds $L$ further successes to create the required lead.

To specify this pattern, let $w:=1/\lambda_h$ be the length of a window in which we request exactly one honest success. Use slots of length $g:=D+2w$, with the requested success in each slot's first $w$ units and none in its remainder. Successes in consecutive slots are then more than $D$ apart. Use $2m+2$ slots to overcome the initial stock and $L$ further slots for the lead, giving $n:=2m+2+L$ successes in total. Place a silent interval of length $b$ before the first group and another between the groups, followed by a final silent interval of length $D$.
The total duration is $d_*:=2b+ng+D$.

Require no adversarial success throughout these $d_*$ time units. Require honest successes only in the $n$ specified windows, exactly one in each. Conditional on $\mathcal F_v$, independent Poisson increments give this event probability
\begin{equation}
 p:=e^{-\lambda d_*}(\lambda_h w)^n>0.
 \label{eq:framework-clean-probability}
\end{equation}
The value of $p$ depends only on the fixed constants, so the same lower bound applies whenever an attempt starts with stock at most $m$.

We first verify what the initial group achieves on this event. By time $v+b$, every honest miner knows a fully validated chain of height $H(v)$. Every first-invalid root exposed by $v+\Delta$ has also been rejected
everywhere. A later root that spoils a specified success must therefore consume an old tip from the stock $S(v)$, by the accounting in the proof of Lemma~\ref{lem:interval-charging}. There are at most $m$ such tips.
The same root cannot spoil two specified successes, because it is known to an honest miner by the first and rejected everywhere within $D$ time, before the next specified success. Hence at least $m+2$ of the first $2m+2$ successes are effective. Each increases the common available valid height by one before the next success, starting from $H(v)$.

To see why this growth removes the old threat, consider an old hidden tip with no public invalid ancestor at $v$.
The path argument in Lemma~\ref{lem:effective-stock-clearance} puts each of its path blocks above $H(v)$ in $\mathcal U(v)$. There are at most $m$ such blocks, so the tip's height is at most $H(v)+m$.
The effective successes raise the common valid height above every such tip.
Roots already exposed are rejected everywhere by the end of the following silent interval of length $b$.
Thus an old tip is either too low to attract mining or has exposed invalid ancestry. No new adversarial blocks are available during the prescribed event.

After the first group and its following silent interval, all remaining eligible invalid branches are below the common valid height, and every previously exposed invalid subtree has been rejected. Each further honest success is therefore effective. Since successive successes are separated by more than $D$, each new block becomes the unique highest eligible tip everywhere before the next success. The final $L$ specified successes consequently extend a single honest chain above all prior branches.
Let $B$ be the first block in this run and set $z=v+d_*$. Every valid chain omitting $B$ ends at or below the height of its parent, while the run adds $L$ levels. All remaining hidden tips are below the run, giving $S(z)=0$.
The last $D$ units contain no mining, and old hidden tips are too low to raise public height when released.
Hence $H$ is constant throughout this final interval, and propagation and validation finish everywhere.
The block $B$ has a clean lead of $L$ at $z$. 

If the prescribed event does not occur, end the attempt after $d_*$ time, return to stock at most $m$, and repeat.
At the attempt's end, stock is bounded by $m+Z(v,v+d_*)$. The return-time moment in \eqref{eq:framework-return-moment}, followed by the Poisson exponential moment for this count, therefore controls the whole cost of an unsuccessful attempt and its return. Writing $V$ for that duration, a uniform bound at exponent $q_r$ is
\[
 B_r:=\exp\{q_r d_*+\alpha m+
                   \lambda_a d_*(e^\alpha-1)\}.
\]

We need the contribution of a failed attempt to be less than one after weighting by its duration.
For $0<q\leq q_r$, convexity gives $e^{qV}-1\leq(q/q_r)(e^{q_rV}-1)$. Since the probability of failure is at most $1-p$, its weighted contribution is at most $1-p+(q/q_r)B_r$. Choose $q\leq pq_r/(2B_r)$ to make this at most $1-p/2$. Conditioning at each finite return time, the contributions of successive failed attempts sum to at most
$\sum_{j\geq0}(1-p/2)^j=2/p$. The successful attempt contributes at most $e^{qd_*}$. Including the initial return bound proves \eqref{eq:framework-reset-moment} with $q_R=q$ and $C_R=2e^{qd_*}/p$.
All attempt ends and return times are stopping times, so the same conditional estimates remain valid at every restart.
\qed
\end{proof}

A clean lead supplies the initial advantage, but new adversarial work continues after it is formed.
We now bound the probability that valid competing work catches the honest height available to every honest miner.
The comparison will count all new valid adversarial production as potential work on a competing branch and charge new invalid ancestry through lost effective time. Its first tie with the initial lead will be a stopping time at which we can restart the construction if needed. A tie in this upper-bound comparison need not be an actual removal
of the honest block. Absence of such a tie will suffice for permanence.

\begin{lemma}[Persistence after a clean lead]
\label{lem:framework-clean-race}
There are an integer $L\geq1$ and constants $c_f,C_f>0$ such that the following holds at every finite stopping time $z$ with a clean lead $(B,L)$.
There is a stopping time $\nu_z\in[z,\infty]$ such that
\begin{equation}
 \Pr(\nu_z<\infty\mid\mathcal F_z)\leq\tfrac14,
 \label{eq:framework-race-failure}
\end{equation}
and $B$ belongs to every strict chain at every time $t\geq z$ on $\{\nu_z=\infty\}$. Moreover, for every $v\geq0$,
\begin{equation}
 \Pr\!\left(\nu_z<\infty,\
  \nu_z-z+S(\nu_z)/\lambda>v\mid\mathcal F_z\right)
 \leq C_f e^{-c_f v}.
 \label{eq:framework-failed-race-tail}
\end{equation}
\end{lemma}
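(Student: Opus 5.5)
We define the stopping time $\nu_z$ as the first failure of a pathwise upper-bound comparison. Then we show that this comparison is a random walk with positive drift, so that a large initial lead rarely fails.

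\textbf{Definition of $\nu_z$.} Let $W(t)$ be the greatest height of an objectively valid chain in $\mathcal T(t)$ that omits $B$. Before the first effective honest success whose block omits $B$, every new block on such a chain is adversarial and valid. Hence, up to that time,
\[
W(t)\leq H(z)-L+Z_{\mathrm{val}}(z,t).
\]
By the plateau in Definition~\ref{def:framework-clean-lead} and Lemma~\ref{lem:strict-height-bridge}, every honest miner has strict height at least $H(t-D)$ at time $t$. So we compare against the honest height available everywhere. Set
\[
\nu_z:=\inf\{t\geq z: H(z)-L+Z_{\mathrm{val}}(z,t)\geq \underline H(t)\},
\]
where $\underline H(t)$ is $H(z)$ on $[z,z+D]$ and $H(t-D)$ afterwards. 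Before $\nu_z$, every valid branch omitting $B$ is strictly lower than every honest miner's strict height. So no strict chain omits $B$, and optimistic honest mining on valid ancestry extends a descendant of $B$. An induction over successive mining events shows that the first effective honest success omitting $B$ cannot occur before $\nu_z$, which closes the circularity in the bound on $W$. On $\{\nu_z=\infty\}$, $B$ therefore lies in every strict chain forever. The time $\nu_z$ is a stopping time because $\underline H$ and $Z_{\mathrm{val}}$ are adapted.

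\textbf{Failure probability.} Since $S(z)=0$, Lemma~\ref{lem:interval-charging} has no stock term, and the boundary allowance $b$ is a constant. Combining the conditional height bound \eqref{eq:effective-height-input-conditional} with Proposition~\ref{prop:mixed-budget} gives
\[
\underline H(t)-H(z)+L-Z_{\mathrm{val}}(z,t)\geq L+r(t-z)-M\,Z(z,t)-C,
\]
where $C$ collects $r(b+D)$ and the height deficit. The next step is to avoid a union bound over $\eta$. For this, I would directly use an exponential supermartingale of the form $\exp\{\xi[M Z(z,t)-r(t-z)]\}$ with small $\xi>0$. Its drift is nonpositive because $\lambda_a M<r$. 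The height bound holds with probability at least $1-e^{-\eta}$ for $\eta\approx L/(3C_H)$. By the maximal inequality, the walk $MZ(z,t)-r(t-z)$ exceeds $L/3$ with probability at most $e^{-\xi L/3}$. Choosing $L$ large makes the total failure probability at most $1/4$, which is \eqref{eq:framework-race-failure}. The main obstacle I expect is that the height lower bound concerns $H$, while $\underline H$ lags by $D$. Losing $rD$ levels fixes this; it is absorbed into $C$.

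\textbf{Tail bound \eqref{eq:framework-failed-race-tail}.} We need an exponential tail for the failure time, and for the stock at failure. For the failure time, a failure after time $v$ requires either a height deficit larger than about $cv$, which has probability at most $e^{-cv/C_H}$, or $MZ(z,t)-r(t-z)\geq -L+(r-\lambda_a M)v/2$ for some $t\geq z+v$. By the same supermartingale started with a positive tilt, the latter has probability $O(e^{-c v})$. For the stock at failure, use \eqref{eq:effective-stock-increase}: $S(\nu_z)\leq Z(z,\nu_z)$. Then $S(\nu_z)/\lambda>v/2$ requires either $\nu_z-z>\gamma v$ or $Z(z,z+\gamma v)>\lambda v/2$ for small $\gamma$. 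The second event has an exponential Poisson tail, via Lemma~\ref{lem:effective-poisson-stopping}. A union bound over these events gives constants $c_f,C_f$ uniform over histories.
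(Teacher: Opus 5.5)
Your proposal is correct and follows the paper's proof essentially step for step. It uses the same comparison $G_z(t)=Z_{\mathrm{val}}(z,t)-[H(t-D)-H(z)]$ (your $\underline H(t)$ coincides with $H(t-D)$ by the plateau), the same induction over mining events for permanence, the same shared-budget bound $G_z(t)\le MZ(z,t)-r(t-z)+r(b+D)+C_H\eta$, and the same time split for the stock at failure; tilting $MZ-r(t-z)$ directly instead of using the rate-$\lambda_a^+$ production cone is only cosmetic. One slip to fix in the tail bound: a failure after time $v$ requires $MZ(z,t)-r(t-z)\ge L-r(b+D)-cv$ for some $t\ge z+v$, not $\ge -L+(r-\lambda_aM)v/2$, and this event is exponentially unlikely in $v$ because for small $c$ it lies order $v$ above the drift line $-(r-\lambda_aM)(t-z)$, exactly as in the paper's deviation condition $E_z\ge L+\delta v$.
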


\begin{proof}
At time $z$, every valid branch omitting $B$ starts at height at most $H(z)-L$.
Until an effective honest success first extends such a branch, all new blocks on it are valid adversarial blocks.
Its growth is consequently bounded by $Z_{\mathrm{val}}(z,t)$. Every honest miner meanwhile has valid height at least $H(t-D)$ available for strict selection. To compare the two increases, define
\begin{equation}
 G_z(t):=Z_{\mathrm{val}}(z,t)-[H(t-D)-H(z)],\qquad
 \nu_z:=\inf\{t\geq z:G_z(t)\geq L\}.
 \label{eq:framework-race-deficit}
\end{equation}
The quantity $G_z$ is the competing work minus the increase in common available height.
Reaching $L$ uses up the initial lead, so a tie is treated as failure because the adversary controls tie breaking.
All quantities are determined by the complete history at time $t$, making $\nu_z$ a stopping time.

The rate difference left after charging each new adversarial success at cost at most $M$ is $\delta:=r-M\lambda_a^+>0$. Positivity follows from \eqref{eq:framework-race-rates}. We also need an allowance for the lag $D$, boundary loss $b$, and the two random deficits. Write $B_0:=r(D+b)$ for the fixed part and
$B_1:=C_H+MC_a$ for the coefficient of the confidence parameter. On the event in \eqref{eq:framework-conditional-cones} starting at $z$, use the zero initial stock in Lemma~\ref{lem:interval-charging}. For $t\geq z+D$, apply the height estimate at $t-D$ and enlarge the bad-work count to the full interval ending at $t$.
The resulting upper bound is
\begin{align}
 G_z(t)
 &\leq Z_{\mathrm{val}}(z,t)+cZ_{\mathrm{bad}}(z,t)
       -r(t-z)+r(D+b)+C_H\eta\notag\\
 &\leq-\delta(t-z)+B_0+B_1\eta.
 \label{eq:framework-race-cone}
\end{align}
The second line uses $Z_{\mathrm{val}}+cZ_{\mathrm{bad}}\leq MZ$ and the production bound.
For $z\leq t\leq z+D$, the height plateau gives $G_z(t)=Z_{\mathrm{val}}(z,t)$.
The same upper bound follows from the production estimate, since $\lambda_a^++\delta\leq r$ and $B_1\geq C_a$.

We now measure the largest upward deviation from this decreasing linear bound by
\[
 E_z:=\sup_{t\geq z}\{G_z(t)+\delta(t-z)\}.
\]
The preceding estimates give $\Pr(E_z>B_0+B_1\eta\mid\mathcal F_z)\leq2e^{-\eta}$ for every
$\eta\geq1$.
Choose a fixed integer $L>B_0+B_1\log8$.
If the comparison fails after more than $v$ time, its crossing
requires $E_z\geq L+\delta v$. Use the deviation bound with
$\eta=\log8+\delta v/(2B_1)$ to obtain 
\begin{equation}
 \Pr(\nu_z<\infty,\ \nu_z-z>v\mid\mathcal F_z)
 \leq\tfrac14 e^{-\delta v/(2B_1)},\qquad v\geq0.
 \label{eq:framework-failed-time-tail}
\end{equation}
The plateau gives $G_z(z)=0<L$, so $\nu_z>z$. Taking $v=0$ proves \eqref{eq:framework-race-failure}.

It remains to justify why no crossing makes $B$ permanent. Before a first effective honest success omitting $B$, every valid branch omitting it has height at most
\begin{equation}
 H(z)-L+Z_{\mathrm{val}}(z,t)
       =H(t-D)+G_z(t)-L.
 \label{eq:framework-competing-height}
\end{equation}
While $G_z(t)<L$, this height is strictly below the valid height available to every honest miner.
Such a branch cannot be a longest strict chain or the valid parent selected for an effective honest success.
For a mining event, apply this comparison immediately before the success, to the parent then selected.
In particular, if $D=0$, use $H(t-)$ so that the block being mined does not contribute to the information available before its own mining. Induction over mining events rules out the first effective success omitting $B$.
The height bound then continues to apply to every competing valid branch and excludes every strict-chain selection omitting $B$.

For later restarts, we also need to control the stock accumulated when a comparison does fail.
Since $S(z)=0$, a finite failure satisfies $S(\nu_z)\leq Z(z,\nu_z)$. Fix a stock threshold $x>0$ and split failures at time $z+x/[2(\lambda_a+\lambda)]$. The time-tail estimate bounds the probability of a later failure by
$\tfrac14\exp\{-\delta x/[4B_1(\lambda_a+\lambda)]\}$. An earlier failure with stock above $x$ requires more than $x$ adversarial successes before this cutoff. Their conditional Poisson mean is at most $x/2$, so applying Markov's inequality to the exponential moment with parameter $\log2$ bounds that probability by $e^{-(\log2-1/2)x}$.
Together, these bounds give an exponential tail for stock on finite failure. Apply the time bound at $v/2$ and the stock bound at $\lambda v/2$ to obtain \eqref{eq:framework-failed-race-tail}.
\qed
\end{proof}

A clean lead can fail, so one attempt does not yet give a waiting-time bound for a permanent block.
We repeat the construction after each finite crossing. Lemma~\ref{lem:framework-clean-race} controls both the time spent on that failed comparison and the stock present when the next reset begins. Together with the reset bound, these estimates control the total time spent on unsuccessful attempts.

\begin{lemma}[Waiting time for a permanent honest block]
\label{lem:framework-permanent-wait}
There are constants $q_p,C_p,\alpha>0$ such that, after every finite stopping time $u$, there is a random time $\Theta_u\geq u$ and an honest block mined in $(u,\Theta_u]$ that belongs to every strict chain at all times $t\geq\Theta_u$, with
\begin{equation}
 \mathbb E[e^{q_p(\Theta_u-u)}\mid\mathcal F_u]
 \leq C_p e^{\alpha S(u)}.
 \label{eq:framework-permanent-moment}
\end{equation}
\end{lemma}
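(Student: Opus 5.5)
We prove the lemma by a renewal argument that alternates the clean-lead construction of Lemma~\ref{lem:framework-clean-reset} with the comparison of Lemma~\ref{lem:framework-clean-race}.

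\emph{The cycle construction.} Fix $L$ as in Lemma~\ref{lem:framework-clean-race}. Choose a small $\alpha>0$ admissible in Lemma~\ref{lem:framework-clean-reset}, with accompanying constants $q_R,C_R$.

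Set $u_0:=u$. For $j\geq0$, let $z_j\geq u_j$ be the clean-lead stopping time from Lemma~\ref{lem:framework-clean-reset} started at $u_j$, and let $B_j$ be its honest block, mined in $(u_j,z_j]\subseteq(u,z_j]$. Then run the comparison from $z_j$. If $\nu_{z_j}=\infty$, stop and set $\Theta_u:=z_j$; Lemma~\ref{lem:framework-clean-race} then makes $B_j$ belong to every strict chain at all $t\geq z_j$. Otherwise set $u_{j+1}:=\nu_{z_j}$, which is a finite stopping time, and repeat.

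Each cycle succeeds with conditional probability at least $3/4$ given $\mathcal F_{z_j}$. Hence the number of cycles is almost surely finite and geometrically dominated, and $\Theta_u$ is almost surely finite.

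\emph{Bounding one failed cycle.} Fix $q\leq q_R$ to be chosen small. The key quantity is
\[
W_j:=\mathbb E\bigl[e^{q(u_{j+1}-u_j)+\alpha S(u_{j+1})}\mathbf 1\{\nu_{z_j}<\infty\}\mid\mathcal F_{u_j}\bigr],
\]
which we bound in two steps.

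\emph{Step 1.} Condition on $\mathcal F_{z_j}$ and use the tail \eqref{eq:framework-failed-race-tail}. On a failure, $q(\nu-z)+\alpha S(\nu)\leq\max\{q,\alpha\lambda\}\,(\nu-z+S(\nu)/\lambda)$. Integrate the exponential tail with exponent $q'\!:=\max\{q,\alpha\lambda\}<c_f$. This gives
\[
\mathbb E\bigl[e^{q(\nu-z)+\alpha S(\nu)}\mathbf 1\{\nu<\infty\}\mid\mathcal F_z\bigr]\leq \epsilon(q'),\qquad \epsilon(q')\to\tfrac14 \text{ as } q'\to0.
\]
This limit follows because the contribution equals $\Pr(\nu<\infty\mid\mathcal F_z)$ plus $\int q'e^{q'v}C_fe^{-c_fv}dv$, and the integral vanishes as $q'\to0$.

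\emph{Step 2.} Multiply by the reset factor $e^{q(z_j-u_j)}$. By Jensen's inequality applied to \eqref{eq:framework-reset-moment} with $q\leq q_R$, its conditional expectation is at most $C_R^{q/q_R}e^{(q/q_R)\alpha S(u_j)}$.

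\emph{The main obstacle.} Steps 1 and 2 give $W_j\leq\epsilon\, C_R^{q/q_R}e^{\alpha' S(u_j)}$, and the stock exponent does not contract from cycle to cycle. This is the step I expect to be hardest. The factor $e^{\alpha S(u_j)}$ enters the reset bound but must not compound across cycles. Two features resolve it. First, for $j\geq1$ every cycle starts at a failure time $u_j$, whose stock is already controlled by the weight $e^{\alpha S(u_j)}$ carried in $W_{j-1}$. Second, the reset bound \eqref{eq:framework-reset-moment} charges exactly $e^{\alpha S(u_j)}$, and the failure estimate restarts from $S(z_j)=0$.

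The right bookkeeping is therefore to define the per-cycle weight as the product of the reset cost at stock $s$ and the failure factor. For failure cycles $j\geq1$, use the full reset moment rather than Jensen. Applying Hölder's inequality with the tail bound at a smaller exponent gives
\[
\mathbb E\bigl[e^{q(u_{j+1}-u_j)}\mid\mathcal F_{u_j}\bigr]\cdot(\text{stock factor})
\leq C_R\,e^{\alpha S(u_j)}\cdot\epsilon\, e^{-\alpha S(u_j)}\cdot e^{\alpha S(u_{j+1})},
\]
where the $e^{-\alpha S(u_j)}$ term telescopes against the stock factor of the next cycle.

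The remaining worry is that $C_R\epsilon$ may exceed one, since $C_R=2e^{qd_*}/p$ is large. To handle this, I would repeat the failure comparison with a larger lead in Lemma~\ref{lem:framework-clean-race}, taking $L>B_0+B_1\log(8C_R')$, so that the failure probability is at most $1/(8C_R)$. The reset constant $C_R$ depends on $L$, however, through $p$. I would try to decouple the two: keep the reset pattern at a fixed $L_0$ and let the lead grow afterwards by pure honest progress. Alternatively, I would apply the convexity trick of Lemma~\ref{lem:framework-clean-reset}: failure has probability at most $1/4$, and $e^{qV}-1\leq(q/q_1)(e^{q_1V}-1)$ makes the time weighting cost $O(q)$. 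Weighting by duration then yields a per-cycle factor at most $\tfrac14+O(q)+O(\alpha)<1/2$.

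\emph{Conclusion.} With the per-cycle factor below $1/2$, summing the geometric series over failed cycles and adding the successful cycle's reset cost gives \eqref{eq:framework-permanent-moment} with $C_p\approx 2C_R$ and some $q_p>0$. Every conditioning is at the finite stopping times $u_j$ and $z_j$, so these estimates remain valid at every restart.
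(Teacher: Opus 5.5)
Your overall plan is the paper's: restart the reset lemma at each observed crossing, show that a failed cycle weighted by its duration costs only slightly more than its failure probability $1/4$, and sum a geometric series to get $C_p\approx 2C_R$. Your Steps~1 and~2 are correct.

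However, the ``main obstacle'' is not real. Because $q\leq q_R$ and $S\geq0$, Step~2 gives $e^{(q/q_R)\alpha S(u_j)}\leq e^{\alpha S(u_j)}$, so Steps~1--2 already give the drift inequality
\[
 \mathbb E\bigl[M_{j+1}\mid\mathcal F_{u_j}\bigr]\leq\epsilon(\max\{q,\alpha\lambda\})\,C_R^{q/q_R}\,M_j,
 \qquad
 M_j:=e^{q(u_j-u)+\alpha S(u_j)}\,\mathbf 1\{\nu_{z_0},\dots,\nu_{z_{j-1}}<\infty\}.
\]
The stock weight never needs to contract; it is simply carried from one reset start to the next. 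The constants can be chosen in order:
\begin{itemize}
\item Fix $L$ from Lemma~\ref{lem:framework-clean-race}; its $c_f,C_f$ do not depend on $\alpha$.
\item Choose $\alpha$ admissible in Lemma~\ref{lem:framework-clean-reset} with $\alpha\lambda<c_f$ and, for your $\epsilon(\cdot)$, $\epsilon(\alpha\lambda)\leq1/3$.
\item Take $q\leq\min\{q_R,\alpha\lambda\}$ with $C_R^{q/q_R}\leq3/2$.
\end{itemize}
The factor is then at most $1/2$. Bound each potential final reset at $u_j$ by $C_R^{q/q_R}e^{\alpha S(u_j)}$ to get $\mathbb E[e^{q(\Theta_u-u)}\mid\mathcal F_u]\leq C_R^{q/q_R}\sum_j\mathbb E[M_j\mid\mathcal F_u]\leq2C_Re^{\alpha S(u)}$.

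Two of your detours should be dropped. The worry that $C_R\epsilon>1$ arises only when you abandon Jensen, so enlarging $L$ is unnecessary (and circular, as you noticed). The H\"older display is not a valid inequality. Its right side contains $S(u_{j+1})$, which is not $\mathcal F_{u_j}$-measurable. Also, the race estimate starts from $S(z_j)=0$, so it cannot supply a factor $e^{-\alpha S(u_j)}$.

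The paper brackets cycles differently: from one clean state $z$ to the next, that is, a failed race plus the following reset, conditioned at $z$ where $S(z)=0$. Choosing $\alpha$ small lets the failed-race tail absorb the reset cost $C_Re^{\alpha S(\nu_z)}$. This gives a history-independent bound $\mathbb E[e^{q_1V};\nu_z<\infty\mid\mathcal F_z]\leq B_f$, and convexity then yields the per-cycle factor $\tfrac14+(q/q_1)B_f\leq\tfrac12$.

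Your convexity alternative is this argument, but as written $V$ and $q_1$ are undefined. Moreover, the uniform moment bound it needs holds at $z$. At $u_j$ it holds only up to the factor $e^{\alpha S(u_j)}$, which you would then carry exactly as in the drift above.

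The paper's bracketing gives a pure renewal at zero stock, with the first reset as the only stock-dependent term. Your Steps~1--2 use $e^{\alpha S}$ as a Lyapunov weight at reset starts and need only Jensen and the failed-race tail, avoiding $B_f$ and the convexity step.
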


\begin{proof}
Fix the lead $L$ from Lemma~\ref{lem:framework-clean-race}. Its failed-race tail gives a uniform exponential moment for both $\nu_z-z$ and $S(\nu_z)$ on $\{\nu_z<\infty\}$. Choose $\alpha>0$ small enough that this bound controls $e^{\alpha S(\nu_z)}$, and apply Lemma~\ref{lem:framework-clean-reset} with this $\alpha$ and $L$.
The choice ensures that the stock left by a failure has an integrable cost in the next reset's waiting-time bound.

Starting at $u$, let $z_0$ be the first clean state produced by the reset construction. Run the comparison of
Lemma~\ref{lem:framework-clean-race} from $z_0$. If it crosses at the finite time $\nu_{z_0}$, begin the next reset there. Repeat only after a finite crossing has been observed. Every reset start and completion used in this construction is therefore a stopping time, so the conditional bounds remain available after each
failure.

To sum the durations of failed attempts, let $V$ denote the time from one clean state to the next on a finite failed race, including the intervening reset.
Condition first at the crossing time. The reset bound contributes $C_Re^{\alpha S(\nu_z)}$, while the failed-race bound controls this factor together with the time already spent before the crossing. Equations~\eqref{eq:framework-reset-moment} and~\eqref{eq:framework-failed-race-tail} thus give some $q_1>0$ and finite $B_f\geq1$ such that
\[
 \mathbb E[e^{q_1V};\ \nu_z<\infty\mid\mathcal F_z]\leq B_f.
\]
We also need the weighted contribution of a failure to be less than one, so that repeated failures form a summable series. At exponent zero this contribution is the failure probability, at most $1/4$ by \eqref{eq:framework-race-failure}. For $0<q\leq q_1$, convexity bounds the contribution at exponent $q$ by $1/4+(q/q_1)B_f$.
Choose $q_p\leq\min\{q_R,q_1/(4B_f)\}$. Each failed cycle then contributes at most $1/2$ to the exponential
moment, and the first reset remains covered by its own moment bound.

Define $\Theta_u$ as the completion time of the first clean state whose comparison never crosses.
Such a state occurs almost surely, because the conditional probability of another failure is always at most $1/4$.
The definition of $\Theta_u$ depends on the future and does not make it a stopping time. We use it only to record the beginning of permanence. All conditioning below is at the finite reset and crossing times.

Sum over the number of failures before this final clean state. The first reset contributes at most $C_Re^{\alpha S(u)}$, every failed cycle contributes at most $1/2$, and the final success indicator is at
most one. Repeated conditional expectation gives 
\[
 \mathbb E[e^{q_p(\Theta_u-u)}\mid\mathcal F_u]
 \leq C_Re^{\alpha S(u)}\sum_{j\geq0}2^{-j}
 =2C_Re^{\alpha S(u)}.
\]
The honest block in the final clean state was mined after $u$. Its comparison never crosses, so
Lemma~\ref{lem:framework-clean-race} places it in every strict chain at every time from $\Theta_u$ onward.
Taking $C_p=2C_R$ proves the claimed waiting-time bound.
\qed
\end{proof}

\subsection{Block liveness and mining-time persistence}
\label{app:framework-race-properties}

The preceding lemma starts from one chosen time.
Persistence and liveness require a guarantee for every relevant interval in the observation period.
We obtain one event on which every sufficiently long interval contains a newly mined honest block that is permanent by the interval's end. The stock bound controls the cost at each starting time, and the exponential waiting-time bound lets us cover a finite grid of windows.
Every longer interval will contain one complete grid window.

\begin{lemma}[Permanent blocks in all long windows]
\label{lem:framework-permanent-windows}
There is $C_W>0$ such that, for $W=C_WK$, with probability at least $1-e^{-\kappa}$, every interval $(s,s+u]\subseteq[0,T]$ of length $u\geq2W$ contains the mining time of an honest block that belongs to
every strict chain at every time $t\geq s+u$.
\end{lemma}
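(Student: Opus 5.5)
We combine the waiting-time bound of Lemma~\ref{lem:framework-permanent-wait} with the uniform stock bound of Lemma~\ref{lem:effective-stock-uniform} on a deterministic grid of window starts. We take grid points $g_j:=jW$ for $0\leq j\leq\lfloor T/W\rfloor$. Every interval $(s,s+u]\subseteq[0,T]$ with $u\geq 2W$ then contains a full grid window $(g_j,g_j+W]$ with $g_j+W\leq s+u$. To see this, let $g_j$ be the first grid point at or after $s$. Then $g_j<s+W$, so $g_j+W<s+2W\leq s+u\leq T$.

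It therefore suffices to show the following. With probability at least $1-e^{-\kappa}$, for every grid point $g_j$ with $g_j+W\leq T$, the time $\Theta_{g_j}$ supplied by Lemma~\ref{lem:framework-permanent-wait} at the deterministic (hence stopping) time $u=g_j$ satisfies $\Theta_{g_j}\leq g_j+W$. The block supplied there is mined in $(g_j,\Theta_{g_j}]\subseteq(s,s+u]$. It is permanent from $\Theta_{g_j}\leq s+u$, so it belongs to every strict chain at every $t\geq s+u$.

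For a single grid point, we intersect with the stock event $\{S(g_j)\leq C_SK\}$ of Lemma~\ref{lem:effective-stock-uniform}. This event is $\mathcal F_{g_j}$-measurable, and its complement over all of $[0,T]$ has probability at most $\tfrac14e^{-\kappa}$. On it, the conditional Markov inequality applied to \eqref{eq:framework-permanent-moment} gives
\[
\Pr\bigl(\Theta_{g_j}-g_j>W,\ S(g_j)\leq C_SK\bigr)\leq C_p\,e^{\alpha C_SK-q_pW}.
\]
The number of grid points is at most $1+T/W\leq 1+\lambda T$ once $W\geq 1/\lambda$, which holds since $K\geq1$ and $C_W$ is large. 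A union bound gives total failure probability at most
\[
\tfrac14e^{-\kappa}+(2+\lambda T)\,C_p\,e^{(\alpha C_S-q_pC_W)K}.
\]
We choose $C_W$ with $q_pC_W\geq\alpha C_S+2+\log(4C_p)$. Since $e^{-K}=e^{-\kappa}/(2+\lambda T)$, the second term is then at most $\tfrac14e^{-\kappa}$, and the total is below $e^{-\kappa}$.

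The main point of care is the dependence of $\alpha$ on $C_S$. The exponent $\alpha$ is fixed by Lemma~\ref{lem:framework-permanent-wait}, and $C_S$ by Lemma~\ref{lem:effective-stock-uniform}. Both are independent of $T$ and $\kappa$, so $C_W$ can be chosen after them and need not be small. A second point is that $\Theta_{g_j}$ is not a stopping time. We use it only through the moment bound \eqref{eq:framework-permanent-moment}, conditioned at the deterministic time $g_j$, which is legitimate. The degenerate case $T<2W$ is vacuous, since then no interval of length $2W$ fits in $[0,T]$.
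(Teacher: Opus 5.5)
Your proposal is correct and follows the paper's proof essentially step for step. Like the paper, you take the uniform stock event of Lemma~\ref{lem:effective-stock-uniform}, apply conditional Markov to \eqref{eq:framework-permanent-moment} at the deterministic grid starts $jW$ on $\{S(jW)\leq C_SK\}$, union-bound over at most $1+\lambda T$ windows, and observe that every interval of length at least $2W$ contains the full grid window starting at $\lceil s/W\rceil W$. The differences are only cosmetic choices of constants; state $C_W\geq 1/\lambda$ explicitly, and use $\max\{1,C_p\}$ in place of $C_p$ as the paper does, so that $\log(4C_p)K\geq\log(4C_p)$ is valid for $K\geq1$.
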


\begin{proof}
First, bound the stock that can be present at a window's start.
Lemma~\ref{lem:effective-stock-uniform} gives
\[
 \Pr\!\left(\sup_{0\leq t\leq T}S(t)>C_SK\right)
 \leq\tfrac14 e^{-\kappa}.
\]
For $\tau=0$, use the stock construction with $d=M=1$ in Eq.~\ref{eq:explicit-contraction-parameters}. For $\lambda_a=0$, take $C_S=0$.

We now choose a common window length $W=C_WK$ and consider the deterministic windows $(jW,(j+1)W]$ contained in $[0,T]$.
The event $S(jW)\leq C_SK$ is known at the window's start.
We can therefore apply the conditional moment bound \eqref{eq:framework-permanent-moment} on this event.
Markov's inequality gives
\begin{align*}
 &\Pr(\Theta_{jW}>(j+1)W,\ S(jW)\leq C_SK)\\
 &\hspace{30mm}\leq C_p\exp\{\alpha C_SK-q_pW\}.
\end{align*}
Choose $C_W$ large enough for the waiting-time exponent to absorb both the initial-stock cost and the number of windows.
We also require $W\geq1/\lambda$, which bounds that number by $1+\lambda T$. Both requirements hold with
\begin{equation}
 C_W:=\max\left\{\frac1\lambda,
       \frac{\alpha C_S+3+\log(4\max\{1,C_p\})}{q_p}\right\}.
 \label{eq:framework-window-constant}
\end{equation}
The total failure probability over the grid is then at most $(1+\lambda T)e^{-3K}/4\leq e^{-\kappa}/4$.
Adding the stock failure probability keeps the total below $e^{-\kappa}$.

On the resulting event, every complete grid window contains an honest block mined after its start and permanent by its end. Fix an arbitrary $(s,s+u]\subseteq[0,T]$ with $u\geq2W$, and take $j=\lceil s/W\rceil$.
The grid window $(jW,(j+1)W]$ lies inside this interval. Its block is mined strictly after $jW\geq s$ and is permanent by $(j+1)W\leq s+u$.
The same grid event therefore covers all the required intervals at once.
\qed
\end{proof}

The window lemma directly gives block liveness.
It also puts a later permanent honest block on every chain containing a sufficiently old confirmation, which is the step needed for persistence.

\begin{proof}[Theorem~\ref{thm:spv-race-liveness}]
Take $C_{\mathrm{live}}=2C_W$ and work on the event of Lemma~\ref{lem:framework-permanent-windows}.
For every $u\geq C_{\mathrm{live}}K$ and $0\leq s\leq T-u$, the interval $(s,s+u]$ has length at least $2W$.
The lemma supplies an honest block mined in this interval that belongs to every strict chain at every time $t\geq s+u$.
The same event covers all these intervals and has probability at least $1-e^{-\kappa}$, proving the theorem.
\qed
\end{proof}

To obtain persistence, we use the window ending when the observed chain's tip was mined.
A permanent block mined within this window is later than the confirmed block and is already present when the chain is observed.
The unique-parent relation then protects the entire confirmed prefix.

\begin{proof}[Theorem~\ref{thm:spv-race-persistence}]
Take $C_{\mathrm{pers}}=2C_W$ and work on the same window event.
Fix $\sigma\geq C_{\mathrm{pers}}K$ and a non-genesis block $B$ that is $\sigma$-confirmed in $\mathcal{C}_i^{\mathrm{str}}(t)$ for some $t\leq T$.
Let $w\leq t$ be the mining time of the chain's tip and $m(B)$ the mining time of $B$.
Confirmation means $m(B)\leq w-\sigma$.
In particular, $w-\sigma\geq0$, so $(w-\sigma,w]$ lies within $[0,T]$.

Lemma~\ref{lem:framework-permanent-windows} gives an honest block $N$ mined in $(w-\sigma,w]$ and present in every strict chain at every time from $w$ onward.
Since $t\geq w$, the observed chain contains both $B$ and $N$.
Moreover, $m(N)>w-\sigma\geq m(B)$.
Along a chain, parents are mined before their descendants, so $B\preceq N$.
Every strict chain at every $t'\geq t$ contains $N$ and therefore the entire prefix ending at $B$.
The window event covers every observed chain and every confirmed block within the horizon, including their retention at all later times.
\qed
\end{proof}

\subsection{From mining-time persistence to common prefix}
\label{app:framework-common-prefix}

Persistence uses the time between the mining of a block and its chain tip.
Common prefix instead specifies how many blocks are removed from the tip.
To connect these notions, we bound how many blocks can be mined during a confirmation interval.
Let $N_{\mathrm{all}}(t)$ count all mining successes through $t$, including honest successes on invalid ancestry, and let $N_{\mathrm{all}}(u,t):=N_{\mathrm{all}}(t)-N_{\mathrm{all}}(u)$.
This is a Poisson process of rate $\lambda$, and every chain segment mined in $(u,t]$ contains at most $N_{\mathrm{all}}(u,t)$ blocks.

\begin{lemma}[Uniform bound on total mining successes]
\label{lem:framework-total-count}
For every $T\geq0$ and $\kappa\geq1$, with $K=\kappa+\log(2+\lambda T)$ and
$C_N:=4+6/\log2$, the bound
\begin{equation}
 N_{\mathrm{all}}(u,t)\leq2\lambda(t-u)+C_NK
 \quad(0\leq u\leq t\leq T)
 \label{eq:framework-total-count}
\end{equation}
holds simultaneously for all interval endpoints with probability at least $1-e^{-\kappa}/4$.
\end{lemma}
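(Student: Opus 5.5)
We follow the grid argument of Lemma~\ref{lem:effective-poisson-uniform}, now applied to the total process $N_{\mathrm{all}}$. This is a homogeneous Poisson process of rate $\lambda$, independent of the adversary's choices, so no stopping-time conditioning is needed.

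\emph{Grid and pairwise bounds.} Take the deterministic grid points $j/\lambda$ for $0\leq j\leq\lceil\lambda T\rceil$. There are at most $2+\lambda T$ of them, hence at most $(2+\lambda T)^2$ ordered pairs $p\leq q$. For a fixed pair, apply the Poisson exponential moment with parameter $z=\log 2$, corresponding to the enlarged rate $2\lambda$. This gives
\[
 \mathbb E\exp\{z[N_{\mathrm{all}}(p,q)-2\lambda(q-p)]\}=\exp\{(q-p)\lambda(e^{z}-1-2z)\}\leq1,
\]
since $1-2\log2<0$. Markov's inequality then bounds the probability that $N_{\mathrm{all}}(p,q)>2\lambda(q-p)+4K/\log2$ by $e^{-4K}$. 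A union bound over all grid pairs gives total failure probability at most $(2+\lambda T)^2e^{-4K}\leq e^{-4\kappa}(2+\lambda T)^{-2}\leq e^{-\kappa}/4$, using $\kappa\geq1$ and $2+\lambda T\geq2$.

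\emph{Extension to arbitrary intervals.} On the complementary event, fix any $0\leq u\leq t\leq T$. Round $u$ down to a grid point $p$ and $t$ up to a grid point $q$, which may exceed $T$; the process is defined beyond $T$. Then $q-p\leq t-u+2/\lambda$, and monotonicity gives
\[
 N_{\mathrm{all}}(u,t)\leq N_{\mathrm{all}}(p,q)\leq 2\lambda(t-u)+4+4K/\log2.
\]
Since $K\geq1$, the constant $4$ is at most $4K$, so the bound is within $2\lambda(t-u)+C_NK$ with $C_N=4+6/\log2$. The extra $2/\log2$ in $C_N$ gives slack, for instance to use a $6K/\log 2$ allowance per pair if a cruder union-bound estimate is preferred.

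\emph{Main obstacle.} The only delicate point is the union-bound arithmetic: the exponent allowance must absorb the quadratic number of grid pairs together with the factor $1/4$. With allowance $4K/\log2$ this is immediate, because $(2+\lambda T)^2e^{-4K}$ already carries the factor $e^{-4\kappa}\leq e^{-\kappa}/4$. Everything else is rounding to the grid, as in Lemma~\ref{lem:effective-poisson-uniform}.
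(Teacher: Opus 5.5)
Your proposal is correct and follows essentially the same argument as the paper: the grid $j/\lambda$, the exponential moment with parameter $\log 2$ at the enlarged rate $2\lambda$, a union bound over at most $(2+\lambda T)^2$ ordered pairs, and a rounding cost of at most $4$. The only difference is that you take a per-pair allowance of $4K/\log 2$ rather than the paper's $6K/\log 2$. This still suffices, since $(2+\lambda T)^2e^{-4K}=e^{-4\kappa}(2+\lambda T)^{-2}\leq e^{-\kappa}/4$, and it leaves slack in $C_N$.
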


\begin{proof}
Use the grid $j/\lambda$, $0\leq j\leq\lceil\lambda T\rceil$.
For a grid interval $(p,q]$, the Poisson exponential moment gives
\[
 \mathbb E\exp\{(\log2)[N_{\mathrm{all}}(p,q)-2\lambda(q-p)]\}
 =\exp\{\lambda(q-p)(1-2\log2)\}\leq1.
\]
Thus the probability that a grid count exceeds $2\lambda(q-p)+6K/\log2$ is at most $e^{-6K}$.
There are at most $(2+\lambda T)^2$ ordered grid pairs, so their total failure probability is at most $(2+\lambda T)^2e^{-6K}\leq e^{-\kappa}/4$.
Round an arbitrary start down and its endpoint up to this grid.
The containing interval adds at most $2/\lambda$ time, hence at most four to the bound at rate $2\lambda$.
Since $K\geq1$, $C_NK$ absorbs both this rounding cost and the grid allowance.
\qed
\end{proof}

The count bound implies that a sufficiently deep retained block cannot have been mined within the last confirmation interval before its original tip.
We can therefore apply the persistence theorem to that block.

\begin{proof}[Theorem~\ref{thm:spv-backbone-common-prefix}]
Apply Theorem~\ref{thm:spv-race-persistence} and Lemma~\ref{lem:framework-total-count} with confidence parameter $\widehat\kappa:=\kappa+\log2$.
Set $\widehat K:=K+\log2$ and $\sigma:=C_{\mathrm{pers}}\widehat K$.
Their conclusions hold together with probability at least $1-e^{-\kappa}$.
Choose
\begin{equation}
 C_{\mathrm{cp}}:=(1+\log2)(2\lambda C_{\mathrm{pers}}+C_N),
 \qquad k_0:=\lceil C_{\mathrm{cp}}K\rceil.
 \label{eq:framework-cp-constant}
\end{equation}
Since $\widehat K\leq(1+\log2)K$, the count event gives, for every $w\leq T$,
\[
 N_{\mathrm{all}}(\max\{0,w-\sigma\},w)
 \leq2\lambda\sigma+C_N\widehat K\leq k_0.
\]

Fix any observed strict chain $\mathcal{C}_i^{\mathrm{str}}(t_1)$ with $t_1\leq T$, and prune its last $k_0$ blocks.
If only genesis remains, the prefix claim is immediate.
Otherwise, let $B$ be the retained tip and let $w\leq t_1$ be the mining time of the original tip.
The block $B$ and its $k_0$ descendants correspond to $k_0+1$ distinct mining successes.
If $m(B)>w-\sigma$, all these successes lie in $(\max\{0,w-\sigma\},w]$, contradicting the count bound.
Hence $m(B)\leq w-\sigma$, so $B$ is $\sigma$-confirmed in the observed chain.

Persistence places the entire prefix ending at $B$ in every $\mathcal{C}_j^{\mathrm{str}}(t_2)$ for every $t_2\geq t_1$.
This proves the required relation for $k=k_0$ and in particular for $t_2\leq T$.
Pruning any additional blocks produces a prefix of this retained chain, so the same event proves the result for every integer $k\geq k_0$.
\qed
\end{proof}

\subsection{Eventual success of the carrier attack}
\label{app:carrier-success}

We now prove the carrier-attack result of Section~\ref{sec: analysis-carrier}, under its assumption $\lambda_a\tau>1$.
Choose $\beta_c$ as in that section, giving carrier and private-fork rates
$c=\beta_c\lambda$ and $\mu=(\beta-\beta_c)\lambda$ with $c\tau>1$ and $\mu>0$.
We first establish a positive probability that one busy period continues indefinitely and freezes the valid public height.
We then use restarts to obtain such a period almost surely and show that the private fork forces publication at a finite rejection boundary.

To analyze one busy period, keep the private fork unpublished and continue the carrier rule while the queue is nonempty.
The next lemma accounts for both the queue and the honest mining that it suppresses.

\begin{lemma}[Survival of a carrier busy period]
\label{lem:carrier-busy-survival}
Suppose $c\tau>1$ and a busy period under the strategy of Section~\ref{sec: analysis-carrier} starts at a finite stopping time $v$, with valid parent $P$ and $Q_0=1$.
Under the continuation that keeps the private fork unpublished, the carrier queue never empties with probability $1-q>0$ conditional on $\mathcal F_v$, where $q\in(0,1)$ is the smallest solution in $[0,1]$ of
\begin{equation}
 q=\exp\{c\tau(q-1)\}.
 \label{eq:carrier-extinction}
\end{equation}
Throughout the busy period, the valid public height remains $h(P)$.
\end{lemma}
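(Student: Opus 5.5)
We will reduce the queue recursion \eqref{eq:carrier-stock-recursion} to a Galton--Watson process and then check separately that the honest side can never raise the valid public height while the queue is nonempty.

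\textbf{Queue dynamics.} First we justify the recursion. Window $n$ is the interval of length exactly $\tau$ between the release of the $n$th active carrier and its scheduled rejection at every honest miner. Release and delivery are immediate, and rejection is scheduled exactly $\tau$ after receipt, so these windows are deterministic-length intervals starting at stopping times. The carrier share mines invalid siblings with parent $P$ at rate $c$. By the strong Markov property of the independent Poisson processes, the counts $N_n$ added in successive disjoint windows are i.i.d.\ $\operatorname{Poisson}(c\tau)$ and independent of $\mathcal F_v$. The private-fork process is independent of them, and honest successes do not affect them. Hence $Q_n$ is exactly the random walk $Q_{n+1}=Q_n-1+N_n$, stopped at zero.

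\textbf{Extinction probability.} The hitting time of zero for this walk has the same law as the total progeny of a Galton--Watson tree with $\operatorname{Poisson}(c\tau)$ offspring started from one individual. We realize this by exploring one individual per window, in the standard breadth-first coupling. The queue empties if and only if the tree is finite. The extinction probability is then the smallest fixed point in $[0,1]$ of the generating function $s\mapsto e^{c\tau(s-1)}$. Because the mean $c\tau>1$, this function is strictly convex with derivative exceeding one at $s=1$, so the smallest fixed point satisfies $q<1$. It also satisfies $q\geq e^{-c\tau}>0$. Thus the queue survives forever with conditional probability $1-q>0$, which is \eqref{eq:carrier-extinction}.

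\textbf{Frozen valid height.} This step is the main obstacle, because it is where honest behaviour enters. We argue by induction over windows. At each release, the public valid chain has tip $P$, and by induction $H=h(P)$. The released carrier is a child of $P$, so it has height $h(P)+1$. It is delivered immediately to all honest miners and stays pending for $\tau$, so it belongs to every $E_i^{\mathrm{opt}}$. Any honest block mined before the release already lies in the subtree of an earlier carrier. That subtree was rejected at the previous boundary, and the new release happens at that same instant. We need to check the edge case at this instant: at the boundary, every optimistic chain falls back to $P$ and immediately picks the new carrier, whose height $h(P)+1$ beats every other eligible tip. Any honest block whose parent is $P$ would have to be mined at exactly the boundary instant, which has probability zero. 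Any tie is resolved by the adversary toward the carrier.

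Consequently, during the window every honest success lies in the subtree of the active carrier. Each such block has an invalid ancestor, so it never has valid ancestry. Adversarial carrier siblings are withheld, and the private fork stays unpublished. No published block with fully valid ancestry ever exceeds height $h(P)$. Therefore $H$ stays at $h(P)$ throughout the busy period. Finally we confirm admissibility: delivery is immediate, which is within $\Delta$, and validation completes within $\tau$. Both schedules are causal, so the continuation is an admissible strategy.
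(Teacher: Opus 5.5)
Your proposal is correct and follows essentially the same route as the paper. You read the queue recursion as a breadth-first exploration of a Galton--Watson tree with $\operatorname{Poisson}(c\tau)$ offspring and take the extinction probability as the smallest fixed point of $e^{c\tau(s-1)}$; the paper derives this fact from the iterates $q_{j+1}=f(q_j)$ instead of citing it, and it freezes the valid height by the same induction over windows, with each queued sibling released at the rejection boundary. One minor imprecision: before the first release, the honest blocks lie on the valid chain ending at $P$ rather than in an earlier carrier's subtree, but your inductive hypothesis that the public valid tip is $P$ at each release already covers that case.
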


\begin{proof}
At the start of a window, the active carrier extends $P$ by one block and is delivered immediately to every honest miner.
Its pending chain is longer than the valid public chain, so all honest miners extend its ancestry.
Honest descendants remain on invalid ancestry, the unused carrier siblings remain hidden, and the private fork is unpublished.
Thus the window produces no new public block with valid ancestry.

At the end of the window, every honest miner rejects the active carrier and its descendants.
If another sibling is queued, releasing it immediately again gives a pending chain of height $h(P)+1$.
A withheld sibling's validation begins on its own receipt, so its window lasts the full $\tau$ even if it was mined earlier.
Induction over windows therefore keeps the valid public height at $h(P)$ until the queue empties.
Immediate delivery, immediate validation of valid blocks, and rejection of each carrier exactly $\tau$ after receipt all satisfy the model's bounds.

During a window, the carrier miners work on the fixed parent $P$.
After the stopping time $v$, the carrier process has independent Poisson increments of rate $c$.
Its successive length-$\tau$ windows consequently give the counts $N_n$ and queue recursion in \eqref{eq:carrier-stock-recursion}.
Each processed carrier consumes one queue entry and contributes $N_n$ new entries.

To count these entries, view the queue recursion as a breadth-first exploration of an auxiliary family tree: the $n$th processed vertex has $N_n$ children.
Its vertices count carriers, whose protocol parent remains $P$.
The offspring counts are independent and have probability generating function
\[
 f(s):=\mathbb E[s^{N_n}]=\exp\{c\tau(s-1)\},\qquad 0\leq s\leq1.
\]
Let $q_j$ be the probability that this family tree has become extinct by generation $j$.
Starting from one carrier gives $q_0=0$.
Conditioning on its offspring count and using independence of the descendant families gives $q_{j+1}=f(q_j)$.
Hence $q_j$ increases to the extinction probability $q$, which satisfies $q=f(q)$ by continuity.
Iteration from zero also makes $q$ no larger than any other fixed point in $[0,1]$.

It remains to show that $q<1$.
Since $f(1)=1$ and $f'(1)=c\tau>1$, there is $s_0\in(0,1)$ with $f(s_0)<s_0$.
Monotonicity of $f$ then gives $q_j\leq s_0$ for every $j$, so $q\leq s_0<1$.
Also, $q\geq q_1=f(0)>0$.
Every offspring count is finite almost surely, so extinction is equivalent to a finite family tree.
Processing its entries empties the carrier queue exactly in this case.
Otherwise there are infinitely many windows, each of length $\tau>0$, and the busy period continues indefinitely.
This proves the survival probability and the height conclusion.
\qed
\end{proof}

We have shown that one busy period has a fixed positive survival probability.
When a period ends, a new carrier arrives in finite time and starts another trial from one queue entry.
These repeated trials will freeze the valid public height long enough for the continuously growing private fork to overtake it.

\begin{proof}[Theorem~\ref{thm:carrier-success}]
Start from the stated shared-chain setup at time $t_0$.
The fixed adversarial power split gives independent Poisson processes of rates $c$ and $\mu$, whose sum is $\lambda_a$.
Carrier miners select the current valid public tip while idle and the fixed parent $P$ during a busy period.
Private-fork miners always select the current tip of their valid private chain, initially the parent of $b$.
These choices are made before each corresponding success, and all releases and validations use the causal schedule in Section~\ref{sec: analysis-carrier}.

For the probability argument, consider an auxiliary execution that keeps the private fork hidden and continues the carrier rule even at a boundary where private publication would finish the attack.
Use the same mining events for this execution and the actual strategy.
They coincide until the first rejection boundary at which the private fork is strictly longer than the valid public chain.
We show that such a boundary is almost surely finite in the auxiliary execution.

Whenever its queue empties, the next carrier success arrives after an exponential waiting time of rate $c$.
During this wait, immediate delivery lets the carrier miners track the current valid public tip.
The next success therefore starts a fresh busy period with $Q_0=1$ in finite time almost surely.
A finite busy period ends at a stopping time, and its next start is also a stopping time.
Lemma~\ref{lem:carrier-busy-survival} consequently gives the same conditional extinction probability $q<1$ at every such start.

Let $E_n$ be the event that the first $n$ busy periods all end.
Conditioning successively at their starts gives
\begin{equation}
 \Pr(E_n\mid\mathcal F_{t_0})=q^n\longrightarrow0.
 \label{eq:carrier-repeated-extinction}
\end{equation}
Each next start exists almost surely on the event that the preceding periods end.
It follows that, almost surely, some busy period never ends.
Its start $v_*$ is finite because it is preceded by only finitely many finite busy periods and finite idle waits.

Let $P_*$ be this period's valid parent.
Only finitely many blocks are mined before the finite time $v_*$, so $h(P_*)$ is finite.
By Lemma~\ref{lem:carrier-busy-survival}, the valid public height stays at $h(P_*)$ from $v_*$ onward in the auxiliary execution.
Meanwhile, the private chain gains one valid block at every private-fork success.
Since $\mu>0$, the Poisson strong law makes its height tend to infinity almost surely.
On the intersection of these probability-one events, it therefore exceeds the finite height $h(P_*)$ at some finite time.
The next rejection boundary is at most $\tau$ later and satisfies the prescribed publication condition.
Thus the first qualifying boundary in the auxiliary execution is finite almost surely.

The actual strategy agrees with this execution until that first qualifying boundary and publishes its private fork there.
Every honest miner has just rejected the active carrier and all its descendants.
The released private chain and its ancestry are delivered and validated immediately, making a valid chain strictly longer than the public chain available everywhere.
Every honest strict chain therefore switches to the private fork, which extends the parent of $b$ along a branch omitting $b$.
This removes $b$ from every honest strict chain.

The argument uses only that the initial shared chain has finite height, so it applies to every prescribed finite confirmation depth.
Such a setup is attainable from genesis by initially withholding all adversarial blocks and delivering and validating honest blocks immediately.
Because $\lambda_h>0$, an honest block and any prescribed finite number of honest descendants appear in finite time almost surely.
\qed
\end{proof}

\section{Mixed Honest Mining Policies}
\label{app:mixed-honest}

We extend the model to a fixed mixture of optimistic and strict honest
miners. Optimistic miners extend a longest chain in
$E_i^{\mathrm{opt}}(t)$, while strict miners extend a longest chain in
$E_i^{\mathrm{str}}(t)$. A strict miner continues mining on its fully
validated chain while other blocks are pending. All other assumptions
remain unchanged, including the fixed mining rates and the concurrent,
load-independent validation bound. Miner types remain fixed throughout
each execution. Let
\[
 p:=\sum_{i\text{ optimistic}}w_i\in[0,1]
\]
be the fraction of honest mining power using optimistic mining. The two
classes have aggregate rates $p\lambda_h$ and $(1-p)\lambda_h$.
Security still concerns every honest miner's strict chain, regardless
of its mining policy.

Keep the definitions of $A,H,S,Z,Z_{\mathrm{bad}}$, and
$Z_{\mathrm{val}}$. Strict miners always select objectively valid
ancestry, so $1-p\leq a(t)\leq1$. Define
\begin{equation}
\begin{split}
 d_p&:=\Delta+(1-p)\tau,\qquad h_p:=p\tau,\\
 b_p&:=p(2\Delta+\tau),\qquad D:=\Delta+\tau,\qquad
 \rho_p:=\frac{\lambda_h}{1+\lambda_h d_p}.
\end{split}
\label{mixed-parameters}
\end{equation}
Here $h_p$ and $b_p$ bound weighted losses, while $d_p$ bounds the
weighted time excluded in the height argument below. The physical
propagation and validation allowance remains $D$.

\begin{theorem}[Security with mixed honest mining policies]
\label{thm:mixed-honest-security}
Suppose
\begin{equation}
 \lambda_a<\rho_p,\qquad \lambda_a p\tau<1.
 \label{mixed-security-conditions}
\end{equation}
For every $0<\gamma<\rho_p(1-\lambda_a p\tau)$, there are positive
constants $C_{\mathrm{cg}},C_{\mathrm{live}},C_{\mathrm{pers}}$, and
$C_{\mathrm{cp}}$ such that the following holds. For every $T\geq0$
and $\kappa\geq1$, put $K=\kappa+\log(2+\lambda T)$. Under every
admissible strategy from genesis, with probability at least
$1-e^{-\kappa}$, simultaneously:
\begin{enumerate}
\item $L_i(t)-L_i(u)\geq\gamma(t-u)$ for every honest miner $i$ and
every $0\leq u\leq t\leq T$ with $t-u\geq C_{\mathrm{cg}}K$.
\item Block liveness holds for every window
$\sigma\geq C_{\mathrm{live}}K$ within $[0,T]$.
\item Mining-time persistence holds with observation horizon $T$ for every
confirmation duration $\sigma\geq C_{\mathrm{pers}}K$, with retention at all future times.
\item Common prefix holds through $T$ for every integer
$k\geq\lceil C_{\mathrm{cp}}K\rceil$.
\end{enumerate}
The constants depend only on $(\beta,\lambda,\Delta,\tau,p,\gamma)$,
and are uniform over finite miner counts, power distributions realizing
$p$, and admissible strategies.
\end{theorem}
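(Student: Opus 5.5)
The plan is to rerun the whole chain of Appendix~\ref{sec: proof-analysis} with three substitutions, and to check that every later step used only the interval-loss inequality, the height bound, and the stock dynamics. The substitutions are the loss coefficient $\tau\mapsto h_p=p\tau$, the boundary allowance $2\Delta+\tau\mapsto b_p$, and the height rate $\rho\mapsto\rho_p$.

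\textbf{Step 1: mixed interval loss.} First I would prove the analogue of Lemma~\ref{lem:interval-charging}:
\[
A(t)-A(u)\geq(t-u)-p\tau Z_{\mathrm{bad}}(u,t)-p\tau S(u)-b_p,
\]
together with $A(t)\geq t-p\tau Z_{\mathrm{bad}}(0,t)$ from genesis. Strict miners never lie on invalid ancestry, so $\mathcal I_i(B)=\emptyset$ for every strict $i$. The single-root bound then becomes $\sum_{i\text{ opt}}w_i\tau=p\tau$. The boundary roots likewise affect only optimistic weight, which gives $p(2\Delta+\tau)$. The charging construction and Lemma~\ref{lem:effective-stock-clearance} are purely combinatorial, so they carry over verbatim.

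\textbf{Step 2: height growth at rate $\rho_p$.} This is the analogue of Lemma~\ref{thm:effective-height-growth} and is the step I expect to be the main obstacle. A strict miner cannot extend a new valid block until it has been received and validated, so it sees the block after $D$. An optimistic miner sees it after $\Delta$. The selected-success argument therefore needs a mixed wait. After each selected success, optimistic power counts as eligible again after $\Delta$, and strict power after $\Delta+\tau$. I would define $J_u(v)$ per class and bound the effective time excluded per selected success by $p\Delta+(1-p)(\Delta+\tau)=d_p$ in weighted units. This replaces \eqref{eq:effective-selected-clock} with $B_u(t)\geq A(t)-A(u)-d_p[Q_u(t)+1]$.

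The delicate point is that selection must still guarantee a height increment. Selecting a success from either class after the corresponding wait does give a block above the previous selected height: an optimistic miner has received the block, and a strict miner has validated it. The exponential martingale is unchanged, with intensity $\lambda_h J_u a$. Running the computation \eqref{eq:effective-height-coefficient} with $\Delta$ replaced by $d_p$ gives any rate $r<\rho_p$. If per-class waits break the induction, because optimistic successes during a strict miner's wait could change the target height, I would fall back to selecting the first eligible success of either class. I would then argue monotonically that any such success raises $H$ by at least one. Lemma~\ref{lem:strict-height-bridge} is unchanged, with lag $D$.

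\textbf{Step 3: stock contraction, uniform stock, and chain growth.} I would repeat Lemmas~\ref{lem:effective-stock-contraction} and~\ref{lem:effective-stock-uniform} with the substitutions $c:=rp\tau$, $M=\max\{1,rp\tau\}$, $d=1-\lambda_a^+p\tau$, and $\rho_S=\max\{\lambda_a^+/r,\lambda_a^+p\tau\}$. The rates are chosen as in \eqref{eq:explicit-rate-margin}, with $\lambda_a<\lambda_a^+<r<\rho_p$ and $\lambda_a^+p\tau<1$, which is possible under \eqref{mixed-security-conditions}. Chain growth then follows as in Theorem~\ref{thm:spv-backbone-chain-growth} with $\gamma'=r(1-\lambda_a p\tau-\varepsilon)$. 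The $p=0$ case needs no separate treatment, since then $M=d=1$.

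\textbf{Step 4: permanence and the remaining properties.} For the security properties I would rerun the permanence machinery with the same substitutions: Lemma~\ref{lem:framework-stock-return}, the clean-reset pattern of Lemma~\ref{lem:framework-clean-reset}, and the race of Lemma~\ref{lem:framework-clean-race} with $\delta=r-M\lambda_a^+>0$. The clean-reset slot length $g=D+2w$ already exceeds all strict lags. Lemmas~\ref{lem:framework-permanent-wait}, \ref{lem:framework-permanent-windows}, and~\ref{lem:framework-total-count} transfer unchanged, and they yield liveness, persistence, and common prefix exactly as before. Finally, I would combine the four events, each at confidence $\kappa+\log4$, to get a simultaneous probability of at least $1-e^{-\kappa}$. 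The constants absorb the $\log 4$.
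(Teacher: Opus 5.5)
Your proposal is correct and follows the paper's proof essentially step for step: the weighted loss bound with per-root cost $h_p=p\tau$ and boundary cost $b_p$, then the epoch-based height argument with permission delays $\Delta$ for optimistic and $D$ for strict successes, which excludes at most $d_p$ weighted effective time per selection, and finally the stock, clean-lead, race, and window arguments rerun with $M=\max\{1,rh_p\}$, $d=1-\lambda_a^+h_p$, $\delta=r-M\lambda_a^+$ and the unchanged physical lag $D$. One caution: apply the substitution $2\Delta+\tau\mapsto b_p$ only to the weighted loss allowance (in $C_0$ and in $B_0=r(D+b_p)$), not to the silent intervals of the clean-reset pattern, which must keep physical length $2\Delta+\tau$ (the paper's $b_{\mathrm{phys}}$) so that exposed invalid roots are rejected everywhere and the base valid height is validated at strict miners before the specified successes.
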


Equivalently, a sufficient condition is $\beta<B_p$, where
\begin{equation}
 B_p:=\min\left\{
 \frac{2}{2+x_p+\sqrt{4+x_p^2}},\,
 \frac{1}{p\lambda\tau}\right\},
 \qquad x_p:=\lambda[\Delta+(1-p)\tau],
 \label{mixed-share-bound}
\end{equation}
where the second term is $+\infty$ when $p\tau=0$.
This is a sufficient bound. We do not establish a matching attack
threshold for $0<p<1$.

\subsection{Weighted loss and height growth}

\begin{lemma}[Weighted carrier loss]
\label{mixed-weighted-loss}
The adapted charge counts $R_0,R_1$ from the proof of
Lemma~\ref{lem:interval-charging} satisfy, for every $u\leq t$,
\begin{equation}
\begin{split}
 A(t)-A(u)&\geq(t-u)-b_p-h_p[R_0(u,t)+R_1(u,t)],\\
 R_0(u,t)&\leq S(u),\qquad R_1(u,t)\leq Z_{\mathrm{bad}}(u,t).
\end{split}
\label{mixed-charged-loss}
\end{equation}
From genesis, $A(t)\geq t-h_pZ_{\mathrm{bad}}(0,t)$.
\end{lemma}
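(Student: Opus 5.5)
The plan is to rerun the proof of Lemma~\ref{lem:interval-charging} almost verbatim. The only change is that every loss bound is now weighted by the optimistic share $p$ rather than the full honest power. We keep the same first-invalid-root grouping, the same charge rule assigning each remaining root to the first adversarial descendant at height at least $H(u)$ published after $u+\Delta$, and hence the same adapted counts $R_0(u,t)$ and $R_1(u,t)$. The two inequalities $R_0(u,t)\leq S(u)$ and $R_1(u,t)\leq Z_{\mathrm{bad}}(u,t)$ need no reproof. Their derivation uses only ancestor closure, disjointness of first-invalid subtrees, and the definition of $\mathcal U(u)$, none of which refers to the honest mining policy.

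First, the single-root bound. For a first-invalid root $B$ and an optimistic miner $i$, the argument already given shows $\mathcal I_i(B)\subseteq[r_i(B),\nu_i(B))$, so this set has length at most $\tau$. For a strict miner $i$, the selected parent always has status \VALID, so every block on its ancestry is objectively valid and $\mathcal I_i(B)=\emptyset$. Summing with weights gives
\[
\sum_i w_i\int_0^\infty\mathbf 1\{v\in\mathcal I_i(B)\}\,dv\leq\sum_{i\text{ optimistic}}w_i\tau=p\tau=h_p .
\]

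Second, the boundary allowance. Consider the roots known to an honest miner by $u+\Delta$. As before, every honest miner rejects them by $u+2\Delta+\tau$. Only optimistic miners can mine in their subtrees, so the total weighted loss after $u$ is at most $p(2\Delta+\tau)=b_p$. Here we use that $1-a(v)\leq p$ pointwise, which holds because strict miners always contribute to $a(v)$.

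Third, the charging step. We must check that every remaining root attracting honest mining by $t$ has been charged by then. The original argument traces back to the first honest descendant and uses only that its miner selected a parent of height at least $H(u)$. That miner is necessarily optimistic, and the argument is unchanged. Adding $h_p$ per charged root to the boundary loss $b_p$ yields the displayed inequality. For the genesis statement, there is no boundary term. The distinct first-invalid roots number at most $Z_{\mathrm{bad}}(0,t)$, and each costs at most $h_p$, so $A(t)\geq t-h_pZ_{\mathrm{bad}}(0,t)$.

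The main thing to verify carefully is the second step. One must confirm that the boundary loss really scales by $p$ rather than remaining $2\Delta+\tau$: the bound must use $1-a\leq p$ and not merely $1-a\leq 1$. One must also confirm that the strict miners' validation lag, which is now absorbed into $d_p$ in the later height argument, does not enter the loss accounting. It does not, because $A$ measures only mining on objectively valid ancestry, and strict miners never leave it.
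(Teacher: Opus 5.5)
Your proposal is correct and follows the paper's proof essentially step for step. The per-root loss drops to $h_p=p\tau$ because strict miners never select invalid ancestry, and the boundary allowance drops to $b_p$ because $1-a\le p$; the original charge construction, including $R_0(u,t)\le S(u)$, $R_1(u,t)\le Z_{\mathrm{bad}}(u,t)$ and the genesis claim, carries over unchanged as in Lemma~\ref{lem:interval-charging}, since any miner wasting work after $u+\Delta$ is optimistic and already has a valid chain of height $H(u)$ in its optimistic eligible set.
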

\begin{proof}
Only optimistic miners work on invalid ancestry. For a first-invalid
root $B$, each such miner can mine in its subtree only between receipt
and rejection of $B$, for at most $\tau$ time units in total. Summing
their power weights bounds this root's loss by
$\sum_{i\text{ optimistic}}w_i\tau=p\tau=h_p$.

Roots known to an honest miner by $u+\Delta$ are rejected everywhere
by $u+2\Delta+\tau$. Since the power share on invalid ancestry is at
most $p$, their combined boundary loss is at most $b_p$.
For every other root, use the original charge: its first public
adversarial descendant at height at least $H(u)$.
Any miner wasting work after $u+\Delta$ is optimistic and already
knows a valid chain of height $H(u)$. Its selected tip has at least
that height. If the tip is adversarial, it qualifies for the charge.
Otherwise, the first honest descendant of this root was mined after
$u+\Delta$, and its adversarial parent was already public at height
at least $H(u)$. Thus every root attracting wasted work is charged.

Distinct roots give distinct charged blocks. As in the original proof,
old charged blocks belong to $S(u)$ and new charged blocks contribute
to $Z_{\mathrm{bad}}(u,t)$. The counts are zero through $u+\Delta$.
Thereafter, the boundary classification and each qualifying publication
are determined by the current history, so both counts are adapted.
The per-root and boundary costs prove \eqref{mixed-charged-loss}.
At genesis, every encountered root is a new adversarial block, giving
the last claim without a boundary allowance.
\qed
\end{proof}

\begin{lemma}[Height growth with mixed mining policies]
\label{mixed-height-growth}
For every $0<r<\rho_p$, there is $C_H^{\mathrm{cond}}>0$ such that,
after every finite stopping time $u$, for every
$\mathcal F_u$-measurable $\eta\geq1$,
\begin{equation}
\begin{split}
 \Pr\bigl[\,&H(t)-H(u)\geq r[A(t)-A(u)]-C_H^{\mathrm{cond}}\eta\\
 &\text{for every }t\geq u\mid\mathcal F_u\bigr]\geq1-e^{-\eta}.
\end{split}
\label{mixed-conditional-height}
\end{equation}
For every $T\geq0$ and $\kappa\geq1$, the same height inequality
holds simultaneously for all $0\leq u\leq t\leq T$ with deficit
$C_HK$ and probability at least $1-e^{-\kappa}$.
\end{lemma}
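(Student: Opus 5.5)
We will repeat the selected-success martingale argument from the proof of Lemma~\ref{thm:effective-height-growth}, modifying only the waiting rule so that strict miners are accounted for. Strict miners only act on a selected block once it is fully validated, which takes up to $D=\Delta+\tau$. Optimistic miners act after $\Delta$. The natural choice is to start selection after $u+D$. At that time every honest miner, strict or optimistic, has a fully validated chain of height $H(u)$ in its eligible set. After each selected effective success, we block selection for $\Delta$ time for the optimistic share and $D$ time for the strict share.

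To make this precise, let $J_u^{\mathrm{o}}(v)$ and $J_u^{\mathrm{s}}(v)$ be permission indicators for the two classes. Define $Q_u$ as the count of selected effective successes. A success is selected if it comes from an optimistic miner on valid ancestry while $J^{\mathrm{o}}=1$, or from a strict miner while $J^{\mathrm{s}}=1$. Its predictable intensity is $\lambda_h[J^{\mathrm{o}}a_{\mathrm{o}}(v)+J^{\mathrm{s}}(1-p)]$, where $a_{\mathrm{o}}=a-(1-p)$ is the optimistic share on valid ancestry. Set $B_u(t)$ to be the integral of the bracket.

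The height induction $H(t)-H(u)\geq Q_u(t)$ goes through as follows. Suppose a selected success comes from an optimistic miner at least $\Delta$ after the previous selection. Then it knows the previous selected block, since that block has valid ancestry and it is pending-or-valid there, and it extends a chain at least that high. Suppose instead it comes from a strict miner at least $D$ after the previous selection. Then that block is fully validated at the miner. In both cases the new block has height one larger.

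The excluded effective time after each selection is at most $p\Delta+(1-p)D=\Delta+(1-p)\tau=d_p$ per selection, plus a similar initial allowance. This gives $B_u(t)\geq A(t)-A(u)-d_p[Q_u(t)+1]$, where the extra $D$ initial wait contributes only a constant.

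Next, the process $\exp\{-\xi Q_u+\chi B_u\}$ with $\chi=\lambda_h(1-e^{-\xi})$ is a supermartingale, exactly as before. The maximal inequality yields $Q_u\geq\frac{\chi}{\xi+\chi d_p}[A(t)-A(u)]-\frac{\chi D'+\eta}{\xi+\chi d_p}$. Take $\xi=\tfrac12(\lambda_h/r-1-\lambda_h d_p)>0$, which is positive because $r<\rho_p$. The same inequality $\xi/(1-e^{-\xi})\leq1+\xi$ then shows the rate exceeds $r$, and the constant becomes $C_H^{\mathrm{cond}}\eta$ since $\eta\geq1$. Finally, the uniform statement over $[0,T]$ follows verbatim from the grid argument with spacing $1/\lambda$ and confidence $K$, losing $r/\lambda$ from rounding.

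The main obstacle is the per-selection time accounting. With two classes having different waits, we must check that the weighted exclusion is really bounded by $d_p$ times the count rather than by $D$. The excluded effective time after one selection is $\int J$-gaps weighted by the respective shares. Strict miners contribute at most $(1-p)D$ because their share of $a$ is exactly $1-p$. Optimistic miners contribute at most $p\Delta$ because $a_{\mathrm{o}}\leq p$. The permission windows overlap, and a new selection resets both, but that only reduces exclusion, so the bound holds.

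A second subtlety is that a strict miner's chain might already exceed the previous selected height via other blocks. Since only a lower bound on height is claimed, this does not matter.
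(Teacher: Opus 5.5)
Your proposal is correct and follows the paper's proof essentially step for step: epochs restart at each selected effective success, optimistic successes are permitted after $\Delta$ and strict ones after $D$, each epoch excludes at most $p\Delta+(1-p)D=d_p$ weighted effective time, the supermartingale is $\exp\{-\xi Q+\chi B\}$ with $\xi=\frac12(\lambda_h/r-1-\lambda_h d_p)$, and uniformity comes from the $1/\lambda$-grid. The only difference is your common initial wait of $D$, which makes the clock bound $B_u(t)\geq A(t)-A(u)-d_pQ_u(t)-D$ rather than $-d_p[Q_u(t)+1]$; the paper instead treats the first epoch like the others, permitting optimistic successes after $u+\Delta$, and the difference only changes the additive constant.
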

\begin{proof}
Begin an epoch at $u$ and a new epoch after each selected effective
honest success. In an epoch beginning at $s$, permit optimistic
successes after $s+\Delta$ and strict successes after $s+D$.
Select the first permitted success on valid ancestry. The permitted
miner then has the preceding selected block eligible under its own
mining rule. In the first epoch, use a valid chain attaining $H(u)$.
Consequently, if $Q(t)$ counts selections after $u$, then
$H(t)-H(u)\geq Q(t)$.

Let $e_i(t)$ indicate that miner $i$'s pre-success parent has valid
ancestry, and let $J_i(t)$ indicate that its selection is permitted.
Use the epoch immediately preceding $t$ at a selected success and
strict inequalities at permission deadlines. These indicators are
predictable. Define
\[
 B(t):=\int_u^t\sum_iw_i e_i(v)J_i(v)\,dv.
\]
An epoch excludes at most $p\Delta+(1-p)D=d_p$ weighted effective
time, including when another selection ends it before a strict
miner's wait expires. Hence
\begin{equation}
 B(t)\geq A(t)-A(u)-d_p[Q(t)+1].
 \label{mixed-selected-clock}
\end{equation}
The count $Q$ has predictable intensity
$\lambda_h\sum_iw_i e_i(t)J_i(t)$. Put
\[
 \xi:=\frac12\left(\frac{\lambda_h}{r}-1-\lambda_h d_p\right)>0,
 \qquad \chi:=\lambda_h(1-e^{-\xi}).
\]
Then $\exp\{-\xi Q(t)+\chi B(t)\}$ is a nonnegative local
martingale starting from one at $u$, and thus a supermartingale.
The conditional maximal inequality and
\eqref{mixed-selected-clock} give, except with conditional
probability $e^{-\eta}$,
\[
 Q(t)\geq\frac{\chi}{\xi+\chi d_p}[A(t)-A(u)]
             -\frac{\chi d_p+\eta}{\xi+\chi d_p}
 \qquad(t\geq u).
\]
Since $\xi/(1-e^{-\xi})\leq1+\xi$, the coefficient of effective
time exceeds $r$. Taking
$C_H^{\mathrm{cond}}=(1+\chi d_p)/(\xi+\chi d_p)$ proves
\eqref{mixed-conditional-height}.

For uniformity over $[0,T]$, apply the conditional estimate with
$\eta=K$ at grid starts $j/\lambda\leq T$. Their joint failure
probability is at most $(1+\lambda T)e^{-K}<e^{-\kappa}$.
Round any other start upward to the next grid point. Monotonicity
of $H$ and the one-Lipschitz property of $A$ lose at most
$r/\lambda$, including intervals shorter than the rounding gap.
Thus $C_H=C_H^{\mathrm{cond}}+r/\lambda$ suffices.
\qed
\end{proof}

\subsection{Stock, permanent blocks, and security}

\begin{proof}[Theorem~\ref{thm:mixed-honest-security}]
We adapt the arguments in Appendix~\ref{sec: proof-analysis} using
Lemmas~\ref{mixed-weighted-loss} and~\ref{mixed-height-growth}.
The stock definition is unchanged, and stock can increase only through
new adversarial production. The hidden path above $H(u)$ and the old
blocks consumed by carriers remain disjoint subsets of initial stock.
Thus Lemma~\ref{lem:effective-stock-clearance} gives
\begin{equation}
 H(v)-H(u)+R_0(u,v)>S(u)
 \quad\Longrightarrow\quad S(v)\leq Z(u,v)-R_1(u,v).
 \label{mixed-stock-clearance}
\end{equation}
This argument does not require strict miners to extend pending blocks.
Likewise, Lemma~\ref{lem:strict-height-bridge} retains the full physical
propagation and validation lag:
\begin{equation}
 L_i(u)\leq H(u)\leq\min_jL_j(u+D).
 \label{mixed-strict-bridge}
\end{equation}

\noindent\emph{Stock contraction.}
Choose auxiliary rates and a production coefficient satisfying
\begin{equation}
\begin{split}
 &\lambda_a<\lambda_a^+<r<\rho_p,\qquad \lambda_a^+h_p<1,\\
 &C_a:=
 \begin{cases}
  1/\log(\lambda_a^+/\lambda_a),&\lambda_a>0,\\
  0,&\lambda_a=0.
 \end{cases}
\end{split}
\label{mixed-auxiliary-rates}
\end{equation}
By Lemma~\ref{mixed-height-growth} and the unchanged Poisson estimate
of Lemma~\ref{lem:effective-poisson-stopping}, after every finite
stopping time $u$, with conditional probability at least
$1-2e^{-\eta}$,
\begin{equation}
\begin{split}
 H(t)-H(u)&\geq r[A(t)-A(u)]-C_H^{\mathrm{cond}}\eta,\\
 Z(u,t)&\leq\lambda_a^+(t-u)+C_a\eta
 \qquad(t\geq u).
\end{split}
\label{mixed-conditional-cones}
\end{equation}
Here $\eta\geq1$ may be $\mathcal F_u$-measurable. For
$\lambda_a=0$, use $Z=0$ and the same auxiliary rate choices.
Put
\begin{equation}
\begin{split}
 M&:=\max\{1,rh_p\},\qquad d:=1-\lambda_a^+h_p,\qquad
 \rho_S:=\frac{\lambda_a^+M}{r}<1,\\
 C_0&:=rb_p+\max\{C_H^{\mathrm{cond}},1\},\qquad
 C_1:=\frac{\lambda_a^+(C_0+1)}r+C_a.
\end{split}
\label{mixed-stock-constants}
\end{equation}
Starting with $S_0=S(u)$, let $\sigma$ be the first $v\geq u$
such that
\begin{equation}
 r(v-u)-rh_pR_1(u,v)\geq MS_0+C_0\eta+1.
 \label{mixed-stock-check}
\end{equation}
This is a finite stopping time. Its left side dominates
$r(v-u)-rh_pZ(u,v)$, which has positive asymptotic drift because
$\lambda_a h_p<1$. The crossing has no upward overshoot, and
$\sigma-u\geq\eta/r$.

On the event in \eqref{mixed-conditional-cones}, the weighted loss
bound gives
\begin{align*}
 H(v)-H(u)+R_0(u,v)
 &\geq r(v-u)-rh_pR_1(u,v)+(1-rh_p)R_0(u,v)\\
 &\qquad-rb_p-C_H^{\mathrm{cond}}\eta\\
 &\geq r(v-u)-rh_pR_1(u,v)-(M-1)S_0-C_0\eta.
\end{align*}
At $\sigma$, this is at least $S_0+1$, so
\eqref{mixed-stock-clearance} applies. The equality at the crossing
and the production cone yield
\begin{align}
 S(\sigma)&\leq\rho_SS_0-dR_1(u,\sigma)+C_1\eta
              \leq\rho_SS_0+C_1\eta,\label{mixed-stock-contraction}\\
 \sigma-u&\leq
 \frac{MS_0+(C_0+1+rh_pC_a)\eta}{rd},\notag\\
 \sup_{u\leq v\leq\sigma}S(v)&\leq
 \left(1+\frac{\rho_S}{d}\right)S_0+\frac{C_1}{d}\eta.\notag
\end{align}
Each newly consumed block contributes $\lambda_a^+h_p-1=-d$ to
replacement stock. Carrier consumption and valid competing work
therefore use the same adversarial production budget.

\noindent\emph{Uniform stock and chain growth.}
Repeat the checkpoints from empty stock with
$\eta=\kappa+\log[8(2+rT)]$. There are at most $1+rT$ starts by
$T$, even if a conditional cone fails, because each checkpoint
lasts at least $\eta/r$. Conditional union bounds give total
failure probability at most
$2(1+rT)e^{-\eta}\leq e^{-\kappa}/4$.
The contraction bounds stock at every checkpoint by
$C_1\eta/(1-\rho_S)$. The last inequality in
\eqref{mixed-stock-contraction} then gives
\[
 \sup_{0\leq u\leq T}S(u)
 \leq\frac{(1+d)C_1}{d(1-\rho_S)}\eta\leq C_SK
\]
for a fixed constant $C_S$, since $\eta\leq CK$ for a fixed $C$.
Substituting into \eqref{mixed-charged-loss} proves, with the same
probability, the simultaneous estimates
\begin{equation}
\begin{split}
 S(u)&\leq C_SK,\\
 A(t)-A(u)&\geq(t-u)-h_pZ_{\mathrm{bad}}(u,t)-C_AK
 \qquad(0\leq u\leq t\leq T).
\end{split}
\label{mixed-uniform-stock}
\end{equation}
When $\lambda_a=0$, stock is zero and $A(t)=t$ directly.
When $h_p=0$, the construction has $d=M=1$ and $A(t)=t$, even
if the physical validation delay is positive.

For chain growth, choose $r<\rho_p$ and a small $\varepsilon>0$
such that $r(1-\lambda_a h_p-\varepsilon)>\gamma$. If $\lambda_a h_p>0$,
Lemma~\ref{lem:effective-poisson-uniform} bounds $Z(u,t)$ by
$(\lambda_a+\varepsilon/h_p)(t-u)+C_\varepsilon K$.
Apply the stock, production, and height estimates with confidence
$\kappa+\log3$. Combining them with \eqref{mixed-strict-bridge}
gives, simultaneously over miners and interval endpoints,
\[
 L_i(t)-L_i(u)\geq
 r(1-\lambda_a h_p-\varepsilon)(t-u)-CK.
\]
The confidence shift only changes the constant $C$. For $\lambda_a=0$
or $h_p=0$, use $A(t)=t$ directly. A sufficiently large $C_{\mathrm{cg}}$
absorbs the deficit on intervals of length at least
$C_{\mathrm{cg}}K$. The physical lag remains $D$, not
$\Delta+h_p$.

\noindent\emph{Return to a fixed stock bound.}
We next prove the mixed counterpart of
Lemma~\ref{lem:framework-stock-return}. For every sufficiently
small $\alpha>0$, there are an integer $m\geq1$ and $q_r>0$
such that, after any finite stopping time $u$, there is a finite
stopping time $v\geq u$ with $S(v)\leq m$ and
\begin{equation}
 \mathbb E[e^{q_r(v-u)}\mid\mathcal F_u]
 \leq e^{\alpha S(u)}.
 \label{mixed-return-moment}
\end{equation}
For $\lambda_a=0$, take $v=u$. Otherwise, put $c=rh_p$ and
$s=S(u)$. Choose $\theta>0$ and then $\epsilon>0$ small enough
that
\[
 \rho_R:=\frac{\lambda_a^+(M+\theta)}r+C_a\epsilon<1,
 \qquad C_H^{\mathrm{cond}}\epsilon<\theta/2.
\]
Use a new checkpoint, the first $t\geq u$ at which
\begin{equation}
 r(t-u)-cR_1(u,t)\geq(M+\theta)s.
 \label{mixed-return-check}
\end{equation}
Write $\sigma$ for this finite stopping time. For sufficiently
large $s$, apply \eqref{mixed-conditional-cones} with
$\eta=\epsilon s\geq1$. The same clearance calculation gives
\[
 H(\sigma)-H(u)+R_0(u,\sigma)
 \geq(1+\theta-C_H^{\mathrm{cond}}\epsilon)s-rb_p>s.
\]
Thus, on this good event,
\begin{equation}
 S(\sigma)\leq\rho_Rs,\qquad
 \sigma-u\leq C_ts,\qquad
 C_t:=\frac{M+\theta+cC_a\epsilon}{r(1-\lambda_a^+h_p)}.
 \label{mixed-return-good}
\end{equation}
The bad event has conditional probability at most
$2e^{-\epsilon s}$.

If $h_p>0$, let $\zeta_Q$ be the first $t\geq0$ with
$rt-cZ(u,u+t)\geq Q$, where $Q=(M+\theta)s$. It dominates
$\sigma-u$. For sufficiently small $\xi>0$,
\begin{equation}
\begin{split}
 \omega&:=r\xi-\lambda_a(e^{c\xi}-1)>0,\\
 \mathbb E[e^{\omega\zeta_Q}\mid\mathcal F_u]
 &\leq e^{\xi Q},\qquad S(\sigma)\leq s+\zeta_Q/h_p.
\end{split}
\label{mixed-return-busy-moment}
\end{equation}
Indeed, $r-\lambda_a c>0$ makes $\zeta_Q$ finite. Stop the
Poisson martingale
$\exp\{-\xi[rt-cZ(u,u+t)]+\omega t\}$ at
$\zeta_Q\wedge n$. There is no upward overshoot, so Fatou's
lemma proves the moment bound. The last bound follows from
$S(\sigma)\leq s+Z(u,u+\zeta_Q)$ and the crossing equality.

Choose $\xi(M+\theta)<\epsilon$, followed by small
$\alpha,q>0$ with $2\alpha/h_p+2q\leq\omega$ and
$qC_t<\alpha(1-\rho_R)$. Cauchy--Schwarz bounds the bad-event
contribution to
$\mathbb E[e^{\alpha S(\sigma)+q(\sigma-u)}\mid\mathcal F_u]$
by
\[
 \sqrt2\exp\!\left\{\alpha s-
 \frac{\epsilon-\xi(M+\theta)}2s\right\}.
\]
The good-event contribution is at most
$e^{(\alpha\rho_R+qC_t)s}$.
If $h_p=0$, the checkpoint instead has deterministic duration
$Q/r$. Using $S(\sigma)\leq s+Z(u,u+Q/r)$ and the Poisson
exponential moment bounds the bad contribution by
\[
 \sqrt2\exp\!\left\{\alpha s-
 \frac12\left[\epsilon-\frac{M+\theta}{r}
 \{\lambda_a(e^{2\alpha}-1)+2q\}\right]s\right\}.
\]
This case includes $p=0$ with $\tau>0$.

In both cases, $\alpha$ can be arbitrarily small. Choose $q$
with $qC_t<\alpha(1-\rho_R)$, and $m$ sufficiently large that the two
contributions sum to at most $e^{\alpha s}$ for every $s>m$.
With $q_r=q$, this gives the conditional drift bound
\[
 \mathbb E[e^{\alpha S(\sigma)+q_r(\sigma-u)}
           \mid\mathcal F_u]\leq e^{\alpha S(u)}.
\]
Repeat the checkpoint whenever stock exceeds $m$, and stop the
sequence at the first return. The resulting exponentials form
a nonnegative supermartingale. Before return, checkpoint
durations are at least $(M+\theta)m/r>0$. Nonreturn would make
the time factor diverge, contradicting Fatou's lemma and the
bounded expectations. A final application of Fatou's lemma
proves \eqref{mixed-return-moment}.

\noindent\emph{Creating a clean lead.}
Use Definition~\ref{def:framework-clean-lead}, which requires
zero usable stock, a public-height plateau of length $D$, and
a lead of $L$ blocks over every valid branch omitting a new
honest block $B$, including hidden branches. At a return time
$v$ with $S(v)\leq m$, prescribe the mining pattern used in
Lemma~\ref{lem:framework-clean-reset}. Its physical durations
are
\begin{equation}
\begin{split}
 b_{\mathrm{phys}}&:=2\Delta+\tau,\qquad w:=1/\lambda_h,
 \qquad g:=D+2w,\\
 n&:=2m+2+L,\qquad d_*:=2b_{\mathrm{phys}}+ng+D.
\end{split}
\label{mixed-clean-pattern}
\end{equation}
Require exactly one honest success in the first $w$ time
units of each slot, no other honest successes, and no
adversarial successes. Silent intervals of length
$b_{\mathrm{phys}}$ precede the first $2m+2$ slots and separate
them from the last $L$ slots. A final silent interval of
length $D$ completes validation. The conditional probability
is
\begin{equation}
 \pi:=e^{-\lambda d_*}(\lambda_h w)^n>0.
 \label{mixed-clean-probability}
\end{equation}
It does not depend on the types of the honest successes.

Before the first slot, every miner has fully validated height
$H(v)$, and roots exposed by $v+\Delta$ have been rejected.
An invalid root can spoil only an optimistic success. It
spoils at most one specified success, because successive
successes are more than $D$ apart. Each such root consumes a
distinct old-stock charge, so at least $m+2$ of the first
$2m+2$ successes are effective. Whether optimistic or strict,
each effective success extends at least the preceding common
fully validated height and advances that height before the
next success. All old hidden usable tips have height at most
$H(v)+m$, by the path argument behind
\eqref{mixed-stock-clearance}. The first group outgrows them,
and the middle physical wait clears exposed invalid branches.
The last $L$ successes therefore form one valid chain. Every
remaining hidden tip either lies below this chain or has a public
invalid ancestor. Thus stock is zero, and no valid hidden tip can
raise public height during the final silent interval.
Their first block has a clean lead after this physical wait.

After a failed pattern, stock is at most $m+Z(v,v+d_*)$.
Let $V$ be the pattern duration plus the next return time.
The return estimate and the Poisson moment give the uniform
bound
\[
 \mathbb E[e^{q_rV}\mid\mathcal F_v]
 \leq B_r:=\exp\{q_rd_*+\alpha m
                    +\lambda_a d_*(e^\alpha-1)\}.
\]
For $0<q\leq q_r$, convexity bounds its weighted failure
contribution by $1-\pi+(q/q_r)B_r$. Choose $q$ small enough
that this is at most $1-\pi/2$. Geometric summation, including
the initial return, proves that a clean lead of the prescribed
size is completed at a finite stopping time $z\geq u$, with
\begin{equation}
 \mathbb E[e^{q_R(z-u)}\mid\mathcal F_u]
 \leq C_Re^{\alpha S(u)}
 \label{mixed-reset-moment}
\end{equation}
for positive $q_R,C_R$. As in
Lemma~\ref{lem:framework-clean-reset}, the block is mined
after $u$ and $\alpha$ can be chosen sufficiently small.

\noindent\emph{Preserving a clean lead.}
At a clean state $(B,L)$ at time $z$, define
\begin{equation}
\begin{split}
 G_z(t)&:=Z_{\mathrm{val}}(z,t)-[H(t-D)-H(z)],\\
 \nu_z&:=\inf\{t\geq z:G_z(t)\geq L\}.
\end{split}
\label{mixed-clean-race}
\end{equation}
Until the first effective honest success omitting $B$, a
competing valid branch grows only through valid adversarial
successes. Both honest mining policies have a valid chain of
height at least $H(t-D)$ eligible for selection. While
$G_z(t)<L$, a competing valid parent is strictly shorter and
cannot be selected by either type. Mining decisions use
pre-event heights, including $H(t-)$ when $D=0$, as in
Lemma~\ref{lem:framework-clean-race}.

Put
\[
 \delta:=r-M\lambda_a^+>0,\qquad
 B_0:=r(D+b_p),\qquad B_1:=C_H^{\mathrm{cond}}+MC_a.
\]
Using zero starting stock, the mixed loss bound and
\eqref{mixed-conditional-cones} give, for $t\geq z+D$,
\begin{align*}
 G_z(t)&\leq Z_{\mathrm{val}}(z,t)+rh_pZ_{\mathrm{bad}}(z,t)
               -r(t-z)+B_0+C_H^{\mathrm{cond}}\eta\\
       &\leq-\delta(t-z)+B_0+B_1\eta.
\end{align*}
For $z\leq t\leq z+D$, the clean plateau gives
$G_z(t)=Z_{\mathrm{val}}(z,t)$. The same cone holds because
$\lambda_a^++\delta\leq r$ and $B_1\geq C_a$.
Choose an integer $L>B_0+B_1\log8$. Applying the cone with
$\eta=\log8+\delta v/(2B_1)$ gives
\begin{equation}
 \Pr(\nu_z<\infty,\ \nu_z-z>v\mid\mathcal F_z)
 \leq\tfrac14e^{-\delta v/(2B_1)}
 \qquad(v\geq0).
 \label{mixed-race-failure-tail}
\end{equation}
On no crossing, $B$ is permanent.

On a finite crossing, $S(\nu_z)\leq Z(z,\nu_z)$.
For $x\geq0$, split according to whether
$\nu_z-z>x/[2(\lambda_a+\lambda)]$. The failure-time bound
and the Poisson exponential moment with parameter $\log2$
give
\begin{align*}
 &\Pr(\nu_z<\infty,\ S(\nu_z)>x\mid\mathcal F_z)\\
 &\qquad\leq\tfrac14
 e^{-\delta x/[4B_1(\lambda_a+\lambda)]}
 +e^{-(\log2-1/2)x}.
\end{align*}
Thus there are positive $c_f,C_f$ such that
\begin{equation}
 \Pr\left(\nu_z<\infty,\
 \nu_z-z+\frac{S(\nu_z)}\lambda>x\mid\mathcal F_z\right)
 \leq C_fe^{-c_fx}.
 \label{mixed-failed-race-stock}
\end{equation}

\noindent\emph{Retrying after observed failures.}
Choose $\alpha$ in \eqref{mixed-reset-moment} small enough
to absorb the failed-stock tail in
\eqref{mixed-failed-race-stock}. On a finite crossing,
apply the reset construction at $\nu_z$. The duration $V$
of this failed race and the following reset satisfies, for some
$q_1,B_f>0$,
\[
 \mathbb E[e^{q_1V}\mathbf1_{\{\nu_z<\infty\}}
 \mid\mathcal F_z]\leq B_f.
\]
Set $V=0$ when no crossing occurs. The probability of a finite
crossing is at most $1/4$. For
$0<q\leq q_1$, convexity therefore gives
\[
 \mathbb E[e^{qV}\mathbf1_{\{\nu_z<\infty\}}
             \mid\mathcal F_z]
 \leq\tfrac14+(q/q_1)B_f\leq\tfrac12
\]
after decreasing $q$ if needed.
Sum over the number of observed failures, conditioning only
at finite reset and crossing times. The first successful
clean lead supplies an honest block mined after $u$ and
permanent from a random time $\Theta_u$, with
\begin{equation}
 \mathbb E[e^{q_*(\Theta_u-u)}\mid\mathcal F_u]
 \leq C_*e^{\alpha S(u)}
 \label{mixed-permanent-moment}
\end{equation}
for positive $q_*,C_*$. The final time $\Theta_u$ need not
be a stopping time. No conditional estimate is applied at
that time.

\noindent\emph{From waiting times to security.}
Combine \eqref{mixed-uniform-stock} and
\eqref{mixed-permanent-moment} on a deterministic grid with
$W=C_WK$, where
\[
 C_W\geq\max\left\{\frac1\lambda,
 \frac{\alpha C_S+3+\log(4\max\{1,C_*\})}{q_*}\right\}.
\]
At a grid start $jW$, conditional Markov inequality bounds
the probability that $\Theta_{jW}>jW+W$ together with
$S(jW)\leq C_SK$ by
$C_*e^{\alpha C_SK-q_*W}\leq e^{-3K}/4$.
There are at most $1+\lambda T$ complete grid windows by
$T$. Adding their failure probabilities and the stock
failure probability $e^{-\kappa}/4$ gives at most
$e^{-\kappa}$. Every interval in $[0,T]$ of length at least
$2W$ contains a complete grid window. Its new honest block
is permanent by the interval's end. This proves block
liveness, uniformly over all such intervals. Take
$C_{\mathrm{live}}=C_{\mathrm{pers}}=2C_W$.

For persistence, let an observed strict chain have tip mined
at $w$, and let $B$ be confirmed by a mining-time gap
$\sigma\geq2W$. The interval $(w-\sigma,w]$ contains a new
honest block $N$ permanent from a time at most $w$.
Both $B$ and $N$ lie on the observed strict chain, and
$m(B)\leq w-\sigma<m(N)$. Hence $B$ is an ancestor of $N$,
so its prefix is retained in every honest strict chain
at every later time.

For common prefix, the total mining process still has rate
$\lambda$. Apply persistence and
Lemma~\ref{lem:framework-total-count} with confidence
$\kappa+\log2$, and write $\widehat K=K+\log2$.
Set $\sigma=C_{\mathrm{pers}}\widehat K$ and choose
$C_{\mathrm{cp}}\geq
(1+\log2)(2\lambda C_{\mathrm{pers}}+C_N)$, where $C_N$
is the total-count constant. Every interval
$((w-\sigma)^+,w]$ with $w\leq T$ then contains at most
$k_0:=\lceil C_{\mathrm{cp}}K\rceil$ mining successes.
A non-genesis block retained after pruning $k_0$ blocks
and its $k_0$ descendants cannot all lie in that interval.
The retained block is therefore $\sigma$-confirmed, and
persistence preserves its prefix. Pruning any larger number
of blocks also satisfies common prefix on the same event.

Finally, apply the four component guarantees with confidence
$\kappa+\log4$, retaining the confidence splits within
the chain-growth and common-prefix arguments. Enlarge the
constants to absorb the additive $\log4$ in $K$.
Their intersection has probability at least $1-e^{-\kappa}$,
as required.
\qed
\end{proof}

\subsection{Attack scope and unresolved tightness}
\label{mixed-attack-scope}

\noindent\textbf{Why the carrier attack does not directly extend.}
The attack in Section~\ref{sec: analysis-carrier} queues invalid
siblings of a fixed valid parent. It relies on all honest miners
extending invalid ancestry while the public valid height stays fixed.
With strict mining power $(1-p)\lambda_h>0$ and immediate validation
of valid blocks, strict miners keep raising the valid height.
Queued siblings of the old parent can therefore become too short to
attract optimistic miners. A nonempty queue no longer guarantees that
another diversion window can begin. Thus the queue-survival argument
does not establish a mixed-population attack threshold by replacing
$\tau$ with $p\tau$. The rejection window still lasts $\tau$, while
$p$ determines the honest power exposed during that window.

\noindent\textbf{An attack using delayed validation of valid blocks.}
\label{mixed-validation-attack}
Consider a finite shared fully validated chain containing a non-genesis
target block $b$ at any prescribed finite confirmation depth. Such a state
is attainable from genesis by initially delivering and validating
honest blocks immediately and withholding all adversarial blocks.
Measure elapsed time from the start of the attack. Deliver every
newly mined honest block immediately and complete its validation exactly
$\tau$ after creation at every honest miner, including its creator.
Keep all adversarial mining on a valid private fork from the parent
of $b$. This is an admissible schedule under the validation oracle.

Analyze the auxiliary execution in which the adversary never
publishes its fork. All public ancestry is valid and extends the
initial shared chain. Optimistic miners mine at
height $H(t)$, while strict miners mine at height $H(t-\tau)$,
where $H(t)$ is set to the initial chain's height for $t<0$.
Let $T_k$ be the first time the public tree reaches a new height $k$.
No height-$k$ block can be validated before $T_k+\tau$.
Thus only optimistic successes, of aggregate rate $p\lambda_h$,
can produce height $k+1$ before that time. If none does, the first
height-$k$ block and all its ancestry are validated everywhere at
$T_k+\tau$, and the next success of either honest type raises
the height. Later siblings cannot validate before the first record.

The record times are stopping times. After the first height increase
above the initial chain, independent Poisson increments give
independent, identically distributed record spacings $W$, with
\begin{equation}
 \Pr(W>t)=
 \begin{cases}
 e^{-p\lambda_h t},&0\leq t<\tau,\\
 e^{-p\lambda_h\tau}e^{-\lambda_h(t-\tau)},&t\geq\tau.
 \end{cases}
 \label{mixed-record-spacing}
\end{equation}
For $p>0$, integration gives
\[
 \mathbb E W
 =\frac{1-e^{-p\lambda_h\tau}}{p\lambda_h}
  +\frac{e^{-p\lambda_h\tau}}{\lambda_h}
 =\frac{1-(1-p)e^{-p\lambda_h\tau}}{p\lambda_h}.
\]
The renewal strong law therefore gives the almost-sure public-height
growth rate
\begin{equation}
 r_{\mathrm{val}}(p)=
 \begin{cases}
 \displaystyle\frac{p\lambda_h}
 {1-(1-p)e^{-p\lambda_h\tau}},&p>0,\\[2mm]
 \displaystyle\frac{\lambda_h}{1+\lambda_h\tau},&p=0.
 \end{cases}
 \label{mixed-validation-rate}
\end{equation}
If $\lambda_a>r_{\mathrm{val}}(p)$, the private fork eventually
overtakes the public height almost surely, despite any finite initial
lead. The actual attack agrees with this execution until the fork is
strictly longer, then publishes and validates it immediately at every
honest miner. Every strict chain adopts the longer fork, removing $b$.
The calculation holds for every finite distribution of
honest mining power and every permitted network bound $\Delta$,
because immediate delivery satisfies that bound. The strict rate
comparison gives no conclusion at equality.

This attack uses the oracle's permission to delay validation even
at a block's creator. Requiring immediate self-validation would
change the renewal calculation. The sufficient-security proof uses
only the upper validation bound and does not require this particular
completion schedule.

For $\Delta=0$, $0<p<1$, and $\tau>0$,
\begin{equation}
\begin{split}
 \mathbb E W
 &=\frac{1}{\lambda_h}
   +\frac{1-p}{p\lambda_h}(1-e^{-p\lambda_h\tau})\\
 &<\frac{1}{\lambda_h}+(1-p)\tau
  =\frac{1}{\rho_p}.
\end{split}
\label{mixed-rate-gap}
\end{equation}
The strict inequality follows from $1-e^{-x}<x$ for $x>0$.
Hence this attack does not match the mixed height lower bound.
This comparison does not prove that the sufficient security bound is
loose, since another attack might match it. Tightness for an interior
mixture remains unresolved.

\noindent\textbf{Why scaling only the carrier loss is insufficient.}
The same schedule gives a counterexample to retaining the original
height rate $\rho=\lambda_h/(1+\lambda_h\Delta)$ while replacing
only the carrier-loss coefficient $\tau$ by $p\tau$.
Take
\[
 \Delta=0,\qquad \lambda_h=1,\qquad \tau=10,\qquad
 p=0.01,\qquad \lambda_a=0.2.
\]
Both proposed conditions hold: $\lambda_a<\rho=1$ and
$\lambda_a p\tau=0.02<1$. Nevertheless,
\eqref{mixed-validation-rate} gives
$r_{\mathrm{val}}(0.01)\approx0.09596<0.2$, so private mining succeeds.
A pathwise bound gives the same conclusion without the renewal formula.
For each block mined by a strict miner, if its parent was mined after
the attack began, the two mining times differ by at least $\tau$.
Let $N_{\mathrm{SPV}}(t)$ count optimistic honest successes during
the attack. Any public chain therefore satisfies
\[
 H(t)\leq N_{\mathrm{SPV}}(t)+t/\tau+O(1),\qquad
 \limsup_{t\to\infty}\frac{H(t)}t
 \leq0.01+0.1<0.2.
\]
The constant absorbs the initial prefix and its first outgoing edge.
Thus the revised height
argument is necessary even when the optimistic power fraction is
positive.

\subsection{Incorporating network delay into the attack}
\label{mixed-network-attack}

The preceding attack delivers honest blocks immediately. To exploit the
network bound as well, start from the same shared fully validated chain
and deliver each new honest block to every other honest miner exactly
$\Delta$ after its creation. Its creator receives it immediately.
Complete validation exactly $\tau$ after receipt at every honest miner,
including the creator. This schedule respects ancestor-complete receipt,
because each creator already knows its block's ancestry, and every
other recipient receives each ancestor no later than its descendant.
Again, all adversarial power mines a valid
private fork from the target block's parent. Until publication, all public
blocks are valid.

Define the reference rate
\begin{equation}
 r_{\mathrm{net}}(p)
 =\left(\Delta+\frac{1}{r_{\mathrm{val}}(p)}\right)^{-1}
 =\begin{cases}
 \displaystyle\left(\Delta+
 \frac{1-(1-p)e^{-p\lambda_h\tau}}{p\lambda_h}\right)^{-1},&p>0,\\[2mm]
 \displaystyle\frac{\lambda_h}{1+\lambda_h(\Delta+\tau)},&p=0.
 \end{cases}
 \label{mixed-network-rate}
\end{equation}
This is the public-height rate in the limit of negligible individual
mining power. For a finite population, a miner can extend its own blocks
before other miners receive them, requiring the following correction.

\begin{proposition}[A network-dependent private-fork attack]
\label{prop:mixed-network-attack}
Let $\lambda_i=w_i\lambda_h$ be honest miner $i$'s mining rate,
let $w_{\max}:=\max_iw_i$, and set
\begin{equation}
\begin{split}
 \varepsilon_\Delta
 &:=\sum_i\lambda_i(1-e^{-\lambda_i\Delta})\\
 &\leq\Delta\lambda_h^2\sum_iw_i^2
 \leq\Delta\lambda_h^2w_{\max}.
\end{split}
\label{mixed-network-correction}
\end{equation}
In the auxiliary execution in which the private fork is never published,
the public height $H(t)$ satisfies, almost surely,
\begin{equation}
 r_{\mathrm{net}}(p)
 \leq\liminf_{t\to\infty}\frac{H(t)}t
 \leq\limsup_{t\to\infty}\frac{H(t)}t
 \leq r_{\mathrm{net}}(p)+\varepsilon_\Delta.
 \label{mixed-network-height-bounds}
\end{equation}
Consequently, if
$\lambda_a>r_{\mathrm{net}}(p)+\varepsilon_\Delta$,
the private-fork attack removes a non-genesis target block at any
prescribed finite initial confirmation depth with probability one.
\end{proposition}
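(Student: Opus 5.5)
Both bounds on $H(t)/t$ come from comparing the auxiliary (never-published) execution with a renewal process of record times. The attack conclusion then follows exactly as in the delayed-validation attack above.

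Let $T_k$ be the first time the public tree reaches height $k$, after the first new height above the initial chain. Under the prescribed schedule, the first height-$k$ block is created at $T_k$ by some miner $i$. Every other honest miner receives it exactly at $T_k+\Delta$. Each miner validates it exactly $\tau$ after its own receipt.

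\textbf{Lower bound.} I would show that $T_{k+1}-T_k$ is stochastically dominated by $\Delta+W'$, uniformly over the past. Here $W'$ has the record-spacing law \eqref{mixed-record-spacing}. From $T_k+\Delta$ on, every honest miner has received the height-$k$ block with its ancestry. So every optimistic miner mines at height at least $k$, i.e., produces height $\ge k+1$. From $T_k+\Delta+\tau$ on, the block is validated everywhere, so every strict miner does the same. Hence after $T_k+\Delta$ the rate of height-$(k+1)$-producing successes is at least $p\lambda_h$ on a window of length $\tau$, and at least $\lambda_h$ afterwards. Using fresh Poisson increments at the stopping time $T_k+\Delta$ gives the domination. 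A coupling with i.i.d.\ variables and the strong law give $\liminf H(t)/t\ge 1/(\Delta+\mathbb E W)=r_{\mathrm{net}}(p)$. The case $p=0$ is handled the same way.

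\textbf{Upper bound.} This is the main obstacle. A new height can also be produced before $T_k+\Delta$, by miner $i$ itself extending its own block, since only the creator knows it early. My plan is to decompose each spacing by cause. Call a record a ``self-extension'' if it is produced within $\Delta$ of the previous record by the same miner. Any other record requires the parent to have been received by a different miner, which happens no earlier than $T_k+\Delta$. Its subsequent waiting time is then at least the record-spacing law, because each miner's rates cannot exceed those used in $W$: optimistic miners need receipt, and strict miners need receipt plus $\tau$.

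So the height process is dominated by a renewal process of ``slow'' steps of mean length $\Delta+\mathbb E W$, plus extra self-extension steps. Given the current record's creator $i$, a self-extension within $\Delta$ occurs at rate $\lambda_i$ (conservatively, counting both policies). Chains of self-extensions can be charged against the Poisson process of miner $i$'s own successes in windows of length $\Delta$.

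I would bound the long-run rate of such extra steps as follows. Each slow step lasts at least $\Delta$. During this time the creator's own power gives at most one ``head start'' bonus per own success arriving within $\Delta$ of its previous own record. Summing over miners, the excess rate is bounded by $\sum_i\lambda_i\Pr(\text{own success within }\Delta)=\varepsilon_\Delta$. The delicate part is making this charging rigorous via a compensator (martingale) argument. I would write $H(t)\le N_{\mathrm{slow}}(t)+N_{\mathrm{self}}(t)+O(1)$. Each self-extension is a success of miner $i$ at a time when $i$ holds an undelivered record and $i$ itself succeeded within the last $\Delta$. The compensator of $N_{\mathrm{self}}$ is then bounded by $\int\sum_i\lambda_i\mathbf 1\{\text{miner }i\text{ had a success in }(s-\Delta,s)\}\,ds$, whose time average converges a.s.\ to $\varepsilon_\Delta$. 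Meanwhile $N_{\mathrm{slow}}(t)/t\to$ at most $r_{\mathrm{net}}(p)$ by the renewal domination. The stated inequalities $\varepsilon_\Delta\le\Delta\lambda_h^2\sum_iw_i^2\le\Delta\lambda_h^2w_{\max}$ follow from $1-e^{-x}\le x$ and $\sum w_i=1$.

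\textbf{Consequence.} If $\lambda_a>r_{\mathrm{net}}(p)+\varepsilon_\Delta$, the private fork grows at rate $\lambda_a$ a.s., so it eventually strictly exceeds $H(t)$, despite the finite initial lead. At that moment the adversary publishes the fork and has it delivered and validated immediately everywhere. The actual execution coincides with the auxiliary one up to this point. Every strict chain then adopts the fork, which omits $b$, as in the delayed-validation attack.
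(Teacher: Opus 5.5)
\textbf{Gap in the upper bound.} Your lower bound and the concluding attack step are sound. The lower bound is a conditional domination of $T_{k+1}-T_k$ by $\Delta+W$, using fresh increments after $T_k+\Delta$; it is essentially the paper's $H_0\le H$ direction.

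The upper bound does not go through as written. You split records into self-extensions of the previous record and records whose parent was ``received from a different miner, hence no earlier than $T_k+\Delta$.'' This split is not exhaustive. So neither your $\Delta+W$ lower bound on slow spacings nor your compensator for $N_{\mathrm{self}}$ covers all fast records. Two cases escape it.

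\emph{Sibling records.} The parent of record $k+1$ need not be the record block. An optimistic miner $j$ that has not yet received record $k$ can mine a height-$k$ sibling at $T_k+\epsilon_1$ and extend it at $T_k+\epsilon_2$ with $\epsilon_2<\Delta$. The previous record's creator did not make this record, and $j$ holds no undelivered record. Yet $T_{k+1}-T_k<\Delta$.

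\emph{Strict creators.} A strict creator validates its own block at $T_k+\tau$, while every other miner does so only at $T_k+\Delta+\tau$. It can therefore extend its own record anywhere in $[T_k+\tau,T_k+\tau+\Delta)$. When $\tau\ge\Delta$, such a record is not within $\Delta$ of $T_k$. Your indicator $\mathbf 1\{\text{success in }(s-\Delta,s)\}$ does not detect it either. Yet it beats the law of $W$, which admits strict successes only after $T_k+\Delta+\tau$.

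\textbf{How the paper handles it.} Classify every success, not only records. A success of miner $i$ at $s$ is exceptional if $i$ has an earlier post-attack success in $(s-\Delta,s)$ when $i$ is optimistic, or in $(s-D,s-\tau)$ when $i$ is strict. Both windows have length $\Delta$, so your compensator argument still gives $E(t)/t\to\varepsilon_\Delta$. At a nonexceptional success, every eligible parent was created by $s-d_i$, with $d_i=\Delta$ or $D$.

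The paper also does not extract i.i.d.\ ``slow'' spacings from the record sequence. The slow/fast label of a step is decided only after $T_k+\Delta$, so it does not give a clean renewal comparison. Instead the paper defines a virtual height
\[
H_0(s)=\max\{H_0(s-),1+H_0((s-d_i)-)\},
\]
whose record spacings are exactly i.i.d.\ copies of $\Delta+W$. It then proves $H_0\le H\le H_0+E$ pathwise by induction over all successes.
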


\begin{proof}
Let $h_0$ be the initial public height. On the same honest mining
processes, define a virtual height $H_0$ that ignores the creator's
earlier access to its own blocks. Set $H_0(t)=h_0$ for $t\leq0$.
At a success of miner $i$ at time $s>0$, update
\begin{equation}
 H_0(s)=\max\{H_0(s-),\,1+H_0((s-d_i)-)\},\qquad
 d_i=\begin{cases}
 \Delta,&i\text{ is optimistic},\\
 D,&i\text{ is strict}.
 \end{cases}
 \label{mixed-network-auxiliary-height}
\end{equation}
Keep $H_0$ constant between successes. The left limits exclude the
current success when $d_i=0$.

Mark an optimistic success at $s$ as exceptional if the same miner
has an earlier post-attack success in $(s-\Delta,s)$.
For a strict success, use the window $(s-D,s-\tau)$ instead.
Both windows have length $\Delta$. Let $E(t)$ count exceptional
successes by time $t$. Positive fixed-lag ties between Poisson successes
have probability zero. We claim that, pathwise outside this null event,
\begin{equation}
 H_0(t)\leq H(t)\leq H_0(t)+E(t).
 \label{mixed-network-coupling}
\end{equation}

For the lower bound, at a success time $s$, an optimistic miner knows
every public block created before $s-\Delta$, while a strict miner has
validated every public block created before $s-D$. Each therefore
extends a chain at least as high as the corresponding delayed global
height. Induction over successes gives $H_0\leq H$.
For the upper bound, consider a nonexceptional success of miner $i$.
Every eligible post-attack block received from another miner was created
no later than $s-d_i$. There is also no eligible block mined by $i$ after that
time: for an optimistic miner it would mark the success as exceptional,
and for a strict miner it would either do so or still await validation.
Thus, by the induction hypothesis, the new block's height is at most
$1+H_0((s-d_i)-)+E(s-)$, which is at most $H_0(s)+E(s)$.
The initial shared prefix also satisfies this bound. At an exceptional
success, $H$ increases by at most one, $E$ increases by one, and $H_0$
does not decrease. This proves~\eqref{mixed-network-coupling}.

After a record of $H_0$ at time $T$, no new record is possible before
$T+\Delta$. From $T+\Delta$ to $T+D$, only optimistic successes can
raise the virtual height. After $T+D$, either type can do so.
Consequently, after the first height increase, the record spacings of
$H_0$ are independent copies of $\Delta+W$, where $W$ has the law
in~\eqref{mixed-record-spacing}. The strong Markov property at the
record times and the renewal strong law yield
$H_0(t)/t\to r_{\mathrm{net}}(p)$ almost surely.

The exceptional indicators need not be independent, but their count
also obeys a strong law. If $\Delta=0$, then $E=0$.
Otherwise, partition time into cells $[kD,(k+1)D)$.
For $k\geq1$, the exceptional count in a cell depends only on Poisson
successes in $[(k-1)D,(k+1)D)$. Counts in even-indexed cells are
independent and identically distributed, as are those in odd-indexed
cells. At a success of miner $i$, the probability of an earlier own
success in the relevant window is $1-e^{-\lambda_i\Delta}$.
The Poisson intensity formula therefore gives mean
$D\varepsilon_\Delta$ per cell. Applying the strong law separately
to the two parity classes, then using monotonicity to include partial
cells, gives $E(t)/t\to\varepsilon_\Delta$ almost surely.
Together with~\eqref{mixed-network-coupling}, this proves the height
bounds. The inequalities in~\eqref{mixed-network-correction} follow
from $1-e^{-x}\leq x$ and $\sum_iw_i=1$.

The private fork grows at almost-sure rate $\lambda_a$.
If this rate exceeds the public-height upper bound, it eventually
overtakes the public tree despite any finite initial lead.
The actual attack agrees with the auxiliary execution until this time.
The adversary then publishes its strictly longer fork, delivering and
validating it immediately everywhere. Every strict chain adopts it
and removes the target block.
\qed
\end{proof}

For a fixed finite honest power distribution, choosing between the two
delivery schedules gives a successful attack whenever
\[
 \lambda_a>
 \min\{r_{\mathrm{val}}(p),\,
        r_{\mathrm{net}}(p)+\varepsilon_\Delta\}.
\]
For every fixed strict inequality $\lambda_a>r_{\mathrm{net}}(p)$,
the network-delay attack succeeds for all sufficiently decentralized
finite distributions realizing $p$. Indeed, when $\Delta>0$, it suffices
that
\[
 w_{\max}<\frac{\lambda_a-r_{\mathrm{net}}(p)}
 {\Delta\lambda_h^2}.
\]
Thus the fully decentralized resilience threshold is at most the unique
$U_p\in(0,1)$ satisfying
\begin{equation}
 U_p\lambda=r_{\mathrm{net}}(p),\qquad
 \lambda_h=(1-U_p)\lambda.
 \label{mixed-network-share-bound}
\end{equation}
For fixed $p$, $\lambda$, $\Delta$, and $\tau$, the left side strictly
increases with the adversarial share, while the right side continuously
decreases to zero, giving uniqueness.

When $\Delta>0$, $r_{\mathrm{net}}(p)<r_{\mathrm{val}}(p)$,
so this asymptotic upper bound is strictly stronger than the
immediate-delivery attack bound. When $\Delta=0$, the correction
vanishes and the original attack rate is recovered exactly.
At $p=0$, $r_{\mathrm{net}}(0)=\rho_0$, matching the sufficient
bound in the fully decentralized limit. At $p=1$, it reduces to the
ordinary private-mining rate $\lambda_h/(1+\lambda_h\Delta)$,
and the carrier attack can impose a lower threshold.
For $0<p<1$ and $\tau>0$, the strict inequality
in~\eqref{mixed-rate-gap} gives $r_{\mathrm{net}}(p)>\rho_p$.
The strengthened attack therefore narrows the gap without resolving
tightness for an interior mixture.

\subsection{Endpoints and numerical illustration}
\label{mixed-endpoints-examples}

At $p=1$, $d_p=\Delta$ and $h_p=\tau$, recovering the
all-optimistic security bound. At $p=0$, $a=1$ and $h_p=b_p=0$,
while $d_p=D$. Invalid carriers waste no honest work in this case,
but validation can still delay valid-chain growth.
When $\tau=0$, the mining policies coincide and
\eqref{mixed-security-conditions} reduces to the network-only bound.
The arguments also admit $\Delta=0$, and zero adversarial power
gives empty stock. The sufficient conditions retain strict rate
margins for the concentration estimates.

\begin{table}[h]
\caption{Sufficient resilience bounds for mixed honest mining with
$\lambda\Delta=0$ and $\lambda\tau=4$.}
\label{tab:mixed-security-bounds}
\centering
\begin{tabular}{cc}
\toprule
Honest SPV power fraction $p$ & Sufficient bound $B_p$\\
\midrule
$0$ & $0.190983$\\
$0.5$ & $0.292893$\\
$0.75$ & $0.333333$\\
$1$ & $0.250000$\\
\bottomrule
\end{tabular}
\end{table}

Table~\ref{tab:mixed-security-bounds} evaluates the sufficient
bound $B_p$ for $\lambda\Delta=0$ and $\lambda\tau=4$.
These values are analytical lower bounds on resilience, rather than
empirical measurements or established exact thresholds.
For $p=0.5$ in the same example, the attack in
\eqref{mixed-validation-rate} succeeds above adversarial share
approximately $0.367987$, whereas the sufficient bound is
approximately $0.292893$. The intervening range remains unresolved
by these arguments.

Network delay makes the new attack bound strictly stronger. For
$p=0.5$, $\lambda\tau=4$, and $\lambda\Delta=1$, the sufficient
bound is approximately $0.232408$. Immediate delivery gives an attack
bound of approximately $0.367987$, whereas
\eqref{mixed-network-share-bound} lowers the fully decentralized upper
bound to approximately $0.288092$. With 100 equal-power honest miners,
half optimistic and half strict, including
$\varepsilon_\Delta$ gives a finite-population attack bound of
approximately $0.292144$.

\end{document}